\def\BibTeX{{\rm B\kern-.05em{\sc i\kern-.025em b}\kern-.08em
    T\kern-.1667em\lower.7ex\hbox{E}\kern-.125emX}}
\newtheorem{theorem}{Theorem}
\newtheorem{lemma}{Lemma}
\newtheorem{corollary}{Corollary}
\newtheorem{assumption}{Assumption}
\newtheorem{conjecture}{Conjecture}
\newtheorem{proposition}{Proposition}
\theoremstyle{definition}
\newtheorem{remark}{Remark}
\newtheorem{definition}{Definition}
\newtheorem{example}{Example}
\newcommand{\MF}{\mathcal{F}}
\newcommand{\MI}{\mathcal{I}}
\newcommand{\FF}{\mathbb{F}}
\newcommand{\RR}{\mathbb{R}}
\begin{document}

\title{Stitched Polar Codes}

 \author{
   \IEEEauthorblockN{Yuan Li\IEEEauthorrefmark{1}, Zicheng Ye\IEEEauthorrefmark{1}, Huazi Zhang\IEEEauthorrefmark{1}, Jun Wang\IEEEauthorrefmark{1}, Wen Tong\IEEEauthorrefmark{1}, Guiying Yan\IEEEauthorrefmark{2}\IEEEauthorrefmark{3}
   and Zhiming Ma\IEEEauthorrefmark{2}\IEEEauthorrefmark{3} }

   \IEEEauthorblockA{\IEEEauthorrefmark{1}
                     Huawei Technologies Co. Ltd.}\\
  \IEEEauthorblockA{\IEEEauthorrefmark{2}
                     School of Mathematical Sciences, University of Chinese Academy of Sciences}\\
   \IEEEauthorblockA{\IEEEauthorrefmark{3}
                     Academy of Mathematics and Systems Science, CAS } \\
    Email: \{liyuan299, yezicheng3, zhanghuazi, justin.wangjun, tongwen\}@huawei.com, \\ 
     yangy@amss.ac.cn, mazm@amt.ac.cn 
     \thanks{This work was supported by the National Key R\&D Program of China (2023YFA1009601).}}
    
\maketitle
                  
\begin{abstract}
In this paper, we introduce stitched polar codes, a novel generalization of Ar{\i}kan's regular polar codes. Our core methodology reconfigures the fundamental polarization process by stitching additional structures to enhance the reliability of less reliable information bits in the original code. This approach preserves the polar transformation structure and maintains the same encoding and decoding complexity. Thanks to the flexible configuration, stitched polar codes consistently outperform regular polar codes, effectively solving the performance degradation issue in rate-matched scenarios. Furthermore, we provide theoretical analysis on the weight spectrum and the polarization speed of stitched polar codes to prove their superiority.
\end{abstract}

\section{Introduction}

Polar codes \cite{Arikan2009}, introduced by Ar{\i}kan, represent a significant breakthrough in coding theory. When the code length $N$ approaches infinity, polar codes achieve channel capacity under successive cancellation (SC) decoding with a time complexity of $O(N\log N)$. For short to moderate code lengths, successive cancellation list (SCL) decoding \cite{Niu2012, Tal2015} can significantly enhance the error-correcting performance. 

The structure of polar codes, based on the Kronecker product, limits the original code length to powers of two. However, practical communication systems require variable code lengths for data transmission. To overcome this limitation and enable flexible code length adaptation, rate-matching techniques such as  puncturing and shortening have been developed for polar codes. 

Multiple rate-matching patterns have been proposed to enhance performance, including the quasi-uniform puncturing (QUP) \cite{Niu2013}, which ensures puncturing positions follow a quasi-uniform distribution after bit-reversal permutation. Recent research demonstrates that QUP polar codes achieve channel capacity \cite{Shuval2024}. Meanwhile, the bit-reversal shortening (BRS) approach introduced in \cite{Bioglio2017} offers a practical implementation path for shortened polar codes. The extensive studies \cite{Wang2014Novel, Miloslavskaya2015, Niu2016, Chandesris2017, Oliveira2018, Han2022, Li2023Two} have provided valuable insights and advancements in the design of rate-compatible polar codes. 

However, current rate-matched polar codes still suffer from performance degradation at non-power-of-two code lengths, even when the codes are sufficiently long. Specifically, when the code length exceeds a power-of-two value, the required signal-to-noise ratio (SNR) to achieve a target block error rate (BLER) exhibits an unexpected increase. This performance degradation deviates from the general coding principle that longer codes typically offer superior error correction capabilities. 

\subsection{Main Contributions}

In this paper, we propose stitched polar codes. The main contributions are summarized as follows:
\begin{itemize}
\item \emph{Performance enhancement}: Stitched polar codes eliminate the performance degradation inherent to existing rate-adapted versions of regular polar codes. Furthermore, they enhance performance even at power-of-2 lengths through their optimized polarization process. As shown in Fig. \ref{fig_SC3}, this results in a significantly smoother performance curve under fine-granuliarity simulation under different code lengths, consistently outperforming QUP and BRS polar codes by up to 0.3dB with the same decoding complexity. 
\item \emph{Efficient encoding and decoding}: Stitched polar codes are constructed by restructuring the basic $2\times 2$ polarization transforms in original polar codes. This method reuses existing basic units, avoiding the introduction of new ones or an increase in processing units. Consequently, stitched polar codes do not incur extra decoding complexity, and preserve an inherent hardware-friendly architecture that only requires $O(N\log N)$ f/g-operations for decoding a length-$N$ code.
\item \emph{Theoretical analysis}: Our theoretical analysis of the polarization speed for stitched polar codes demonstrates that the number of un-polarized bit-channels in stitched polar codes is upper-bounded by a constant fraction, strictly less than one, of that in regular polar codes. Unlike rate-matched regular polar codes, the proportion of un-polarized bit-channels in stitched polar codes does not increase when the code length marginally exceeds a power of two. This provides an explanation for the stable performance of stitched polar codes under fine-granularity simulation.
\end{itemize}

\begin{figure}[!t]
\centering
\includegraphics[width=0.5\textwidth]{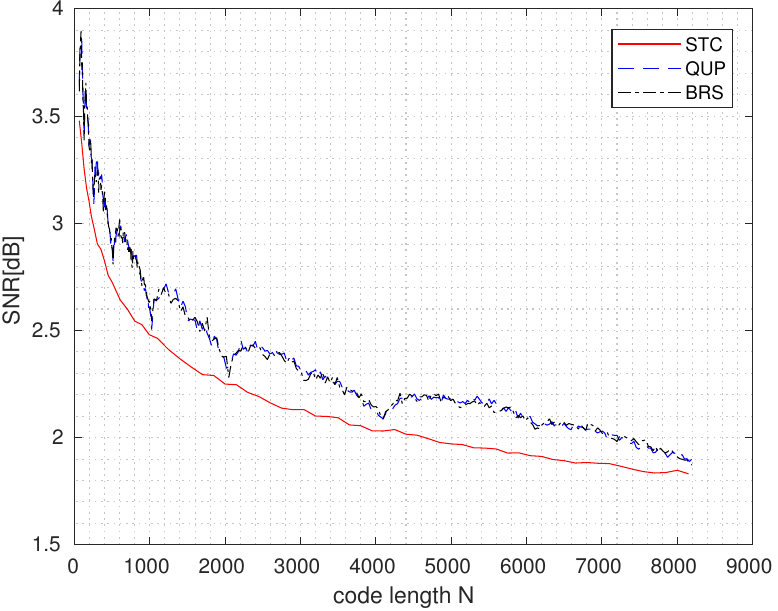}
\caption{SC decoding performance of regular and stitched (STC) polar codes with rate $R=1/2$ and BLER $=0.01$ over addictive white Gaussian noise (AWGN) channel.}
\label{fig_SC3}
\end{figure}

\subsection{Paper Organization}

The rest of this paper is organized as follows. We begin in Section II by providing simple examples to illustrate the fundamental idea behind stitched polar codes. Section III offers a comprehensive review of regular polar codes. Section IV defines stitched polar codes and describes their encoding and decoding algorithms. In Section V, two special types of stitched polar codes are defined, and their coset spectra are analysed. The specific construction methods for stitched polar codes are detailed in Section VI. In Section VII, we investigate the scaling constant for both regular and stitched polar codes including both power-of-two and non-power-of-two cases. Section VIII presents a discussion of related work and outlines several open problems. Our simulation results are presented in Section IX. Finally, Section X concludes the paper by summarizing our key technical and theoretical contributions.

\section{Examples of Stitched Polar Codes}

In this section, we provide simple illustrative examples of stitched polar codes to explain the mechanisms behind their enhanced performance.

For code length $N=5$ and dimension $K=2$, the top-performing rate-matched regular polar code is the QUP polar code, as shown in Fig. \ref{fig_N5K2_qup}. The stitched polar code is presented in Fig. \ref{fig_N5K2GG}, with the same length and dimension. The numbers in the figure represent the channel capacities, determined by density evolution (DE) \cite{Mori2009} over a binary erasure channel (BEC) with an erasure probability of 0.5. Compared to QUP polar code, the capacity of the least reliable information bit in the stitched polar codes is significantly improved (from 0.66 to 0.78), which translates to a substantially superior decoding performance.

The construction of this stitched polar code involves ``stitching" a length-1 polar code (blue) onto the length-4 base regular polar code (black) at the third bit of the base code's leftmost stage. And the stitching operation is marked in red. We stitch at this specific position because for the base code with $K=2$, the third bit is the least reliable information bit. After stitching, the third bit and the newly added bit go through an additional polarization stage. This process combines information from both bit-channels, thereby enhancing the capacity.

The motivation behind this optimization of polarization process is straightforward:  the main bottleneck in decoding is the least reliable information bits. Enhancing their reliabilities directly through stitching contributes to the greatest performance improvement.

\begin{figure}[!t]
\centering
  \begin{subfigure}[b]{0.4\textwidth}
  \centering
    \includegraphics[width=0.8\textwidth]{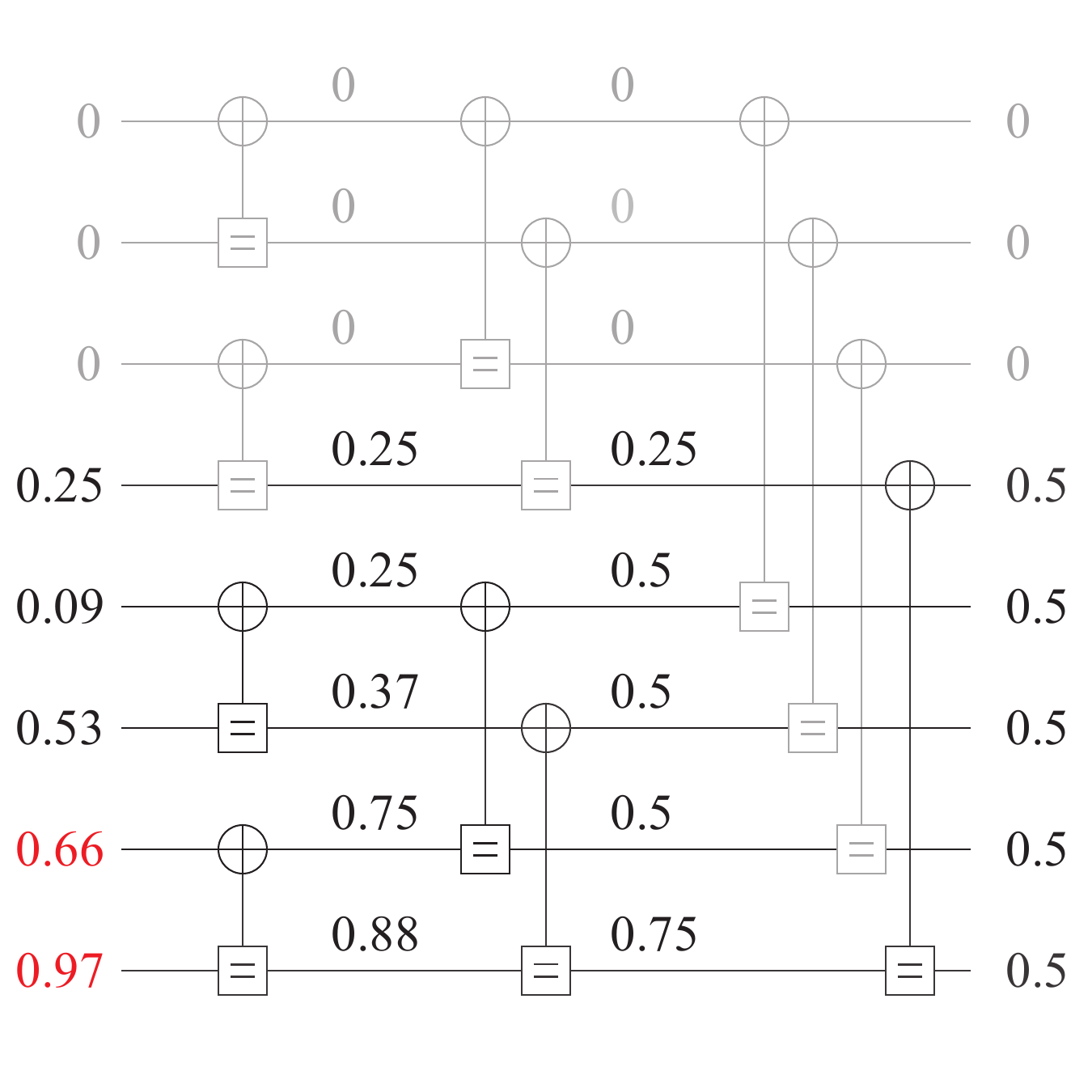}
    \caption{}
     \label{fig_N5K2_qup}
  \end{subfigure} 
  \begin{subfigure}[b]{0.4\textwidth}
  \centering
    \includegraphics[width=0.8\textwidth]{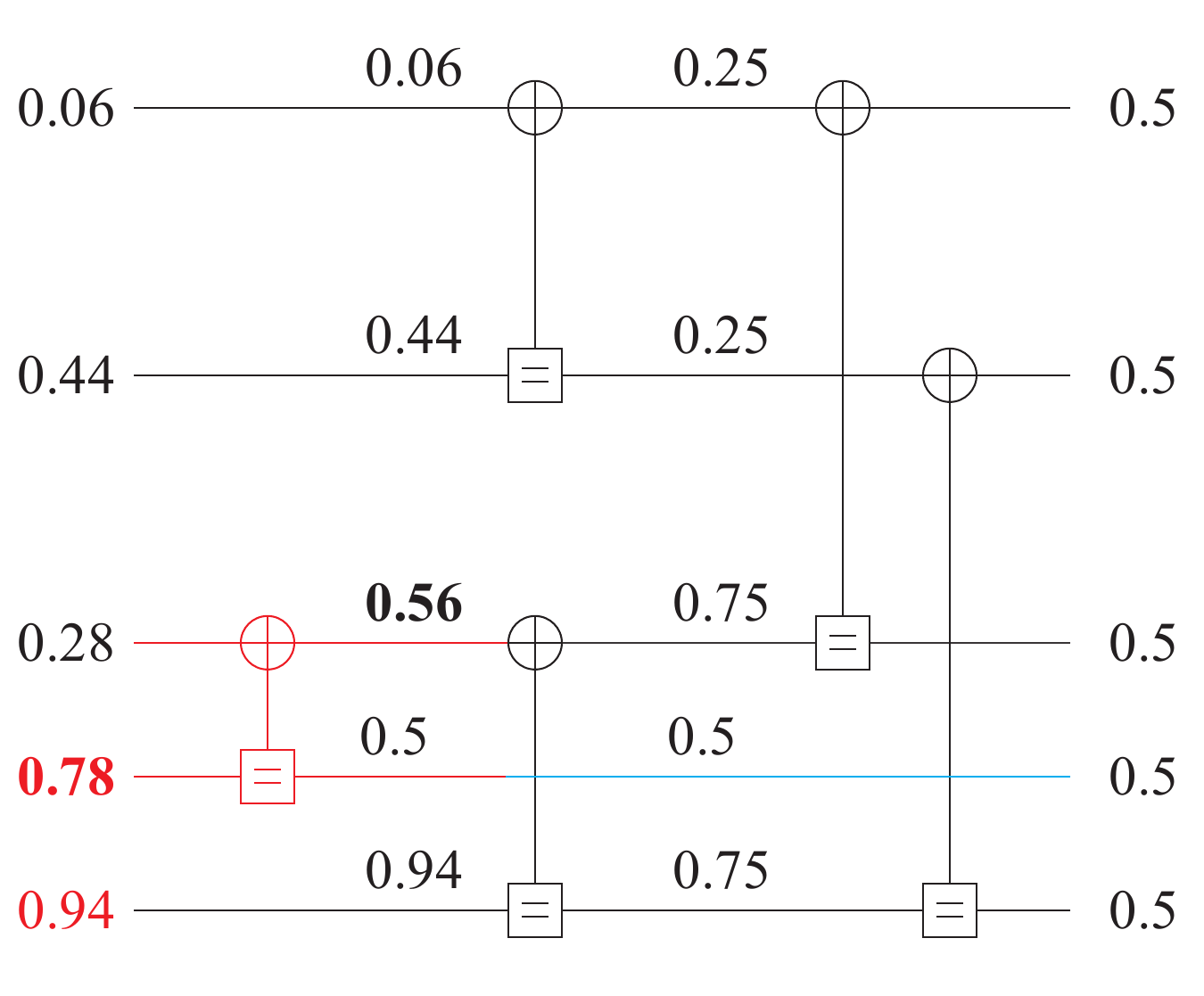}
    \caption{}
    \label{fig_N5K2GG}
  \end{subfigure} 
    \caption{The factor graph representations of (a) QUP polar code and (b) stitched polar code with $N=5$ and $K=2$. The reliabilities of information bits are marked in red.}
  \label{fig_N5K2}
\end{figure}

When the code dimension is $K=3$, we can apply a similar approach to strengthen the least reliable information bit, as demonstrated in Fig. \ref{fig_N5K3_a}. It is worth noting that the same stitched polar codes can also be obtained through different stitching methods. For example, in Fig. \ref{fig_N5K3_b}, the same stitched polar code is constructed by stitching a length-3 code (blue) and a length-2 code (black) at the rightmost position. More generally, both the stitched positions and the two constituent codes exhibit flexibility.  This inherent flexibility provides ample room for optimizing the code structure.

\begin{figure}[h]
  \centering
  \begin{subfigure}[b]{0.4\textwidth}
    \centering
    \includegraphics[width=0.8\textwidth]{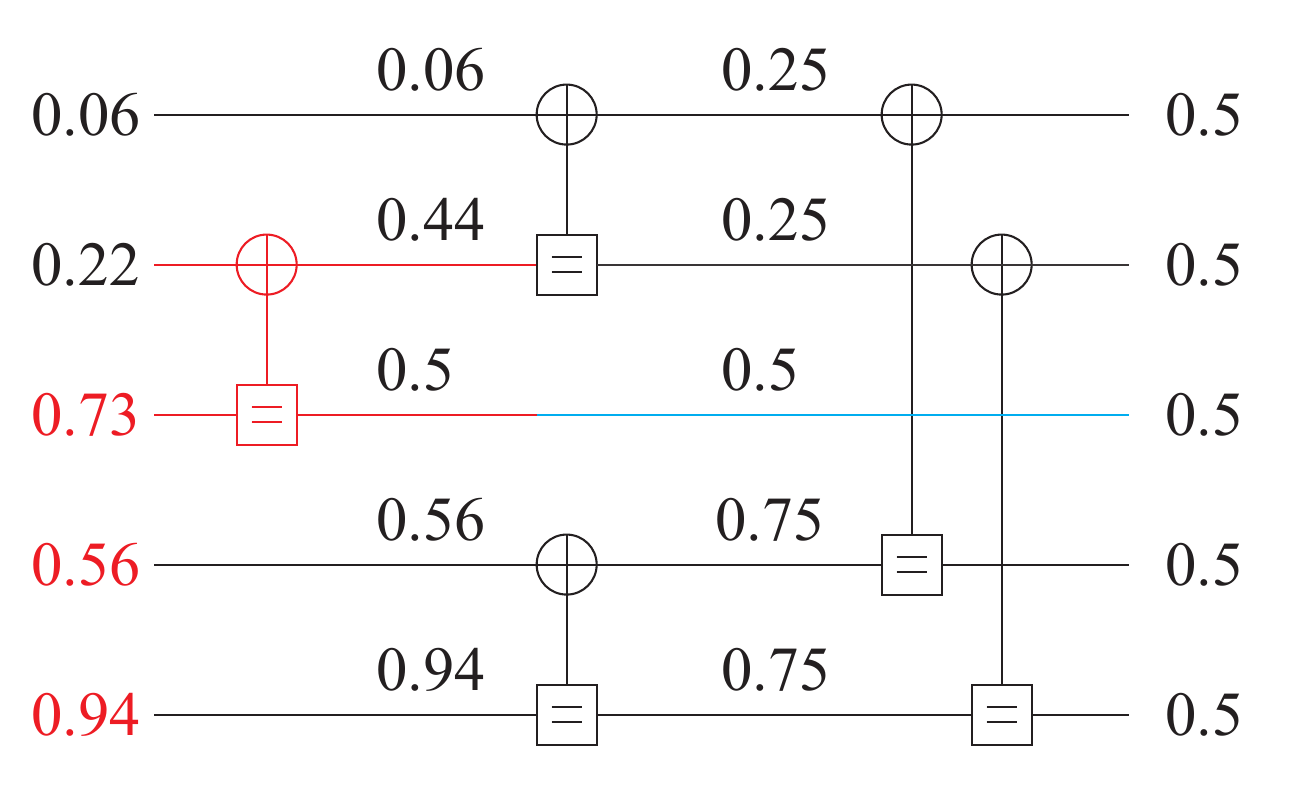}
    \caption{}
    \label{fig_N5K3_a}
  \end{subfigure}
  \hfill 
  \begin{subfigure}[b]{0.4\textwidth}
    \centering
    \includegraphics[width=0.8\textwidth]{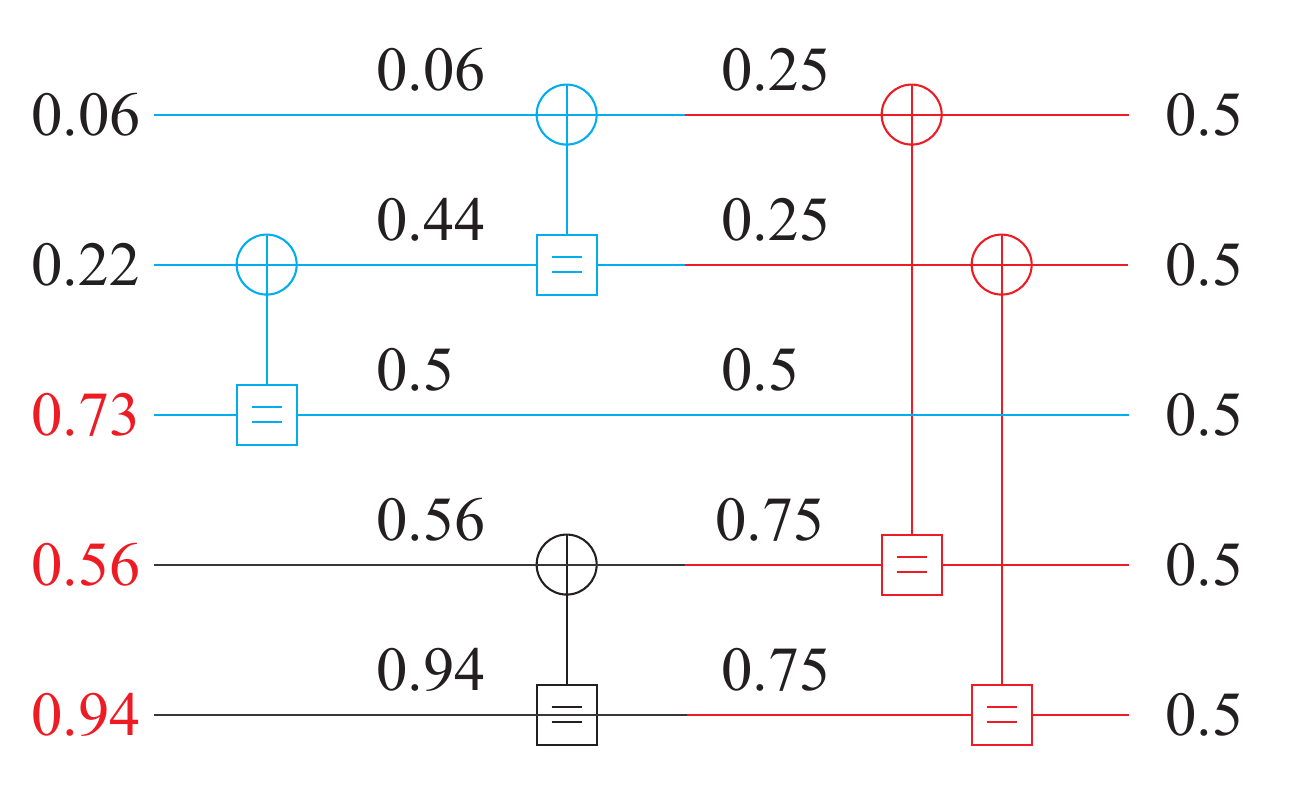}
    \caption{}
    \label{fig_N5K3_b}
  \end{subfigure}
  \caption{Two stitching perspectives for the same stitched polar code with $N=5$ and $K=3$.}
  \label{fig_N5K3}
\end{figure}

In short, thanks to the flexibility in stitched code construction, the stitching methods can be optimized for each target code length and dimension. Therefore, restructuring the polarization process to create asymmetric polarization significantly enhances the mutual information within the information set, in particular, the least reliable information bits. However, the code construction problem, along with a rigorous theoretical analysis and a practical engineering design, is substantially non-trivial and opens up new research opportunities.

\section{Preliminaries}

\subsection{Notations and Definitions}

We first introduce some notations that will be employed throughout the paper. Denote $[N] = \{1,\dots,N\}$ and $[M,N] = \{M,M+1,\dots,N\}$. For a set $A \subseteq [N]$, the complement set of $A$ in $[N]$ is denoted as $[N] \backslash A$.

For a vector $\bm{u}\in\FF_2^N$, denote $\bm{u}_A = (u_i)_{i\in A}$. If $A = [i,j]$, we also write  $\bm{u}_A = \bm{u}_i^j$. Let $\bm{u}_e = (u_i)_{i \text{ is even}}$ and $\bm{u}_o = (u_i)_{i \text{ is odd}}$ be the even and odd part of $\bm{u}$ respectively. Denote $d(\bm{u}, \bm{u}')$ to be Hamming distance between $\bm{u}$ and $\bm{u}'$ and $wt(\bm{u})$ to be Hamming weight of $\bm{u}$.

For $l\leq N$, $\bm{v}\in\FF_2^{l}, \bm{w}\in\FF_2^{N-l}, A\subseteq [N]$ with $|A|=l$, denote $\bm{v}\bowtie_A \bm{w}$ to be the length-$N$ vector such that $(\bm{v}\bowtie_A \bm{w})_A = \bm{v}$ and $(\bm{v}\bowtie_A \bm{w})_{[N]\backslash A} = \bm{w}$. For $\mathcal{V}\subseteq \FF_2^l, \mathcal{W}\subseteq \FF_2^{N-l}$, denote $\mathcal{V}\bowtie_A  \mathcal{W} = \{\bm{v} \bowtie_A \bm{w}\mid \bm{v}\in \mathcal{V}, \bm{w}\in \mathcal{W}\}$. Denote $d(\bm{v}, \mathcal{V}) = \min_{v'\in  \mathcal{V}}\{d(v, v')\}$ to be the Hamming distance between a vector and a vector set. Clearly, $d(\bm{u}, \mathcal{V}\bowtie_A  \mathcal{W}) = d(\bm{u}_A, \mathcal{V}) + d(\bm{u}_{[N]\backslash A}, \mathcal{W})$. 

For an $N\times N$ matrix $\bm{G}$ and sets $A_1,A_2\subseteq [N]$, let $\bm{G}_{A_1,A_2}$ be the submatrix of $\bm{G}$ with row indexed by $A_1$ and column indexed by $A_2$. 

Define $\log(\cdot)$ to be the logarithm to base 2 and $\oplus$ to be the addition modulo $2$.

\subsection{Polarization and Polar Codes}

Let $W: \mathcal{X} \rightarrow \mathcal{Y}$ be a binary memoryless symmetric (BMS) channel with input alphabet $\mathcal{X} = \{0,1\}$, output alphabet $\mathcal{Y}$, and transition probabilities $W(y|x), x \in \mathcal{X}, y \in \mathcal{Y}$. 

The symmetric capacity of the channel $W$ is 
\[
I(W) 	\triangleq \sum_{y \in \mathcal{Y}} \sum_{x \in \mathcal{X}} \frac{1}{2} W(y|x) \log \frac{W(y|x)}{\frac{1}{2} W(y|0) + \frac{1}{2} W(y|1)},
\]
and the Bhattacharyya parameter is 
\[
Z(W) 	\triangleq \sum_{y \in \mathcal{Y}} \sqrt{W(y|0) W(y|1)}.
\]

If $W$ is a BEC, then
\[I(W) 	= 1 - Z(W).\]

Let $W^0, W^1 :\mathcal{X}\to\mathcal{Y}$ be two BMS channels and the channel transformation $(W^0, W^1) \rightarrow (U^0, U^1)$ is defined as

\begin{equation}\label{eq:polar1}
U^0(y_1, y_2|x_1) = \sum_{x_2 \in \{0,1\}} \frac{1}{2} W^0(y_1|x_1 \oplus x_2)  W^1(y_2|x_2) 
\end{equation}

\begin{equation}\label{eq:polar2}
U^1(y_1, y_2, x_1|x_2) = \frac{1}{2}  W^0(y_1|x_1 \oplus x_2)  W^1(y_2|x_2) 
\end{equation}

For BEC, 
\[Z(U^0) 	= Z(W^0) + Z(W^1) - Z(W^0)Z(W^1);\]
\[Z(U^1) 	= Z(W^0)Z(W^1).\]

After channel combining and splitting operation on $N$ independent uses of $W$, we get $N$ polarized synthesized bit-channels $U_i^{(0)}$, defined as $(U_i^{(t)}, U_i^{(t)})\to (U_{2i-1}^{(t-1)}, U_{2i}^{(t-1)})$:

\begin{align}\label{eq:re_polar1}
\notag
& U_{2i-1}^{(t-1)}\left(\bm{y}_1^{2N}, \bm{u}_1^{2i-2}|u_{2i-1}\right) \\
\notag
&= \sum_{u_{2i}} \frac{1}{2} U_i^{(t)}\left(\bm{y}_1^N, \bm{u}_{1,o}^{2i-2} \oplus \bm{u}_{1,e}^{2i-2}|u_{2i-1} \oplus u_{2i}\right) \\
    &\quad \cdot U_i^{(t)}\left(\bm{y}_{N+1}^{2N}, \bm{u}_{1,e}^{2i-2}|u_{2i}\right);
\end{align}
\begin{align}\label{eq:re_polar2}
\notag
& U_{2i}^{(t-1)} \left(\bm{y}_1^{2N}, \bm{u}_1^{2i-1}|u_{2i}\right) \\
\notag
&= \frac{1}{2} U_i^{(t)}\left(\bm{y}_1^N, \bm{u}_{1,o}^{2i-2}\oplus \bm{u}_{1,e}^{2i-2}|u_{2i-1}\oplus u_{2i}\right)\\
&\cdot U_i^{(t)}\left(\bm{y}_{N+1}^{2N}, \bm{u}_{1,e}^{2i-2}|u_{2i}\right)
\end{align}
for $ 1\leq i\leq 2^{m-t}, 1\leq t\leq m$ and $U_{1}^{(m)} = W$.

The reliability of each bit-channel can be computed by DE \cite{Mori2009} and Gaussian Approximation (GA) \cite{Trifonov2012}. The information set $\MI\subseteq [N]$ consists of the $K$ most reliable bit-channels to carry messages. The remaining bit-channels in $\MF = [N]\backslash \MI$ form the frozen set, which are fixed to 0. The encoding process is formulated as
$$
\bm{c} = \bm{uF}_N.
$$
Here, $\bm{u}_{\MI} \in \{0,1\}^K, \bm{u}_{\MF} = \bm{0}$, $\bm{F} = \begin{bmatrix} 1&0 \\ 1&1 \end{bmatrix}$, $N=2^m$ and $\bm{F}_N=\bm{F}^{\otimes m}$, where $\otimes$ is Kronecker product.

\subsection{Successive Cancellation Decoding}

Let 
$$
L_{i,t} = \ln\frac{U_{i}^{(t)}(\cdot\mid 0)}{U_{i}^{(t)}(\cdot\mid 1)}
$$ 
denote the log likelihood ratio (LLR) for the channel $U_{t}^{(i)}$. Specifically, $L_{i,m}$ represents the received LLRs from physical channels. The LLRs $L_{i,t}$ are subsequently propagated from stage $t+1$ according to
\begin{align}\label{eq:SCf}
\notag
L_{i,t} &= f(L_{i,t+1},L_{i+2^t,t+1}) = \ln\left(\frac{e^{L_{i,t+1}+L_{i+2^t,t+1}}+1}{e^{L_{i,t+1}}+e^{L_{i+2^t,t+1}}}\right) \\
&= 2\tanh^{-1} \left(\tanh\frac{L_{i,t+1}}{2}\tanh\frac{L_{i+2^{t},t+1}}{2}\right);
\end{align}
\begin{align}\label{eq:SCg}
\notag
L_{i+2^t,t} &= g(\hat{u}_{i,t},L_{i,t+1},L_{i+2^t,t+1})\\
&= (-1)^{\hat{u}_{i,t}}L_{i,t}+L_{i+2^t,t+1}. 
\end{align}

At stage $0$, we make a hard decision
\begin{gather}\label{eq:SChard}
\hat{u}_{i,0} = \begin{cases}
    0 ,& \text{ if } i\in \MF; \\
    0 ,& \text{ if } i\in \MI, L_{i,0}\geq 0; \\
    1 ,& \text{ if } i\in \MI, L_{i,0}< 0. \\
\end{cases}
\end{gather}

Then $\hat{u}_{i,t}$ are propagated from stage $t-1$ according to
\[\hat{u}_{i,t} = \hat{u}_{i,t-1}\oplus \hat{u}_{i+2^t,t-1};\]
\[\hat{u}_{i+2^t,t} = \hat{u}_{i+2^t,t-1}.\]

\subsection{Speed of Polarization}
The speed of polarization quantifies how quickly polar codes approach channel capacity, thereby reflecting their performance at finite code lengths. The scaling law is a critical metric to quantify polarization speed, which indicates the necessary code length as the rate approaches channel capacity under a fixed error probability. According to \cite{Polyanskiy2010}, for every BMS channel $W$ with capacity $I(W)$, the required code length for a binary code with rate $R$ is at least $O(1/(I(W)-R)^{\mu})$ with error probability $P_e$, where $\mu$ is the scaling exponent. 

For polar codes, the following assumption holds:

\begin{assumption}[Scaling law]\cite{Hassani2014}\label{assum_scaling}
Assume that $W$ is a BMS channel with Bhattacharyya parameter $Z(W) = z$. Let $N = 2^m$ and $Z_N^{(i)}$ be the Bhattacharyya parameters of the bit-channels $W_m^{(i)}$. There exists $\mu \in (0, \infty)$ such that, for any $z, a, b \in (0, 1)$ such that $a < b$, the limit $\lim_{m \to\infty} 2^{m(\frac{1}{\mu}-1)} |\{1\leq i\leq N\mid Z_N^{(i)} \in [a, b])\}|$ exists in $(0, \infty)$. We denote this limit by $c(z, a, b)$. In other words,
\[
|\{1\leq i\leq N\mid Z_N^{(i)} \in [a, b])\}| = (c(z, a, b)+o(1))N^{\lambda}.
\]
Here, $\lambda = 1-1/\mu$.
\end{assumption}

It is known that $\mu = 3.63$ for regular polar codes over BEC \cite{Hassani2014}, compared to $\mu = 2$ for random codes. A length-64 kernel can reduce the scaling exponent $\mu$ from 3.627 to 2.87 \cite{Yao2019}.

\subsection{Rate Matching}

The mother code length of regular polar codes is limited to powers of two. To address practical applications, puncturing and shortening techniques are developed for polar codes.

\begin{definition}[Punctured Code and Puncturing Pattern]
Let $C(\MI)$ be a polar code with length $N=2^m$ and information set $\MI$, and $X\subseteq [N]$ be a set called puncturing pattern. The punctured code of $C(\MI)$ with puncturing pattern $X$ is $C_P(\MI,X) = \{\bm{c}_{[N]\backslash X} \mid \bm{c}\in C(\MI)\}$.
\end{definition}

\begin{definition}[Shortened Code and Shortening Pattern]
Let $C(\MI)$ be a polar code with length $N=2^m$ and information set $\MI$, and $Y\subseteq [N]$ be a set called by shortening pattern. The shortened code of $C(\MI)$ with shortening pattern $Y$ is $C_S(\MI,Y) = \{\bm{c}_{[N]\backslash Y} \mid \bm{c}\in C(\MI), \bm{c}_{Y} = \bm{0}\}$.
\end{definition}

Denote the quasi-uniform sequence to be $\bm{q}=(1,2,\dots,N)$. The QUP pattern $X_i$ \cite{Niu2013} is the set of the first $i$ bits in $\bm{q}$, i.e., $\{1,2,\dots,i\}$. The QUP polar code with length $N-i$ and information set $\MI$ is denoted by $C_P(\MI,X_i)$. 

For each $z\in \{1,2,...,2^m\}$, there is a unique binary representation $\bm{a}=(a_1,...,a_m)\in\FF_2^m$, where $a_1$ is the least significant bit, such that
\[z = D(\bm{a}) \triangleq \sum_{i=1}^{m} 2^{i-1} (a_i \oplus 1) + 1,\]
where $\oplus$ is the mod-$2$ sum in $\FF_2$.

Denote the bit-reversal sequence $\bm{q}'$ to be the bit reversal of $\bm{q}$, i.e., $ \bm{q}'_{D(a_m,\dots,a_1)} =  \bm{q}_{D(a_1,\dots,a_m)} = D(a_1,\dots,a_m)$. The BRS pattern $Y_i$ is the last $i$ bits of $\bm{q}'$, and the  BRS polar code \cite{Bioglio2017} with length $N-i$ and information set $\MI$ is denoted as $C_S(\MI,Y_i)$.

\section{Stitched Polar Codes}\label{sec:GPC}

In this section, we introduce stitched polar codes. We begin by presenting a coupling sequence representation for encoding. Based on this representation, we provide a rigorous definition of stitched polar codes.

For a binary vector $\bm{u} = (u_1,\dots, u_a, \dots, u_b, \dots, u_N)$, the operation $(a,b)$ transforms $\bm{u}$ to $(a,b) \circ  \bm{u} = (u_1,\dots, u_a\oplus u_b, \dots, u_b, \dots, u_N)$. In other word, $(a,b)$ represents a basic $2\times 2$ polarization transformation on the $a$-$th$ and $b$-$th$ bits. Then a coupling sequence of integer pairs $(a_1, b_1), \dots, (a_n, b_n)$ where $1\leq a_i < b_i \leq N$ for all $1\leq i\leq n$ describes an encoding process. Specifically, the message vector $\bm{u}$ with $\bm{u}_{\mathcal{F}} = \bm{0}$ is encoded into the codeword $\bm{x} = (a_n, b_n)\circ \dots \circ (a_1, b_1)\circ \bm{u}$. 

We define the intermediate vector $\bm{u}^{(i)} = (a_i, b_i)\circ\bm{u}^{(i-1)}$ for $1\leq i\leq n$ with the initial vector $\bm{u}^{(0)} = \bm{u}$. This recursive definition implies that $\bm{u}^{(i+1)} = \bm{u}^{(i)}\bm{T}^{(i)}$, where $\bm{T}^{(i)}$ is a $N\times N$ matrix with $T^{(i)}_{k,k} = 1$ for all $1\leq k\leq N$, $T^{(i)}_{b_i, a_i} = 1$, and the other entries are 0. Consequently, the generator matrix of the code is $\bm{G} = \bm{T}^{(1)} \cdots \bm{T}^{(n)}$, i.e., the codeword $\bm{x} = \bm{u}^{(n)} = \bm{uG}$. In other words, the generation of $\bm{G}$ begins with the identity matrix, then adding the $b_1$-$th$ column to the $a_1$-$th$ column, adding the $b_2$-$th$ column to the $a_2$-$th$ column, and so on.

\begin{example}\label{Ex:Enc_Gra}
Consider the coupling sequence $(3,4), (1,2), (3,5), (1,3), (2,5)$.

The encoding process:

Message vector: $(u_1, u_2, u_3, u_4, u_5)$

$(3,4)$: $(u_1, u_2, u_3\oplus u_4, u_4, u_5)$

$(1,2)$: $(u_1\oplus u_2, u_2, u_3\oplus u_4, u_4, u_5)$

$(3,5)$: $(u_1\oplus u_2, u_2, u_3\oplus u_4\oplus u_5, u_4, u_5)$.

$(1,3)$: $(u_1\oplus u_2 \oplus u_3\oplus u_4\oplus u_5, u_2, u_3\oplus u_4\oplus u_5, u_4, u_5)$

$(2,5)$: $(u_1\oplus u_2 \oplus u_3\oplus u_4\oplus u_5, u_2\oplus u_5, u_3\oplus u_4 \oplus u_5, u_4, u_5)$

Let the information set be $\MI=\{4,5\}$. For example, consider the information bits to be $\bm{u}_{\MI} = (1,0)$, which corresponds to the message vector $\bm{u} = (0,0,0,1,0)$. Applying the encoding procedure to $\bm{u}$ results in the codeword $\bm{x} = (1,0,1,1,0)$. The complete encoding process is depicted in Fig. \ref{fig_encode1}.
 
The generator matrix can be constructed by starting from an identity matrix, adding the forth column to the third, then adding the second column to the first, repeating this process subsequently. Finally, we have
$$
\bm{G} = \begin{bmatrix}
1 & 0  & 0  & 0  & 0 \\
1 & 1  & 0  & 0  & 0 \\
1 & 0  & 1  & 0  & 0 \\
1 & 0  & 1  & 1  & 0 \\
1 & 1  & 1  & 0  & 1
\end{bmatrix}.
$$
Then $\bm{x} = \bm{uG}$. 
\end{example}

\begin{figure}[!t]
\centering
\includegraphics[width=0.5\textwidth]{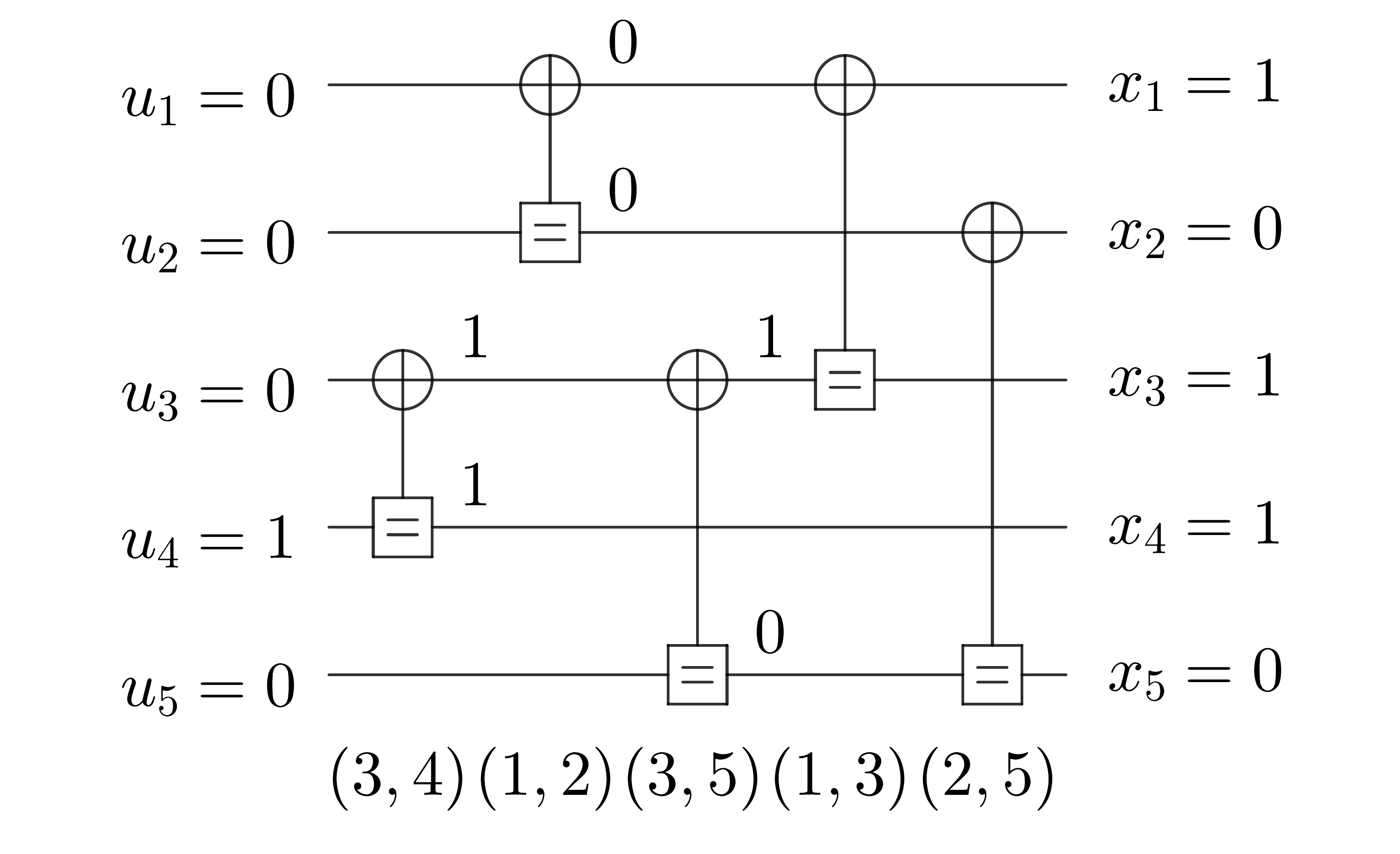}
\caption{The encoding process.}
\label{fig_encode1}
\end{figure}

The encoding process is summarized in Algorithm \ref{alg:IKE} with complexity $O(n)$.

\begin{figure}[!t]
\begin{algorithm}[H]
\caption{encoder($C$)}
\begin{algorithmic}[1]\label{alg:IKE}

\renewcommand{\algorithmicrequire}{\textbf{Input:}}
\renewcommand{\algorithmicensure}{\textbf{Output:}}
\REQUIRE the linear code $C = (a_1, b_1), \dots, (a_n, b_n)$, the message vector $\bm{u}$.
\ENSURE the codeword $\bm{x}$.
\STATE $\bm{x}\gets \bm{u}$;
\FOR{$i = 1$ to $n$}
 \STATE $x_{a_i}\gets x_{a_i}\oplus x_{b_i}$;
\ENDFOR

\end{algorithmic}
\end{algorithm}
\end{figure}

\subsection{Channel Combining and Splitting}

Proposition \ref{pro_CS} describes the channel combining and splitting process for coupling sequences, extending Eq. (\ref{eq:polar1}) and (\ref{eq:polar2}) from regular polar codes. The generalization is based on the combination of two different channels \cite{lelewang}. This allows SC decoding using f/g-operations.

\begin{proposition}\label{pro_CS}
Let $W:\mathcal{X}\to \mathcal{Y}$ be a BMS channels. The channel combining and splitting process of the coupling sequence $(a_1, b_1), \dots, (a_n, b_n)$ can be defined block-wise as $(W^{(i)}_{a_i}, W^{(i)}_{b_i}) \to (W^{(i-1)}_{a_i}, W^{(i-1)}_{b_i})$:
\begin{align}
&  W^{(i-1)}_{a_i} \left( \bm{y}_{A^{(i-1)}_{a_i}}, \mathcal{V}^{(i-1)}_{a_i} \mid u^{(i-1)}_{a_i} \right) \notag \\
= & \sum_{u^{(i-1)}_{b_i}} \frac{1}{2} W^{(i)}_{a_i} \left( \bm{y}_{A^{(i)}_{a_i}}, \mathcal{V}^{(i)}_{a_i} \mid u^{(i)}_{a_i} \oplus u^{(i)}_{b_i} \right) \notag \\
\cdot & W^{(i)}_{b_i} \left( \bm{y}_{A^{(i)}_{b_i}}, \mathcal{V}^{(i)}_{b_i} \mid u^{(i)}_{b_i} \right)
\end{align}
\begin{align}
&  W^{(i-1)}_{b_i} \left( \bm{y}_{A^{(i-1)}_{b_i}}, \mathcal{V}^{(i-1)}_{b_i} \mid u^{(i-1)}_{b_i} \right) \notag \\
= &  \frac{1}{2} W^{(i)}_{a_i} \left( \bm{y}_{A^{(i)}_{a_i}}, \mathcal{V}^{(i)}_{a_i} \mid u^{(i)}_{a_i} \oplus u^{(i-1)}_{b_i} \right)  \notag \\
\cdot & W^{(i)}_{b_i} \left( \bm{y}_{A^{(i)}_{b_i}}, \mathcal{V}^{(i)}_{b_i} \mid u^{(i)}_{b_i} \right).
\end{align}
The intermediate channel remains unchanged. Here, $W^{(k)}_{j}$ represent the $j$-$th$ channel after $k$ polarization stages. For $1\leq j\leq N$, $A_j^{(n)} = \{j\}$ and $\mathcal{V}_j^{(n)} = \varnothing$. For $1\leq i \leq n$, $A^{(i-1)}_{a_i} = A^{(i)}_{a_j} \cup A^{(i)}_{b_i}$, $A^{(i-1)}_{b_i} = A^{(i-1)}_{a_i}$ and $\mathcal{V}^{(i-1)}_{a_i} = \mathcal{V}^{(i)}_{a_i} \cup \mathcal{V}^{(i)}_{b_i}$. 
\end{proposition}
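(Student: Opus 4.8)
The plan is to prove the proposition by downward induction on the stage index $i$, from $i=n$ to $i=0$, under the reading that $W^{(i)}_j$ is the genuine bit-channel seen after $i$ encoding steps. Concretely, I put the uniform prior on the message bits, let $\bm{x}=\bm{u}^{(n)}=\bm{u}\bm{G}$ be sent over $N$ independent copies of $W$, and aim to show that $W^{(i)}_j\!\left(\bm{y}_{A^{(i)}_j},\bm{u}_{\mathcal{V}^{(i)}_j}\mid u^{(i)}_j\right)$ is proportional to the joint law $\Pr\!\left[\bm{y}_{A^{(i)}_j},\bm{u}_{\mathcal{V}^{(i)}_j},u^{(i)}_j\right]$, with $A^{(i)}_j$ the set of physical outputs and $\mathcal{V}^{(i)}_j$ the set of already-processed bits on which that channel depends. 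The base case $i=n$ is immediate: $A^{(n)}_j=\{j\}$, $\mathcal{V}^{(n)}_j=\varnothing$, and since the codeword is transmitted coordinate by coordinate over the memoryless channel, $W^{(n)}_j=W$.

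For the inductive step I would use that $\bm{u}^{(i)}=(a_i,b_i)\circ\bm{u}^{(i-1)}$ alters only coordinates $a_i$ and $b_i$, through $u^{(i-1)}_{a_i}=u^{(i)}_{a_i}\oplus u^{(i)}_{b_i}$ and $u^{(i-1)}_{b_i}=u^{(i)}_{b_i}$, while fixing every other coordinate. Hence for $j\notin\{a_i,b_i\}$ the channel, together with its observation and side-information sets, is unchanged, which is the ``intermediate channel remains unchanged'' clause. For $j\in\{a_i,b_i\}$ I would substitute these two relations into the stage-$i$ joint law provided by the induction hypothesis and then, exactly as in Ar{\i}kan's derivation of \eqref{eq:polar1}--\eqref{eq:polar2}, marginalize over the partner coordinate to obtain $W^{(i-1)}_{a_i}$ and append the freshly recovered parity bit to the conditioning to obtain $W^{(i-1)}_{b_i}$. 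This reproduces the two displayed formulas; reading off which outputs and bits now appear gives the bookkeeping updates $A^{(i-1)}_{a_i}=A^{(i-1)}_{b_i}=A^{(i)}_{a_i}\cup A^{(i)}_{b_i}$, $\mathcal{V}^{(i-1)}_{a_i}=\mathcal{V}^{(i)}_{a_i}\cup\mathcal{V}^{(i)}_{b_i}$, and $\mathcal{V}^{(i-1)}_{b_i}$ equal to $\mathcal{V}^{(i-1)}_{a_i}$ enlarged by the newly decoded bit.

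The step that needs care, and which I expect to be the main obstacle, is justifying that the stage-$i$ channels $W^{(i)}_{a_i}$ and $W^{(i)}_{b_i}$ are conditionally independent given their inputs $(u^{(i)}_{a_i},u^{(i)}_{b_i})$, since this conditional independence is precisely what legitimizes the product form $\frac12\,W^{(i)}_{a_i}(\cdot)\,W^{(i)}_{b_i}(\cdot)$ in the recursion. I would carry along the same induction an auxiliary invariant: whenever a pair $(a_i,b_i)$ is processed, the two observation sets are disjoint, $A^{(i)}_{a_i}\cap A^{(i)}_{b_i}=\varnothing$, and the two side-information sets, together with the pair of inputs, factor the relevant joint law as a product over the two branches. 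This holds at stage $n$ because the $A^{(n)}_j$ are singletons, and one checks it is preserved by each coupling operation for the coupling sequences produced by the stitching construction; some such hypothesis is genuinely needed, since for degenerate sequences the disjoint-branch condition can fail. Once the invariant is in place the substitution above is routine, and specializing the coupling sequence to the standard layered one recovers \eqref{eq:re_polar1}--\eqref{eq:re_polar2}.
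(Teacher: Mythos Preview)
Your proposal is correct and follows essentially the same route as the paper: both arguments reduce the recursion to a marginalization/conditioning step whose validity hinges on the conditional independence of the two branch channels $W^{(i)}_{a_i}$ and $W^{(i)}_{b_i}$ given their inputs, and both recognize that this independence is the only nontrivial ingredient. The paper's derivation is exactly the chain of equalities you sketch, written out explicitly as two displayed computations that factor the joint law $P(\bm{y}_{A^{(i-1)}_{a_i}},\mathcal{V}^{(i-1)}_{a_i}\mid u^{(i-1)}_{a_i},u^{(i-1)}_{b_i})$ using the independence hypothesis.

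The one point of divergence is how the independence is supplied. You propose to carry it as an inductive invariant and to verify that it is preserved by ``the coupling sequences produced by the stitching construction.'' The paper instead elevates it to a \emph{definition}: it introduces a validation procedure (Algorithm~\ref{alg:CS}) that checks, for each pair $(a_i,b_i)$, precisely the conditional independence you need, and then \emph{defines} a stitched polar code to be a coupling sequence that passes this check. The proposition is then stated to hold only for validated sequences, so the independence is an assumption rather than something proved. Your remark that ``some such hypothesis is genuinely needed, since for degenerate sequences the disjoint-branch condition can fail'' is exactly right and matches the paper's counterexample (the sequence $(1,2),(1,3),(2,3)$). So your plan is sound; just be aware that the paper does not attempt to prove the invariant is preserved for any particular class of sequences beyond noting that left- and right-stitched constructions pass the check.
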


The channel combining and splitting requires that the $a_i$-$th$ and $b_i$-$th$ channels involved in the coupling $(a_i, b_i)$ carry independent information, similar to a regular polar transformation. However, not all coupling sequences are valid in this sense. Algorithm \ref{alg:CS} in Appendix A checks the validity of coupling sequences. Interested readers can find this algorithm and the proof of Proposition \ref{pro_CS} there.

According to Proposition \ref{pro_CS}, codes validated by Algorithm \ref{alg:CS} can undergo SC decoding using f/g-operations, as specified in Eq. (\ref{eq:SCf}) and (\ref{eq:SCg}) in their natural order.

\begin{definition}
A stitched polar code of length $N$ is defined as a validated coupling sequence $(a_1, b_1), \dots, (a_n, b_n)$ where $1\leq a_i < b_i \leq N$ for all $1\leq i\leq n$. 
\end{definition}

Clearly, the regular polar code can be represented as a special case of stitched polar codes: $(1, \frac{N}{2}+1), (2, \frac{N}{2}+2), \dots, (\frac{N}{2}, N), (1, \frac{N}{4}+1), (2, \frac{N}{4}+2) \dots, (\frac{3N}{4}, N), \dots , (1, 2), (3,4),\dots, (N-1, N)$. A rate-matched regular polar code is formed by selecting a subsequence of a mother-length polar code. This inherent restriction limits their performance. Stitched polar codes are designed by strategically arranging these basic transformations.

\begin{example}\label{Ex:Pol_Gra}

Consider the code $(3,4), (1,2), (3,5), (1,3), (2,5)$ from Example \ref{Ex:Enc_Gra}. Note that the codeword is given by $\bm{u}^{(n)} = \bm{uG} = (u_1\oplus u_2 \oplus u_3\oplus u_4\oplus u_5, u_2\oplus u_5, u_3\oplus u_4 \oplus u_5, u_4, u_5)$.  It can be validated that the code passes Algorithm \ref{alg:CS}.

The polarization process:

(2,5):
\begin{align*}
    & W^{(1)}_2 \left((y_2, y_5) \mid u_2\right) \\
    &= \sum_{u_5} \frac{1}{2} W^{(0)}_2\left(y_2 \mid u_2\oplus u_5\right)\cdot W^{(0)}_5\left(y_5 \mid u_5\right) 
\end{align*}
\begin{align*}
    & W^{(1)}_5 \left((y_2, y_5), u_2 \mid u_5\right) \\
    &= \frac{1}{2} W^{(0)}_2\left(y_2 \mid u_2\oplus u_5\right)\cdot W^{(0)}_5\left(y_5 \mid u_5\right) 
\end{align*}

(1,3):
\begin{align*}
    & W^{(2)}_1 \left((y_1, y_3) \mid u_1\oplus u_2\right) \\
    &= \sum_{u_3\oplus u_4 \oplus u_5} \frac{1}{2} W^{(1)}_1\left(y_1 \mid u_1\oplus u_2 \oplus u_3\oplus u_4\oplus u_5\right)\cdot \\
    & W^{(1)}_3\left(y_3 \mid  u_3\oplus u_4 \oplus u_5\right) \\
    \\
    & W^{(2)}_3 \left((y_1, y_3),  u_1\oplus u_2 \mid u_3\oplus u_4 \oplus u_5\right) \\
    &= \frac{1}{2} W^{(1)}_1\left(y_1 \mid u_1\oplus u_2 \oplus u_3\oplus u_4\oplus u_5\right)\cdot \\
    & W^{(1)}_3\left(y_3 \mid  u_3\oplus u_4 \oplus u_5\right)  
\end{align*}

(3,5):
\begin{align*}
    & W^{(3)}_3 \left((y_1, y_2, y_3, y_5),  (u_1\oplus u_2, u_2) \mid u_3\oplus u_4\right) \\
    &= \sum_{u_5} \frac{1}{2}W^{(2)}_3\left((y_1, y_3),  u_1\oplus u_2 \mid u_3\oplus u_4\oplus u_5\right)\cdot \\
    & W^{(2)}_5\left((y_2, y_5), u_2 \mid u_5\right) 
\end{align*}
\begin{align*}
    & W^{(3)}_5 \left((y_1, y_2, y_3, y_5), (u_1\oplus u_2, u_2, u_3\oplus u_4) \mid u_5 \right) \\
    &= \frac{1}{2} W^{(2)}_3\left((y_1, y_3),  u_1\oplus u_2 \mid u_3\oplus u_4\oplus u_5\right)\cdot \\
    & W^{(2)}_5\left((y_2, y_5), u_2 \mid u_5\right) 
\end{align*}

(1,2):
\begin{align*}
    & W^{(4)}_1 \left((y_1,y_2,y_3,y_5) \mid u_1\right) \\
    &= \sum_{u_2} \frac{1}{2} W^{(3)}_1 \left((y_1, y_3) \mid u_1\oplus u_2\right)\cdot \\
    & W^{(3)}_2\left((y_2, y_5) \mid u_2 \right) 
\end{align*}
\begin{align*}
    & W^{(4)}_2 \left((y_1,y_2,y_3,y_5), u_1 \mid u_2\right) \\
    &= \frac{1}{2} W^{(3)}_1 \left((y_1, y_3) \mid u_1\oplus u_2\right)\cdot W^{(3)}_2\left((y_2, y_5) \mid u_2 \right) 
\end{align*}

(3,4):
\begin{align*}
    & W^{(5)}_3 \left(\bm{y}_1^5,(u_1\oplus u_2, u_2) \mid u_3\right) \\
    &= \sum_{u_4} \frac{1}{2}  W^{(4)}_3 \left((y_1, y_2, y_3, y_5),  (u_1\oplus u_2, u_2) \mid u_3\oplus u_4\right) \cdot \\
    & W^{(4)}_4\left(y_4 \mid u_4\right) 
\end{align*}
\begin{align*}
    & W^{(5)}_4 \left(\bm{y}_1^5,(u_1\oplus u_2, u_2, u_3) \mid u_4\right) \\
    &= \frac{1}{2}  W^{(4)}_3 \left((y_1, y_2, y_3, y_5),  (u_1\oplus u_2, u_2) \mid u_3\oplus u_4\right) \cdot \\
    & W^{(4)}_4\left(y_4 \mid u_4\right) 
\end{align*}

And $W^{(i-1)}_{j} = W^{i)}_{j}$ for $j \neq a_i, b_i$.

\end{example}

Based on the polarization process, the reliability of channels can be computed by DE/GA as in regular polar codes. Let $R^{(i)}_k$ denote the probability density function of the LLR of $W^{(i)}_{k}$ when the all-zero information is transmitted. For each integer pair $(a_i, b_i)$, we calculate the densities as follows:
$$
R^{(i-1)}_{a_i} = R^{(i)}_{a_i} * R^{(i)}_{b_i}; R^{(i-1)}_{b_i} = R^{(i)}_{a_i} \boxed{*} R^{(i)}_{b_i}.
$$
If  $j \neq a_i, b_i$, then $R^{(i-1)}_{j}  = R^{(i)}_{j}$. Here, $R^{(i)}_{0}$ is the probability density function of LLR of the received message when 0 is transmitted, $*$ and $\boxed{*}$ are the convolution operations in a variable node domain and a check node domain, respectively \cite{Richardson2008}. Through DE/GA, information bit positions can be identified as the most reliable bit-channels.

\begin{example}\label{Ex:DE_Gra}
Consider the stitched polar code $(3,4), (1,2), (3,5), (1,3), (2,5)$ from Example \ref{Ex:Enc_Gra} and \ref{Ex:Pol_Gra}. For an addictive white Gaussian noise (AWGN) channel with SNR 6.021 dB, which corresponds to a mean LLR of 8, GA (Eq. (5)-(6) in \cite{Trifonov2012}) is applied to evaluate the mean LLR of synthetic channels, with the full calculation process illustrated in Fig. \ref{fig_GA}. 
\end{example}

\begin{figure}[!t]
\centering
\includegraphics[width=0.4\textwidth]{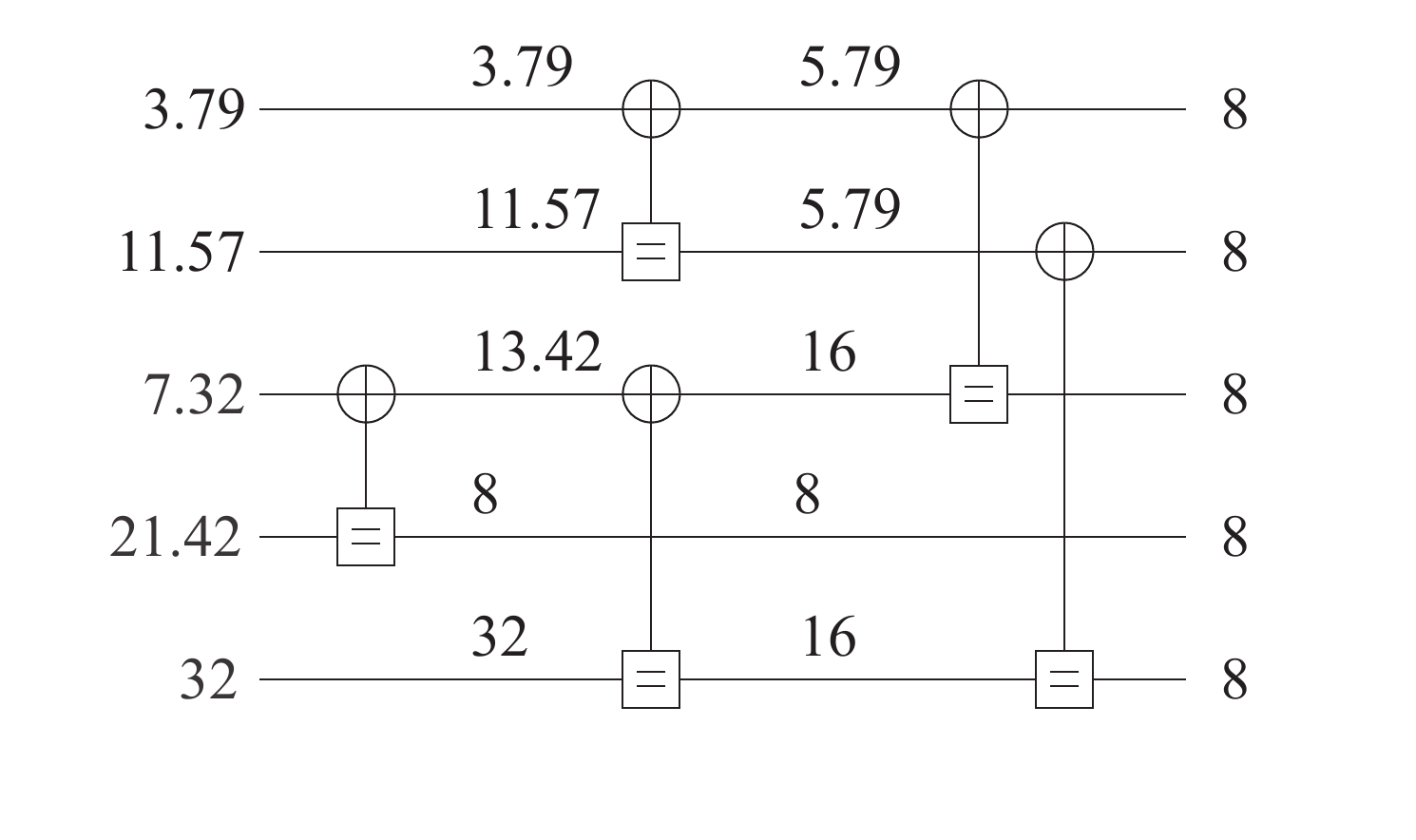}
\caption{The GA process of length-5 stitched polar code.}
\label{fig_GA}
\end{figure}

For short-length codes, we can find the best constructions using a computational search. Specifically, appendix B lists the optimal stitched polar code constructions with code length less than or equal to 8.

\subsection{SC Decoding for Stitched Polar Codes}

In this subsection, we elaborate on the SC decoding procedure for stitched polar codes. The decoding process follows a specific order. A $2\times 2$ transformation element $(a,b)$ acts as a decoder element with three states:

1. \emph{Receiving LLRs}: Upon receiving the LLRs $L_a$ and $L_b$, the element activates the f-operation defined in Eq. (\ref{eq:SCf}). This produces the updated LLR $L_a' = f(L_a, L_b)$, which is sent to the subsequent decoder element  associated with $a$. If $(a,b)$ is the last (leftmost) element  associated with $a$, then $\hat{u}_a$ is determined according to Eq. (\ref{eq:SChard}) and returned to $(a,b)$.

2. \emph{Receiving $\hat{u}_a$}: When the hard decision $\hat{u}_a$ is available, the element  activates the g-operation defined in Eq. (\ref{eq:SCg}). This generates the updated LLR $L_b' = g(L_a, L_b,\hat{u}_a)$ for the next element  associated with $b$, or $\hat{u}_b$ is determined and returned to $(a,b)$ if $(a,b)$ is the last element  associated with $b$.

3. \emph{Receiving $\hat{u}_b$}: When $\hat{u}_b$ is available, the element  calculates $\hat{u}_a' = \hat{u}_a\oplus \hat{u}_b$, $\hat{u}_b' = \hat{u}_b$ and then sends them back to the previous decoders associated with $a$ and $b$, respectively. 

Decoding begins at the rightmost elements, which receive the channel LLRs. Whenever a decoder element receives the messages specified in any of the above three cases, the corresponding decoding processing is triggered. This means the decoding scheduling is dependent on the stitching structure, rather than the fixed scheduling in regular polar codes. However, the decoding processing for each of the three cases is the same.

\begin{figure}[!t]
\begin{algorithm}[H]
\caption{stitched\_polar\_SC\_decoder($C, \bm{L}$)}
\begin{algorithmic}[1]\label{alg:DIKDS}

\renewcommand{\algorithmicrequire}{\textbf{Input:}}
\renewcommand{\algorithmicensure}{\textbf{Output:}}
\REQUIRE the stitched polar codes $C = (a_1, b_1), \dots, (a_n, b_n)$ with the information set $\MI$, the channel LLR vector $\bm{L}$.
\ENSURE the decoding result $\hat{\bm{u}}$.
\STATE Send $\bm{L}$ to the rightmost elements;
\WHILE{any element $(a,b)$ has unprocessed receiving messages}
 \STATE \textbf{switch}
 \STATE \textbf{Case 1:} Receiving LLRs $L_a$ and $L_b$ 
 \STATE $L_a' \gets f(L_a, L_b)$ // f-operation in Eq. (\ref{eq:SCf})
  \IF{$(a,b)$ is the leftmost element associated with $a$}
   \STATE $\hat{\bm{u}}_a \gets$ the hard decision of $L_a'$;  // Eq. (\ref{eq:SChard})
  \ELSE
   \STATE Send $L_a'$ to the next element associated with $a$;
  \ENDIF  

  \STATE \textbf{Case 2:} Receiving $\hat{\bm{u}}_a$ 
  \STATE $L_b' \gets g(L_a, L_b, \hat{u}_a)$  // g-operation in Eq. (\ref{eq:SCg})
   \IF{$(a,b)$ is the leftmost element associated with $b$}
  \STATE $\hat{\bm{u}}_b \gets$ the hard decision of $L_b'$;  // Eq. (\ref{eq:SChard})
   \ELSE
  \STATE Send $L_b'$ to the next element associated with $b$;
   \ENDIF 
   
  \STATE \textbf{Case 3:} Receiving $\hat{\bm{u}}_b$
  \STATE $\hat{u}_a' \gets \hat{u}_a \oplus \hat{u}_b$;
  \STATE $\hat{u}_b' \gets \hat{u}_b$;
  \STATE Return $\hat{u}_a'$ and $\hat{u}_b'$ to the previous elements associated with $a$ and $b$;
\ENDWHILE
\end{algorithmic}
\end{algorithm}
\end{figure}

Algorithm \ref{alg:DIKDS} demonstrates the SC decoding procedure described above. The decoding sequence can be pre-computed and stored. We employ a concise notation to represent the decoding operations: $(a,b,f)$ for $L'_a\gets f(L_a, L_b)$, $(a,b,g)$ for $L_b'\gets g(L_a, L_b, \hat{u}_a)$, $(a,b,\oplus)$ for $\hat{u}'_a \gets \hat{u}_a \oplus \hat{u}_b, \hat{u}_b' \gets \hat{u}_b$ and $(a,d)$ for hard decision for the $a$-$th$ bit. The complexity of this order-based decoding scales linearly with the number of decoder elements. 

\begin{example}
Consider the stitched polar code $(3,4), (1,2), (3,5), (1,3), (2,5)$ in Example \ref{Ex:Enc_Gra}, \ref{Ex:Pol_Gra} and \ref{Ex:DE_Gra}. The information set is $\{4, 5\}$.

The decoding sequence is $(1, 3, f), (2, 5, f), (1, 2, f), \\
(1, d), (1, 2, g), (2, d), (1, 2, \oplus), (1, 3, g), (2, 5, g), (3, 4, f), \\
(3, 5, f),  (3, d), (3, 4, g), (4, d), (3, 4, \oplus), (3, 5, g), (5, d), \\
(3, 5, \oplus), (1, 3, \oplus),  (2, 5, \oplus)$.  Here, we employ f-operation as $L_a'  = \text{sign}(L_a) \text{sign}(L_b) \min(|L_a|,|L_b|)$ for simplicity.

Assume the input LLR vector for the stitched polar code is $(2, 7.5, -4, -9, 3.5)$ over AWGN channel. We demonstrate the complete SC decoding process through a diagram in Fig. \ref{fig_decode}. In the diagram, the numbers in the red boxes represent the results calculated at each step. The numbers with a green background show the hard decision values transmitted from left to right. The red vertical lines indicate that the element is currently performing an f-operation, while the blue vertical lines indicate that the element is performing a g-operation.

Step 1: Elements (1,3) and (2,5) are activated to run the f-operation: $f(2, -4) = -2$ and $f(7.5, 3.5) = 3.5$. The results -2 and 3.5 are sent to the element (1,2).

Step 2: Element (1,2) runs the f-operation. As (1,2) is the last element for bit 1, $\hat{u}_1$ is determined. Since bit 1 is a frozen bit, $\hat{u}_1$ is set to 0 and sent back to the element.

Step 3: Element (1,2) runs the g-operation, calculated as $(-1)^0 \cdot -2 + 3.5 = 1.5$. Since bit 2 is also a frozen bit, $\hat{u}_2 = 0$ is sent back to the element.

Step 4: Element (1,2) calculates hard decisions $\hat{u}_1 \oplus \hat{u}_2$ and $\hat{u}_2$ are sent back to elements (1,3) and (2,5). Then both elements run the g-operation.

Step 5: Element (3,5) runs the f-operation.

Step 6: Element (3,4) runs the f-operation. $\hat{u}_3$ is frozen to 0.

Step 7: Element (3,4) runs the g-operation to determine the information bit $\hat{u}_4$. Given that the LLR is -15, the hard decision result is 1.

Step 8: Finally, the element (3,5) runs the g-operation, determining $\hat{u}_5$ to be 0. With that, we have completed the SC decoding. 

\begin{figure}[htbp]
  \begin{minipage}[b]{0.49\linewidth}
    \centering
    \includegraphics[width=\textwidth]{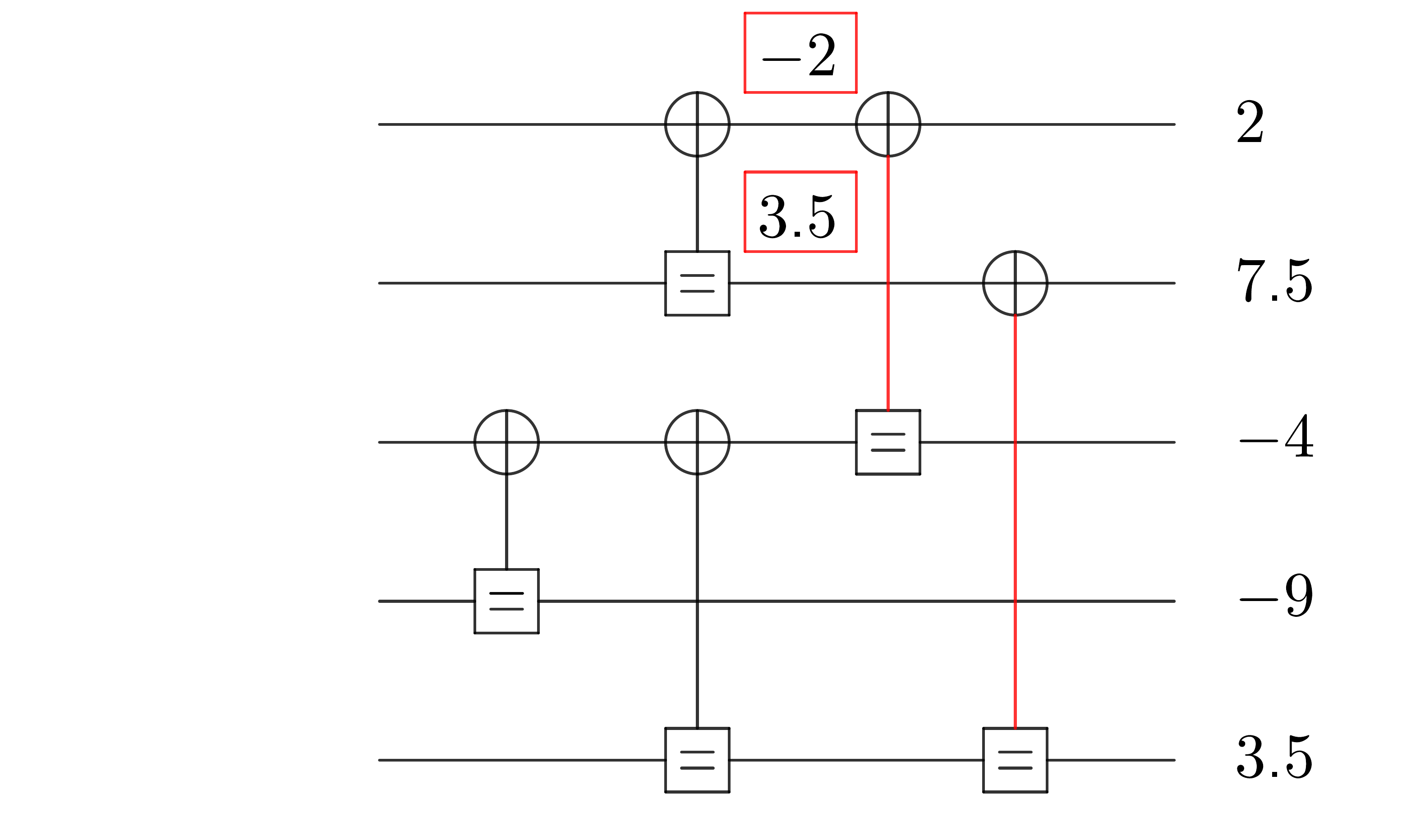}
    \caption*{step 1}
  \end{minipage}
  \hfill
  \begin{minipage}[b]{0.49\linewidth}
    \centering
    \includegraphics[width=\textwidth]{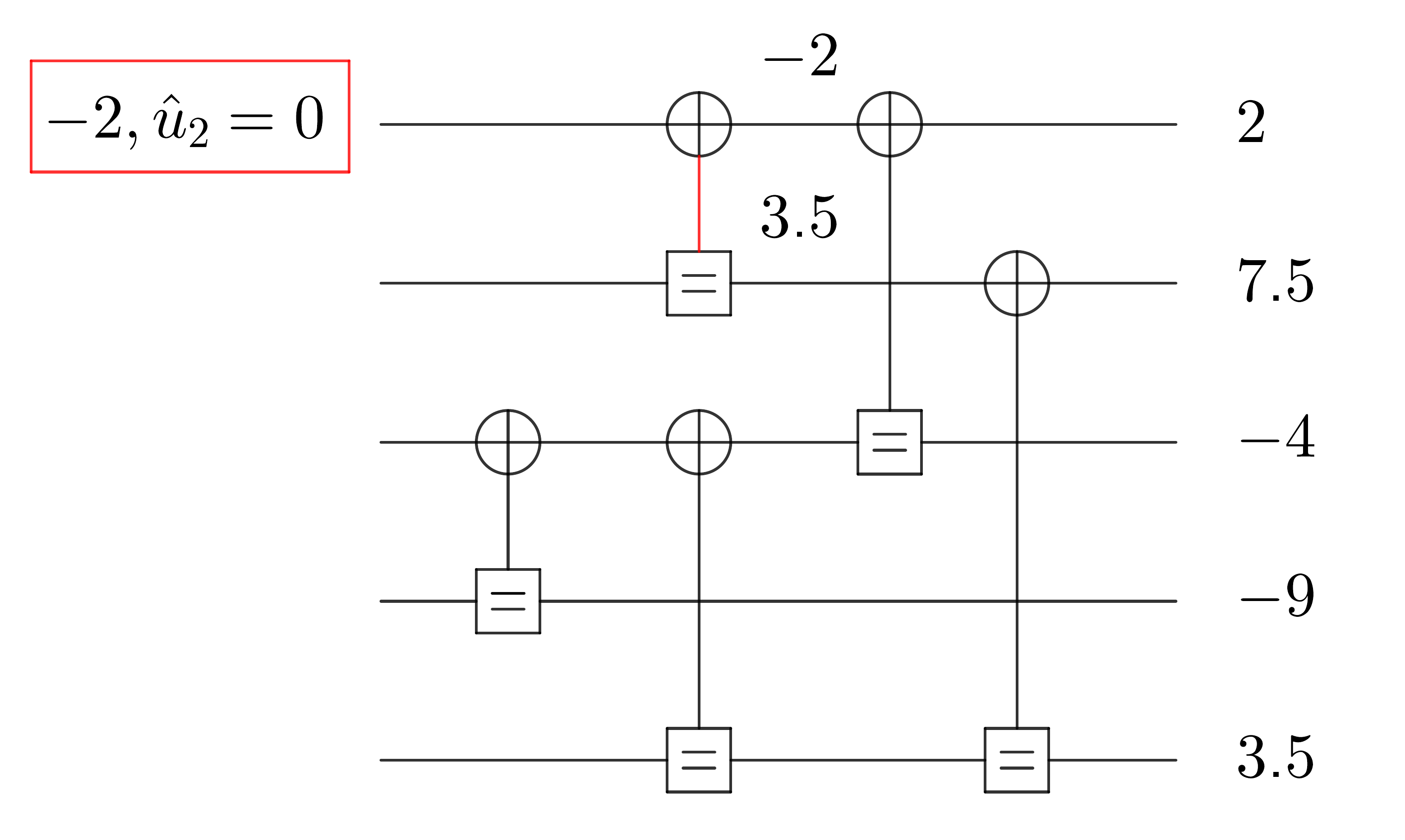}
    \caption*{step 2}
  \end{minipage}
    \begin{minipage}[b]{0.49\linewidth}
    \centering
    \includegraphics[width=\textwidth]{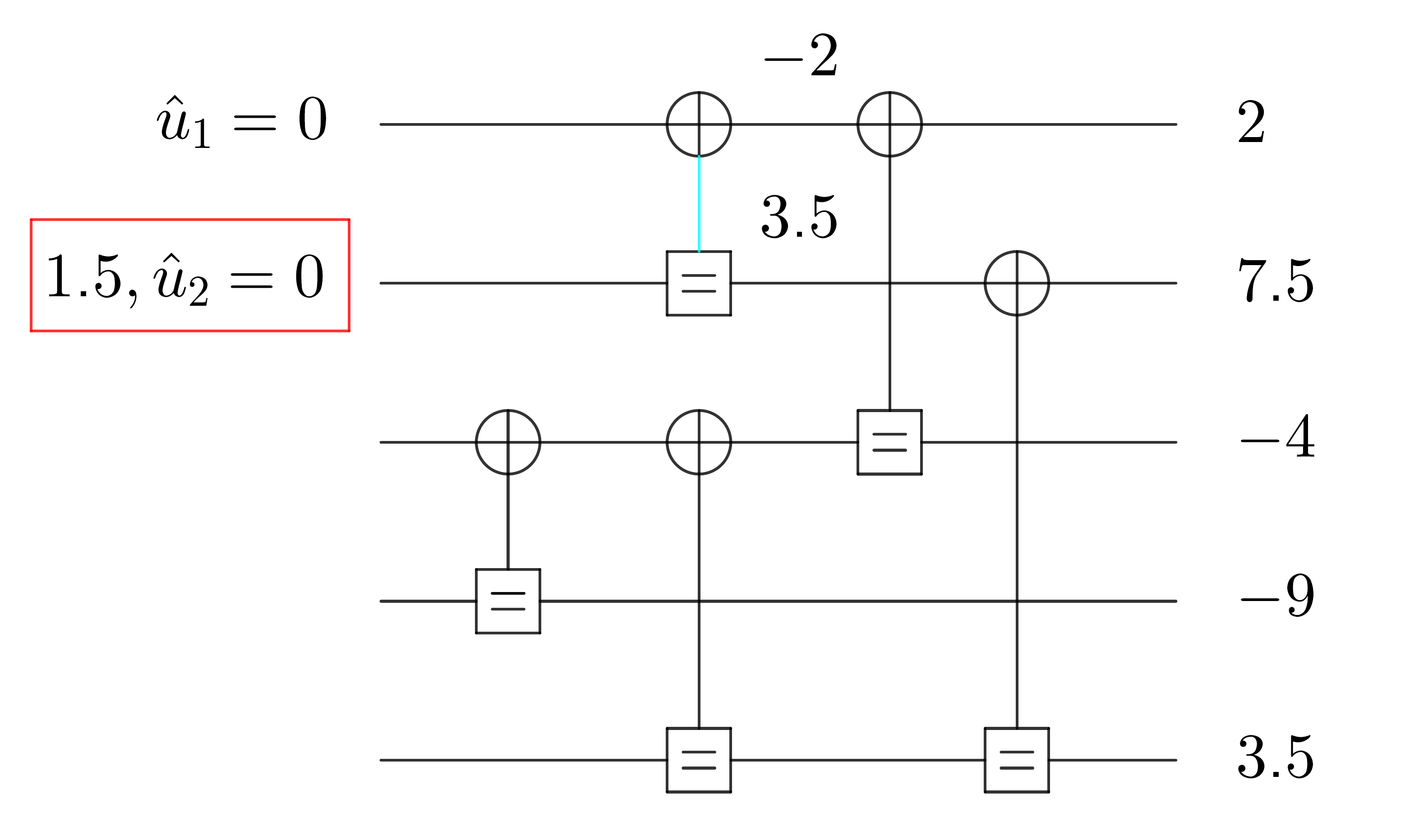}
    \caption*{step 3}
  \end{minipage}
  \hfill
  \begin{minipage}[b]{0.49\linewidth}
    \centering
    \includegraphics[width=\textwidth]{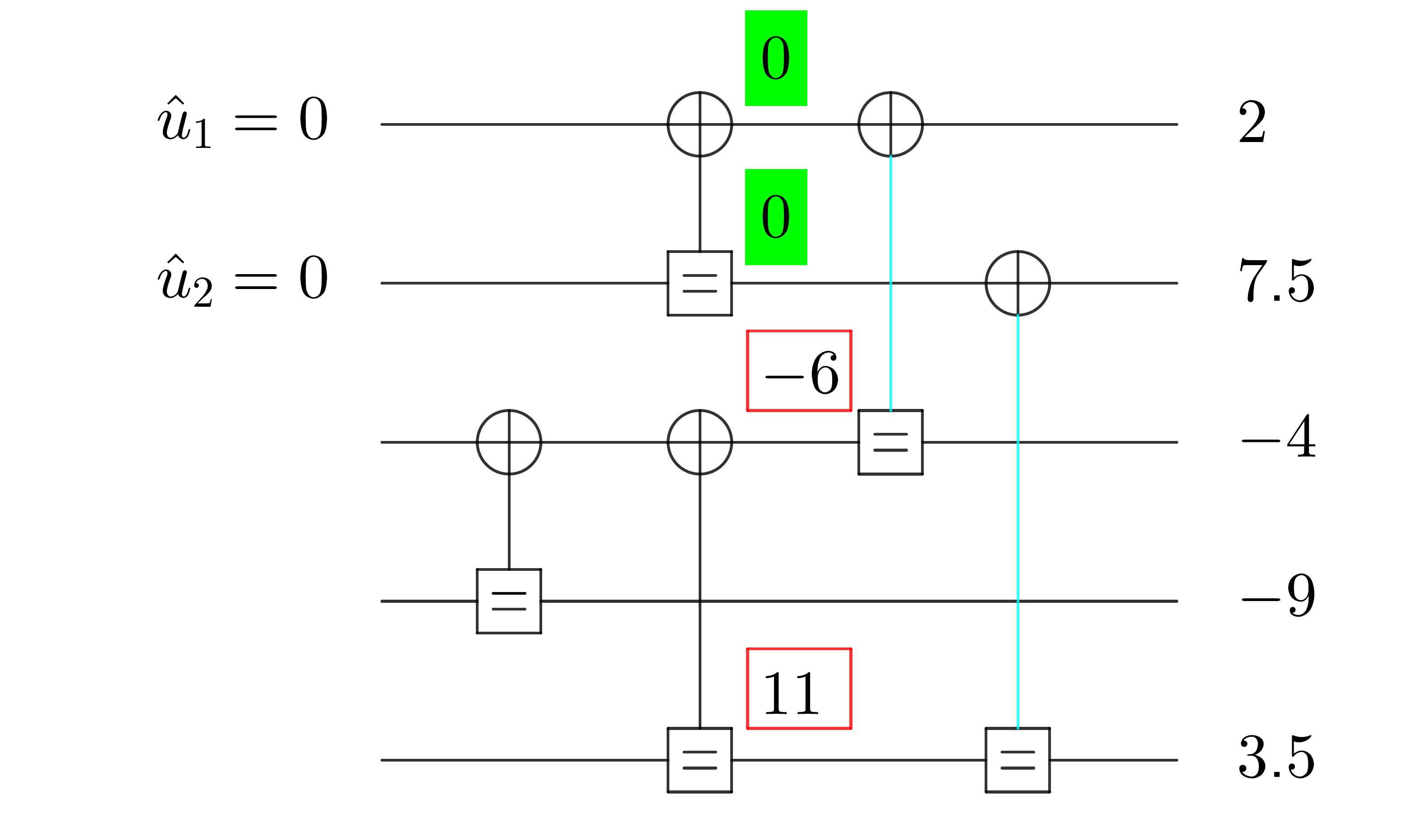}
    \caption*{step 4}
  \end{minipage}
    \begin{minipage}[b]{0.49\linewidth}
    \centering
    \includegraphics[width=\textwidth]{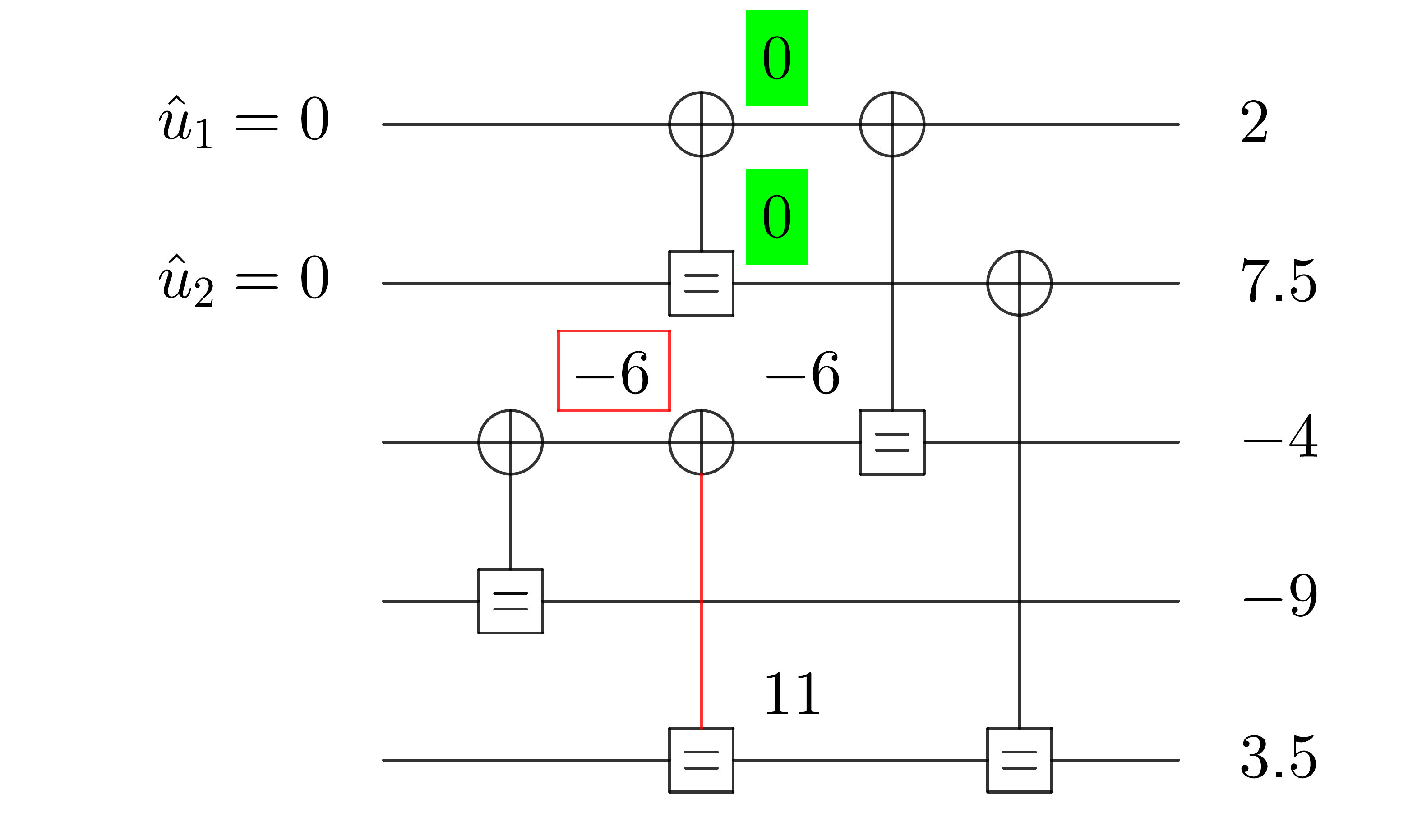}
    \caption*{step 5}
  \end{minipage}
  \hfill
  \begin{minipage}[b]{0.49\linewidth}
    \centering
    \includegraphics[width=\textwidth]{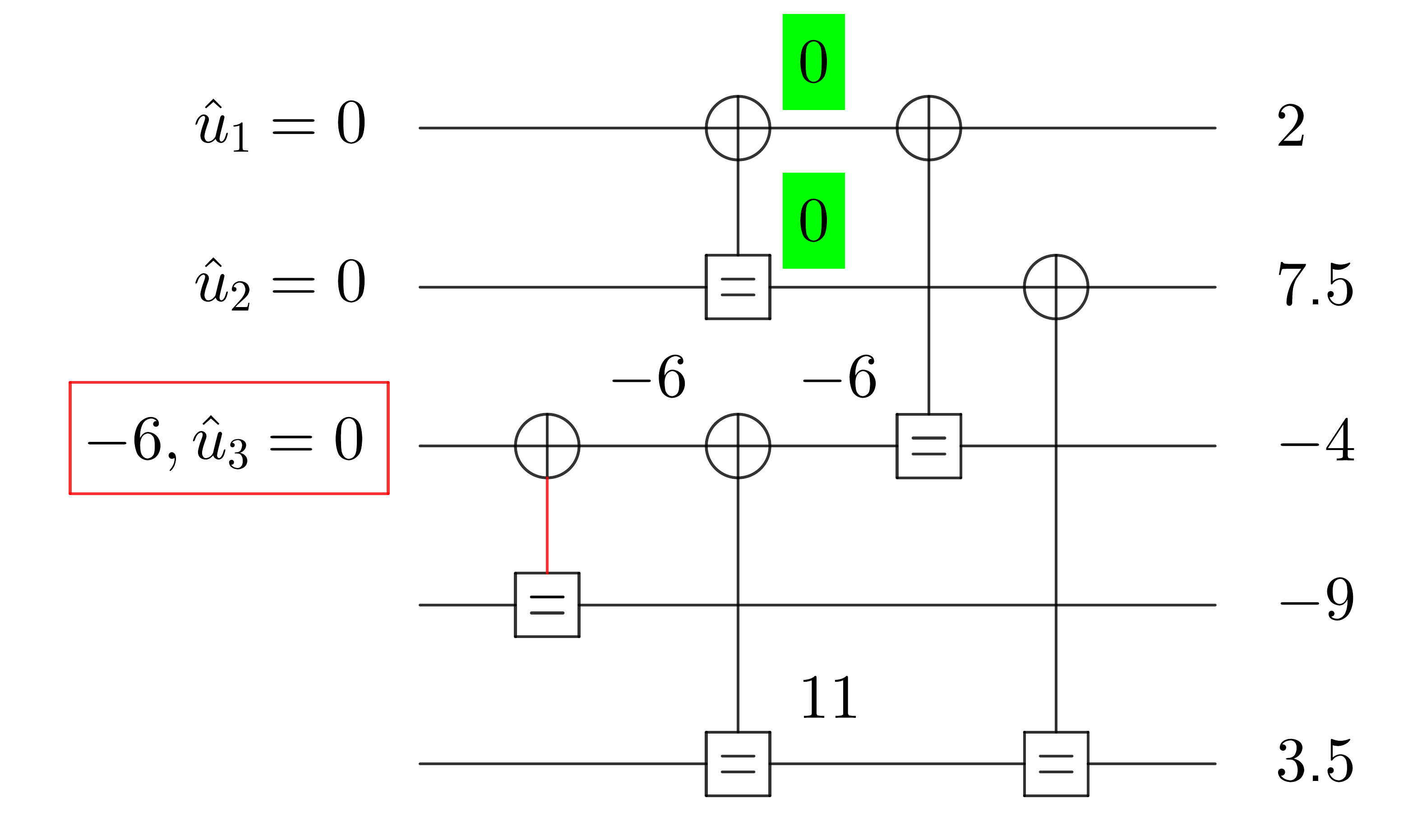}
    \caption*{step 6}
  \end{minipage}
    \begin{minipage}[b]{0.49\linewidth}
    \centering
    \includegraphics[width=\textwidth]{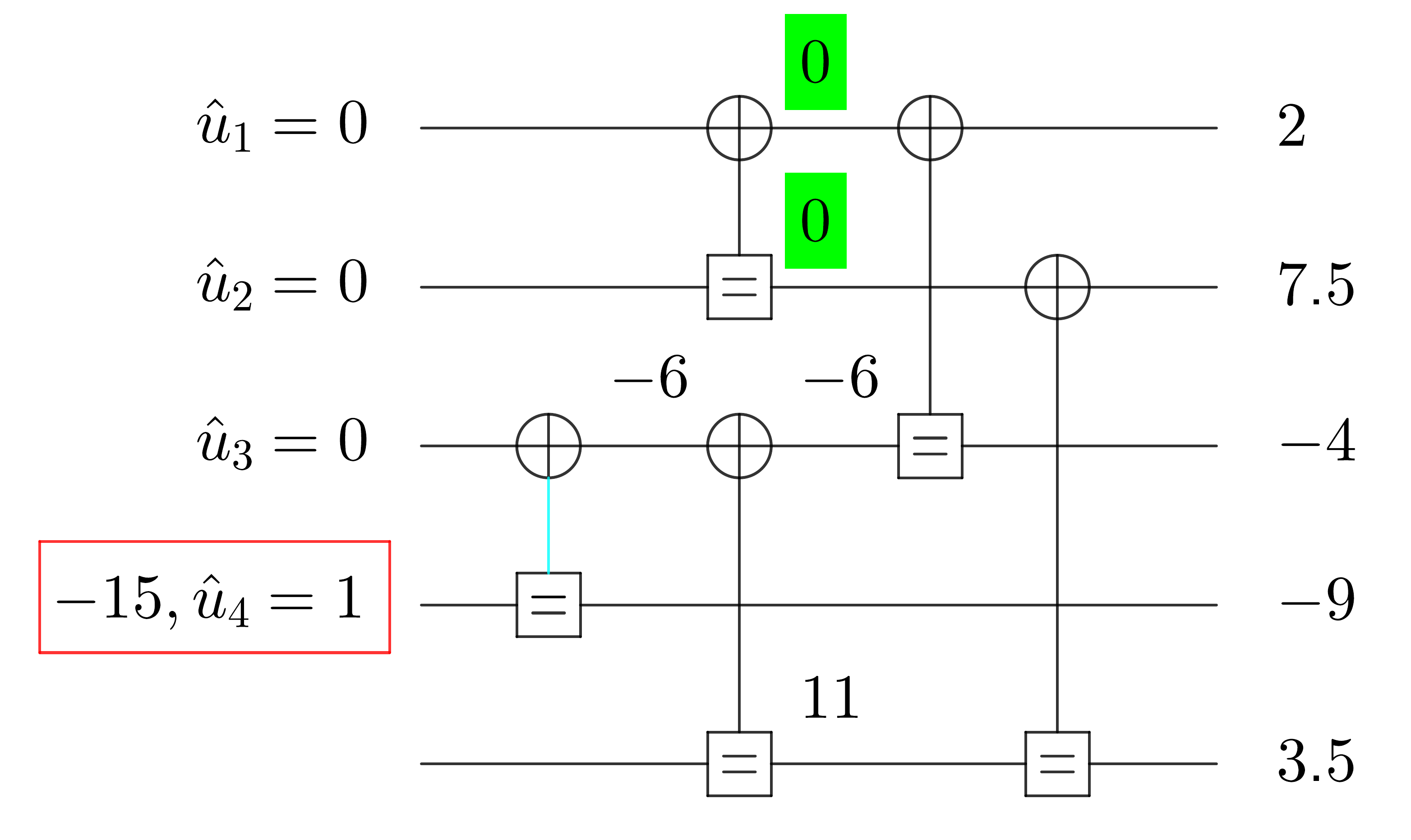}
    \caption*{step 7}
  \end{minipage}
  \hfill
  \begin{minipage}[b]{0.49\linewidth}
    \centering
    \includegraphics[width=\textwidth]{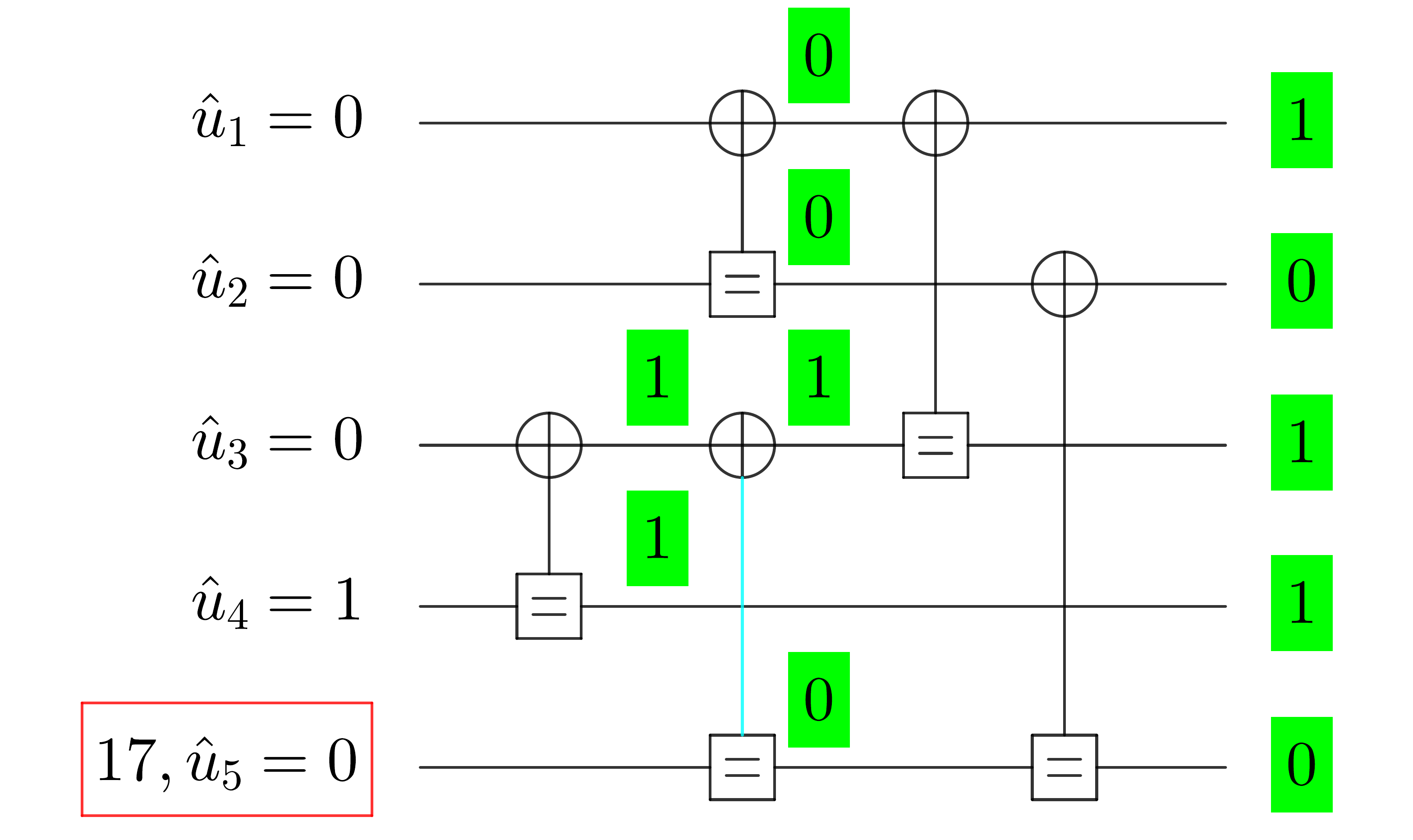}
    \caption*{step 8}
  \end{minipage}
\caption{The decoding process of length-5 stitched polar code.}
\label{fig_decode}
\end{figure}

\end{example}

The SC-list decoding algorithm \cite{Tal2015} can also be applied to stitched polar codes, as demonstrated in Algorithm \ref{alg:SCLD}.

\begin{figure}[!t]
\begin{algorithm}[H]
\caption{stitched\_polar\_SC\_list\_decoder($C, \bm{L}, S$)}
\begin{algorithmic}[1]\label{alg:SCLD}

\renewcommand{\algorithmicrequire}{\textbf{Input:}}
\renewcommand{\algorithmicensure}{\textbf{Output:}}
\REQUIRE the stitched polar codes $C = (a_1, b_1), \dots, (a_n, b_n)$ with the information set $\MI$, the channel LLR vector $\bm{L}$ and the list size $S$.
\ENSURE a list of $S$ decoding results and their path metrics (PM).
\STATE Initialize the number of paths $S_0\gets 0$;
\STATE Initialize the path metric $PM[i]\gets 0$ for $1\leq i\leq S$;
\FOR{$i=1$ to $N$}
  \FOR{$s=1$ to $S_0$}
  \STATE Apply Algorithm \ref{alg:DIKDS} to calculate  the $i$-$th$ bit-channel LLR $L_{i,s}$;
  \IF{$i$ is a frozen bit}
    \STATE $\hat{u}_{i,s} \gets 0$;
     \IF{$L_{i,s} < 0$}
      \STATE $PM[s] = PM[s] - L_{i,s}$; 
    \ENDIF  
  \ELSE
    \STATE Duplicate the $s$-$th$ path. The first copy remains the $s$-$th$ path with $\hat{u}_{i,s} \gets 0$, while the second copy becomes the $(s+S_0)$-$th$ path with $\hat{u}_{i,s+S_0} \gets 1$;
   \IF{$L_{i,s} < 0$}
      \STATE $PM[s] = PM[s] - L_{i,s}$; 
   \ELSE
      \STATE $PM[s+S_0] = PM[s+S_0] + L_{i,s}$; 
   \ENDIF  
 \ENDIF
\ENDFOR
\IF{$2S_0 > S$}
 \STATE Sort the paths by PM ascending order and keep the first $S$ paths;
\ENDIF 
\ENDFOR
\end{algorithmic}
\end{algorithm}
\end{figure}

\section{Left- and Right-Stitched Polar Codes}

In this section, we introduce two specialized stitched polar codes and analyse their decoding complexity and weight spectra. These methods enable a simple creation of longer stitched polar code $C$ by stitching two smaller, high-performance stitched polar codes $C'$ and $C''$ at either left or right side, as visually represented in Fig. \ref{fig_LR}. 

\begin{figure}[!t]
\centering
  \begin{subfigure}[b]{0.4\textwidth}
  \centering
    \includegraphics[width=0.6\textwidth, trim = 300 220 340 190, clip]{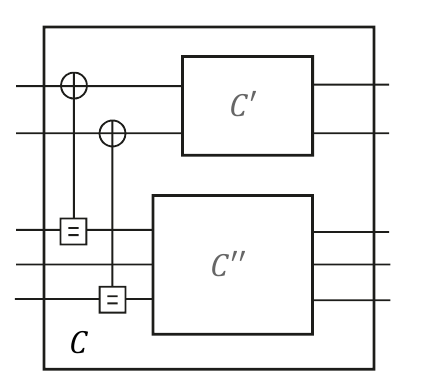}
    \caption{}
    \label{fig_LR_LEFT}
  \end{subfigure} 
  \begin{subfigure}[b]{0.4\textwidth}
  \centering
    \includegraphics[width=0.6\textwidth, trim = 300 220 340 190, clip]{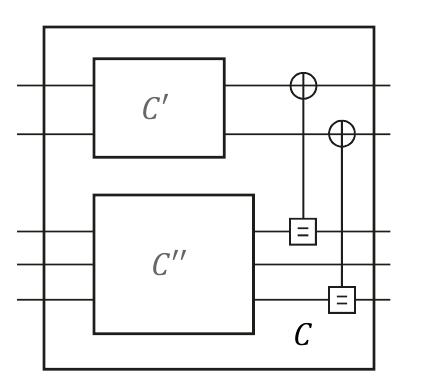}
    \caption{}
    \label{fig_LR_RIGHT}
  \end{subfigure} 
    \caption{Recursive construction of (a) left- and (b) right-stitched polar codes.}
  \label{fig_LR}
\end{figure}

Firstly, we introduce the notations used in this section. We always suppose that the channels of $C'$ are stitched to enhance the channels of $C''$.  Denote $C' =  (a'_1, b'_1), \dots, (a'_s, b'_s)$ and $C'' =  (a''_1, b''_1), \dots, (a''_t, b''_t)$. For a mapping $\pi$, $\pi(C') =  (\pi(a'_1), \pi(b'_1)), \dots, (\pi(a'_s), \pi(b'_s))$. The code lengths of $C, C'$ and $C''$ are denoted as $N, N'$ and $N''$, respectively. The generator matrices are $\bm{G}, \bm{G}'$ and $\bm{G''}$, where $\bm{g}^{(i)}, \bm{g}'^{(i)}$ and $\bm{g}''^{(i)}$ denote the $i$-$th$ row of $\bm{G}, \bm{G}'$ and $\bm{G}''$ respectively. Let $\bm{e}^{(i)}\in \FF_2^N$ be the vector with 1 at position $i$ and 0 otherwise, then $\bm{g}^{(i)} =  \bm{e}^{(i)}\bm{G}$. 

\subsubsection{Left-Stitched Polar Codes}

Assume $N'\leq N''$. The left-stitched polar code is shown in Fig. \ref{fig_LR_LEFT}. The $i$-$th$ channel of $C'$ is stitched to strengthen the $\gamma_i$-$th$ channel of $C''$ on the left with $\gamma_1<\gamma_2<\dots<\gamma_{N'}$. Denote $\Gamma=\{\gamma_1,\gamma_2,\dots,\gamma_{N'}\}\subseteq [N'']$. 

During SC decoding (Algorithm \ref{alg:DIKDS}), after the hard decision for the $i$-th bit in $C'$, the $\gamma_i$-th bit in $C''$ must subsequently be decoded. To ensure the SC decoding adheres to the natural order, we define reordering mappings $\pi' : [N'] \to [N]$ with $\pi'(i) = \gamma_i + i - 1$ and $\pi'' : [N''] \to [N]$ with $\pi''(i) = i - |\{j\in [N']\mid \pi'(j) < i\}|\}$. Then the bits from $C'$ occupy positions $\pi'(1)$ to $\pi'(N)$, while the bits from $C''$ occupy $\pi''(1)$ to $\pi''(N'')$. 

\begin{definition}
The left-stitched polar code stitched with $C'$ and $C''$ at position set $\Gamma$ on the left is $C = h_L(C', C'', \Gamma) = (\pi'(1), \pi'(1)+1), \dots, (\pi'(N'), \pi'(N')+1), \pi'(C'), \pi''(C'')$.
\end{definition}

The generator matrix of left-stitched polar code is provided by the following lemma.

\begin{lemma}\label{lemma_l1}
Let $\bm{u}$ be the message vector of $C = h_L(C', C'', \Gamma)$.  Denote $\Gamma_1=\{ \pi'(i) \mid i\in [N']\}$ and $\Gamma_2=\{\pi'(i)+1\mid i\in [N']\}$. The codewords in $C$ are obtained as $(\bm{u}_{\Gamma_1} \oplus  \bm{u}_{\Gamma_2}) \bm{G}' \bowtie_{\Gamma_1} (\bm{u}_{[N]\backslash \Gamma_1}) \bm{G}''$.

The generator matrix $\bm{G}$ of $C$ can be divided as $\bm{G}_{\Gamma_1, \Gamma_1} = \bm{G}', \bm{G}_{[N]\backslash \Gamma_1, [N]\backslash \Gamma_1} = \bm{G}'', \bm{G}_{\Gamma_1, [N]\backslash \Gamma_1} = \bm{0}, \bm{G}_{\Gamma_2, \Gamma_1} = \bm{G}',  \bm{G}_{[N]\backslash (\Gamma_1\cup \Gamma_2), \Gamma_1} = \bm{0}$.
\end{lemma}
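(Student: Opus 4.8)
The plan is to prove Lemma \ref{lemma_l1} by unwinding the definition of the coupling sequence $C = h_L(C', C'', \Gamma)$ and tracking how each basic $2\times 2$ transformation acts on the message vector. Recall that a coupling sequence $(a_1,b_1),\dots,(a_n,b_n)$ builds the generator matrix column-by-column starting from the identity, and it acts on a message vector $\bm{u}$ by successively replacing $u_{a_i} \leftarrow u_{a_i}\oplus u_{b_i}$. The sequence for $C$ is built in three blocks: first the ``stitching'' pairs $(\pi'(i), \pi'(i)+1)$ for $i = 1,\dots,N'$, then $\pi'(C')$, then $\pi''(C'')$. By construction of $\pi'$ and $\pi''$, the image $\Gamma_1 = \{\pi'(i)\}$ and the shifted set $\Gamma_2 = \{\pi'(i)+1\}$ are disjoint from each other (since $\pi'(i+1) = \gamma_{i+1} + i \geq \gamma_i + 1 + i > \gamma_i + i + 1 - 1 = \pi'(i)+1$ fails only in the tight case; I should check $\gamma_{i+1} \ge \gamma_i + 1$ gives $\pi'(i+1) \ge \pi'(i) + 2$, so indeed $\Gamma_1 \cap \Gamma_2 = \varnothing$) and that $\Gamma_2 \subseteq [N]\setminus\Gamma_1$, while $\pi''$ is the order-preserving bijection from $[N'']$ onto $[N]\setminus\Gamma_1$.

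The key steps, in order. First I would establish the three-block decomposition of the action on $\bm{u}$: (i) the stitching block sends $u_{\pi'(i)} \mapsto u_{\pi'(i)} \oplus u_{\pi'(i)+1}$ for each $i$, i.e. it replaces $\bm{u}_{\Gamma_1}$ by $\bm{u}_{\Gamma_1}\oplus \bm{u}_{\Gamma_2}$ (this is well-defined and independent of order because the pairs are on disjoint coordinate sets, using $\Gamma_1\cap\Gamma_2=\varnothing$ and that all $\pi'(i)+1 \notin \Gamma_1$); (ii) the block $\pi'(C')$ acts only on coordinates in $\Gamma_1$, and since $\pi'$ is an order isomorphism from $[N']$ onto $\Gamma_1$, its effect on $\bm{u}_{\Gamma_1}$ (now holding $\bm{u}_{\Gamma_1}\oplus\bm{u}_{\Gamma_2}$) is exactly right-multiplication by $\bm{G}'$, i.e. it produces $(\bm{u}_{\Gamma_1}\oplus\bm{u}_{\Gamma_2})\bm{G}'$ in those positions; (iii) the block $\pi''(C'')$ acts only on coordinates in $[N]\setminus\Gamma_1$, and since $\pi''$ is an order isomorphism from $[N'']$ onto $[N]\setminus\Gamma_1$ that is unaffected by step (ii), it produces $(\bm{u}_{[N]\setminus\Gamma_1})\bm{G}''$ in those positions. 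Crucially the three blocks touch coordinate-value updates that do not interfere: step (i) reads $\Gamma_2\subseteq[N]\setminus\Gamma_1$ and writes $\Gamma_1$; step (ii) reads and writes within $\Gamma_1$ only; step (iii) reads and writes within $[N]\setminus\Gamma_1$ only — so the value eventually landing in positions $\Gamma_1$ is $(\bm{u}_{\Gamma_1}\oplus\bm{u}_{\Gamma_2})\bm{G}'$ and in positions $[N]\setminus\Gamma_1$ is $\bm{u}_{[N]\setminus\Gamma_1}\bm{G}''$. Assembling via the $\bowtie_{\Gamma_1}$ notation gives the claimed codeword formula.

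For the generator matrix, I would then read off $\bm{G}$ from the codeword formula $\bm{x} = \bm{u}\bm{G}$ by examining which message coordinates feed which output blocks. Writing $\bm{x}_{\Gamma_1} = (\bm{u}_{\Gamma_1}\oplus\bm{u}_{\Gamma_2})\bm{G}' = \bm{u}_{\Gamma_1}\bm{G}' \oplus \bm{u}_{\Gamma_2}\bm{G}'$ shows $\bm{G}_{\Gamma_1,\Gamma_1} = \bm{G}'$, $\bm{G}_{\Gamma_2,\Gamma_1} = \bm{G}'$, and $\bm{G}_{[N]\setminus(\Gamma_1\cup\Gamma_2),\Gamma_1} = \bm{0}$; writing $\bm{x}_{[N]\setminus\Gamma_1} = \bm{u}_{[N]\setminus\Gamma_1}\bm{G}''$ shows $\bm{G}_{[N]\setminus\Gamma_1,[N]\setminus\Gamma_1} = \bm{G}''$ and $\bm{G}_{\Gamma_1,[N]\setminus\Gamma_1} = \bm{0}$. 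Here one must be careful that the row/column orderings induced by $\pi'$ (for $\Gamma_1$ and for $\Gamma_2$) and by $\pi''$ (for $[N]\setminus\Gamma_1$) are exactly the orderings under which $\bm{G}'$ and $\bm{G}''$ are written, which follows from $\pi'$ and $\pi''$ being order-preserving.

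I expect the main obstacle to be the bookkeeping in step (i)–(iii) showing the three blocks genuinely do not interfere — specifically verifying that no pair $(\pi'(i),\pi'(i)+1)$ in the stitching block, and no pair in $\pi'(C')$, ever writes into a coordinate that $\pi''(C'')$ later reads, and vice versa. This reduces to the disjointness facts $\Gamma_1\cap\Gamma_2=\varnothing$ and $\Gamma_2\subseteq[N]\setminus\Gamma_1$ together with the fact that $\pi'(C')$ keeps all its activity inside $\Gamma_1$ and $\pi''(C'')$ inside its complement; these in turn follow from the definitions $\pi'(i) = \gamma_i + i - 1$ (strictly increasing with gaps $\ge 2$) and $\pi''$ being the complementary order-preserving injection. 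Once this non-interference is pinned down, the rest is a direct transcription.
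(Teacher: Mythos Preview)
Your proposal is correct and follows essentially the same approach as the paper's proof: decompose the coupling sequence into the stitching block, $\pi'(C')$, and $\pi''(C'')$, observe that these act on disjoint coordinate sets $\Gamma_1$ and $[N]\setminus\Gamma_1$ after the stitching step, and then read off the generator matrix from the resulting codeword formula. The only cosmetic difference is that the paper extracts the blocks of $\bm{G}$ by plugging in standard basis vectors $\bm{u}=\bm{e}^{(i)}$ case by case, whereas you read them off directly from linearity of $\bm{x}_{\Gamma_1}$ and $\bm{x}_{[N]\setminus\Gamma_1}$; your extra care in verifying $\Gamma_1\cap\Gamma_2=\varnothing$ via the gap estimate $\pi'(i+1)-\pi'(i)\ge 2$ is a detail the paper leaves implicit.
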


\begin{proof}
$(\pi'(N'), \pi'(N')+1) \circ \dots \circ (\pi'(1), \pi'(1)+1) \circ \bm{u} = (\bm{u}_{\Gamma_1} \oplus  \bm{u}_{\Gamma_2}) \bowtie_{\Gamma_1} (\bm{u}_{[N]\backslash \Gamma_1}) $. Next, since $\pi'(C')$ only operates on the bits in $\Gamma_1$, and $\pi''(C'')$ on $[N]\backslash \Gamma_1$, the codeword is derived by encoding of $C'$ and $C''$.

To calculate the generator matrix, take $\bm{u} = \bm{e}^{(i)}$, then the codeword is $i$-$th$ row $\bm{g}^{(i)} = (\bm{e}^{(i)}_{\Gamma_1} \oplus  \bm{e}^{(i)}_{\Gamma_2}) \bm{G}' \bowtie_{\Gamma_1} (\bm{e}^{(i)}_{[N]\backslash \Gamma_1}) \bm{G}''$.

If $i = \gamma_t+ t-1\in \Gamma_1$, then $\bm{e}^{(i)}_{\Gamma_1} = \bm{e}^{(t)}$ and $\bm{e}^{(i)}_{[N]\backslash \Gamma_1} = \bm{0}$. Hence, $\bm{g}^{(i)} = \bm{g}'^{(t)} \bowtie_{\Gamma_1} \bm{0}$.

If $i = \gamma_t+ t\in \Gamma_2$, then $\bm{e}^{(i)}_{\Gamma_1} = \bm{0}$, $\bm{e}^{(i)}_{\Gamma_2} = \bm{e}^{(t)}$ and $\bm{e}^{(i)}_{[N]\backslash \Gamma_1} =  \bm{e}^{(\gamma_t)}$. Hence, $\bm{g}^{(i)} = \bm{g}'^{(t)} \bowtie_{\Gamma_1} \bm{g}''^{(\gamma_t)}$.

Otherwise, for $i\notin \Gamma_1\cup \Gamma_2$, $\bm{e}^{(i)}_{\Gamma_1} = \bm{0}, \bm{e}^{(i)}_{\Gamma_2} = \bm{0}$  and $\bm{e}^{(i)}_{[N]\backslash \Gamma_1} =  \bm{e}^{(j)}$, where $j = i - |\{k\in \Gamma_2\mid k<i\}|$. Hence, $\bm{g}^{(i)} = \bm{0} \bowtie_{\Gamma_1} \bm{g}''^{(j)}$.
\end{proof}

\begin{example}
Let $C$ be generated by stitching length-1 polar code $C'$ and length-4 polar code $C''$ at the position $\Gamma = \{3\}$. Now, 
$$
\bm{G}' = \bm{F}_1 = \begin{bmatrix}
1
\end{bmatrix};
\bm{G}'' = \bm{F}_4 =  \begin{bmatrix}
1 & 0 & 0 & 0\\
1 & 1 & 0 & 0\\
1 & 0 & 1 & 0\\
1 & 1 & 1 & 1\\
\end{bmatrix}.
$$

Following Lemma \ref{lemma_l1}, $\Gamma_1 = \{3\}, \Gamma_2 = \{4\}$. Then $\bm{G}$ can be divided as $\bm{G}_{3, 3} = 1, \bm{G}_{\{1,2,4,5\}, \{1,2,4,5\}} = \bm{F}_4, \bm{G}_{3, \{1,2,4,5\}} = \bm{0}, \bm{G}_{4, 3} = 1,  \bm{G}_{\{1,2,5\}, 3} = \bm{0}$. Therefore,

$$
\bm{G} = \begin{bmatrix}
1 & 0  & 0  & 0  & 0 \\
1 & 1  & 0  & 0  & 0 \\
0 & 0  & 1  & 0  & 0 \\
1 & 0  & 1  & 1  & 0 \\
1 & 1  & 0  & 1  & 1
\end{bmatrix}.
$$
The factor graph representation is depicted in Fig. \ref{fig_LEFT1}.

\begin{figure}[!t]
\centering
    \includegraphics[width=0.4\textwidth]{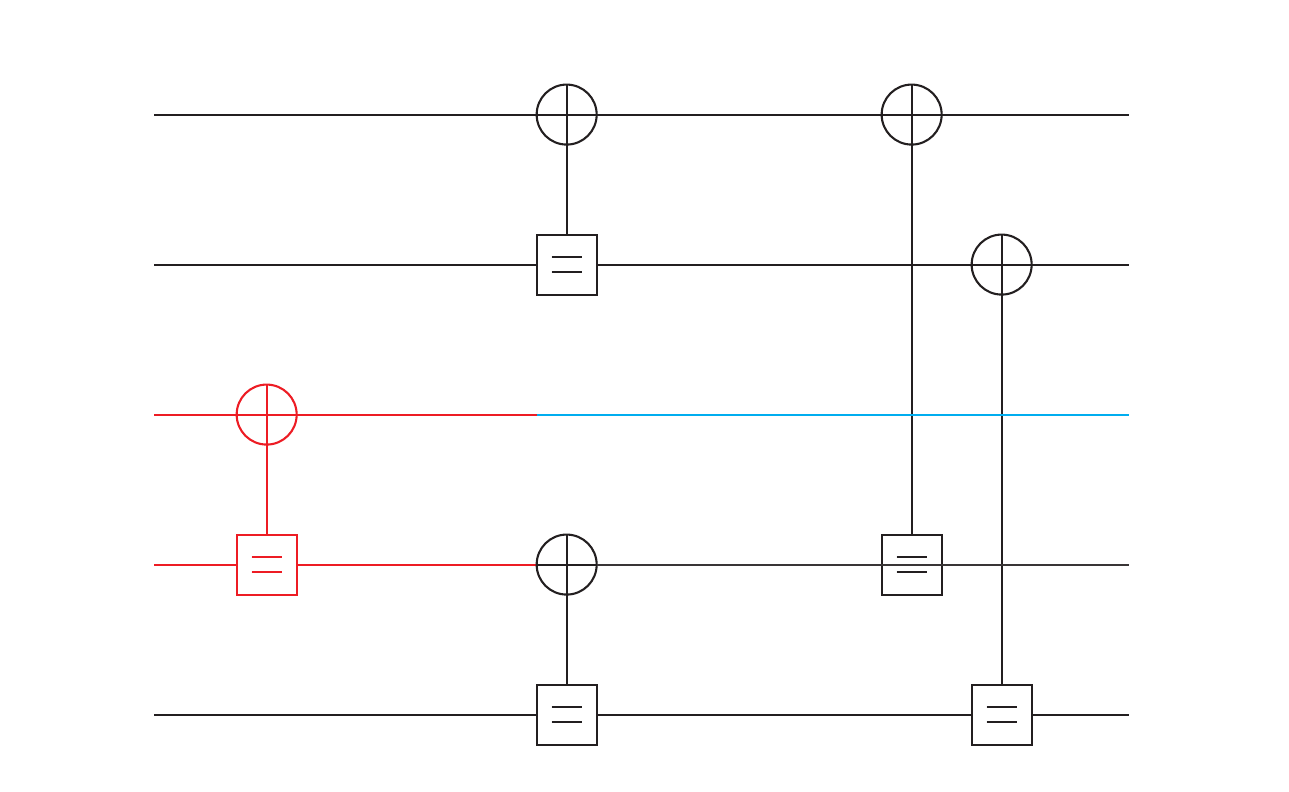}
    \caption{Left-stitched polar code $C = h_L(C', C'', \Gamma)$ coupled by $C'$ (blue) and $C''$ (black) at the position $\{3\}$.}
  \label{fig_LEFT1}
\end{figure}

\end{example}

\begin{remark}
The left-stitched polar code with $N' > N''$ is equivalent to one with $N' \leq N''$ by swapping the status of the two codes. This means the $\gamma_i$-th channel of $C''$ is stitched to strengthen the $i$-th channel of $C'$ – a configuration equivalent to $h_L(C', C'', \Gamma)$ (See Fig. \ref{fig_N5K2GG} and Fig. \ref{fig_LEFT1} as examples). Thus, we need only consider cases where $N' \leq N''$.
\end{remark}

\subsubsection{Right-Stitched Polar Codes}

The right-stitched polar code is shown in Fig. \ref{fig_LR_RIGHT}, which encompasses two distinct cases:
   
If $N'\leq N''$,  let $\Gamma' =\{\gamma'_1,\gamma'_2,\dots,\gamma'_{N'}\}\subseteq [N'']$ be the position set with $\gamma'_1<\gamma'_2<\dots<\gamma'_{N'}$. The $i$-$th$ channel of $C'$ is stitched to strengthen the $\gamma'_i$-$th$ channel of $C''$ on the right. 

If $N'> N''$, let $\Gamma'' =\{\gamma''_1,\gamma''_2,\dots,\gamma''_{N''}\}\subseteq [N']$ be the position set with $\gamma''_1<\gamma''_2<\dots<\gamma''_{N''}$. The $\gamma''_i$-$th$ channel of $C'$ is stitched to strengthen the $i$-$th$ channel of $C''$ on the right. 

Define $\tau : [N''] \to [N]$ with $\tau(i) = i + N'$. Then the bits in $C'$ reside within $[N]$, and the bits in $C''$ occupy $[N'+1,N]$ via the mapping $\tau$. 

\begin{definition}
If $N'\leq N''$, the right-stitched polar code stitched with $C'$ and $C''$ at the position set $\Gamma'$ on the right is $C = h_{R}(C', C'', \Gamma') = C', \tau(C''), (1, \tau(\gamma'_1)), \dots (N', \tau(\gamma'_{N'}))$. 

If $N'> N''$, $C = h_{R}(C', C'', \Gamma'') = C', \tau(C''), (\gamma''_1, \tau(1), \dots (\gamma''_{N'}, \tau(N''))$.
\end{definition}

The generator matrix is provided by the Lemma \ref{lemma_r1} and Lemma \ref{lemma_r2}. 

\begin{lemma}\label{lemma_r1}
Let $\bm{u}$ be the message vector of $C = h_{R}(C', C'', \Gamma')$ with $N'\leq N''$. Denote $\bm{x}' = \bm{u}_1^{N'} \bm{G}'$ and $\bm{x}'' = \bm{u}_{N'+1}^{N} \bm{G}''$.  the codewords in $C$ are obtained as $(\bm{x}' \oplus  \bm{x}''_{\Gamma'} ) \bowtie_{[N']} \bm{x}''$. 

The generator matrix of $h_{R}(C', C'', \Gamma')$ is 
$$
\bm{G} = \begin{bmatrix}
\bm{G}' & \bm{0} \\
\bm{P}  & \bm{G}''
\end{bmatrix},
$$
where $\bm{P} = \bm{G}''_{[N''], \Gamma'}$.
\end{lemma}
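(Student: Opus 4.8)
The plan is to track the encoding process dictated by the coupling sequence $C = h_R(C', C'', \Gamma') = C', \tau(C''), (1, \tau(\gamma'_1)), \dots, (N', \tau(\gamma'_{N'}))$ applied to a message vector $\bm{u}$, exactly as in the proof of Lemma \ref{lemma_l1}. First I would split $\bm{u}$ into its first $N'$ coordinates and its last $N''$ coordinates. The subsequence $C'$ only touches coordinates in $[N']$, so after processing $C'$ those coordinates become $\bm{u}_1^{N'}\bm{G}'$ while the rest are untouched. Next, $\tau(C'')$ acts only on coordinates in $[N'+1, N]$ (via $\tau(i)=i+N'$), producing $\bm{u}_{N'+1}^N \bm{G}''$ there; since $\tau$ is an order-preserving bijection $[N'']\to[N'+1,N]$, this is just the $C''$-encoding placed on the last block. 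Write $\bm{x}' = \bm{u}_1^{N'}\bm{G}'$ and $\bm{x}'' = \bm{u}_{N'+1}^N\bm{G}''$ for these two partial results.

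The final segment of couplings, $(1,\tau(\gamma'_1)), \dots, (N', \tau(\gamma'_{N'}))$, each adds the $\tau(\gamma'_i)$-th coordinate to the $i$-th coordinate. Because the targets $1, \dots, N'$ are all distinct from the sources $\tau(\gamma'_1), \dots, \tau(\gamma'_{N'})$ (the former lie in $[N']$, the latter in $[N'+1, N]$), these operations commute and none of them modifies a coordinate used as a source by another. Hence coordinate $i\in[N']$ becomes $x'_i \oplus x''_{\gamma'_i}$, i.e. the first block becomes $\bm{x}' \oplus \bm{x}''_{\Gamma'}$, while the last block remains $\bm{x}''$. This gives the codeword $(\bm{x}'\oplus\bm{x}''_{\Gamma'})\bowtie_{[N']}\bm{x}''$, as claimed.

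For the generator matrix, I would substitute $\bm{u} = \bm{e}^{(i)}$ and read off the rows, just as in Lemma \ref{lemma_l1}. For $i\in[N']$: $\bm{x}' = \bm{g}'^{(i)}$ and $\bm{x}'' = \bm{0}$, so $\bm{g}^{(i)} = \bm{g}'^{(i)}\bowtie_{[N']}\bm{0}$, giving the top-left block $\bm{G}'$ and a zero top-right block. For $i = N'+j$ with $j\in[N'']$: $\bm{x}' = \bm{0}$ and $\bm{x}'' = \bm{g}''^{(j)}$, so $\bm{g}^{(i)} = \bm{g}''^{(j)}_{\Gamma'}\bowtie_{[N']}\bm{g}''^{(j)}$; the $[N']$-part is the row vector whose entries are $(\bm{g}''^{(j)})_{\gamma'_1}, \dots, (\bm{g}''^{(j)})_{\gamma'_{N'}}$, which is precisely the $j$-th row of $\bm{G}''_{[N''],\Gamma'}$, and the last $N''$ coordinates are $\bm{g}''^{(j)}$. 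Assembling these rows yields the block form with $\bm{P} = \bm{G}''_{[N''],\Gamma'}$.

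The only subtlety — and the step I would state carefully rather than wave through — is the commutation/non-interference argument for the final batch of couplings: one must verify that the source coordinates $\tau(\gamma'_i)$ are never themselves altered by any $(k, \tau(\gamma'_k))$ (true since all sources lie in $[N'+1,N]$ and all targets in $[N']$) and that the targets are distinct (true since $\gamma'_1 < \dots < \gamma'_{N'}$ forces the $i$'s to be distinct). Everything else is bookkeeping with $\bowtie$ and the identity $d(\bm{u}, \mathcal{V}\bowtie_A\mathcal{W}) = d(\bm{u}_A,\mathcal{V}) + d(\bm{u}_{[N]\backslash A},\mathcal{W})$ is not even needed here; it will matter only for the subsequent weight-spectrum analysis.
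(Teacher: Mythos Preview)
Your proposal is correct and follows essentially the same approach as the paper: trace the coupling sequence $C',\tau(C''),(1,\tau(\gamma'_1)),\dots,(N',\tau(\gamma'_{N'}))$ to obtain $(\bm{x}'\oplus\bm{x}''_{\Gamma'})\bowtie_{[N']}\bm{x}''$, then read off the rows of $\bm{G}$ by plugging in $\bm{u}=\bm{e}^{(i)}$ for $i\in[N']$ and $i\in[N'+1,N]$. Your added remark about non-interference of the final batch of couplings is a nice bit of extra care, but otherwise the argument is the same.
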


\begin{proof}
$\tau(C'') \circ C' \circ \bm{u} = \bm{x}' \bowtie_{[N']} \bm{x}''$. Next, $(i, \tau(\gamma'_i))$ adds the $\gamma'_i$-$th$ bit from $\bm{x}''$ to the $i$-$th$ bit in $\bm{x}'$. The final codeword then follows from this transformation.

For the generator matrix, take $\bm{u} = \bm{e}^{(i)}$. If $i \in [N']$, then $\bm{x}' = \bm{e}^{(i)} \bm{G}' = \bm{g}'^{(i)}$ and $\bm{x}'' = \bm{0}$, so $\bm{g}^{(i)} = \bm{g}'^{(i)} \bowtie_{[N']} \bm{0}$.

If $i \in [N'+1, N]$,  then $\bm{g}^{(i)} = \bm{g}''^{(i-N')}_{\Gamma'} \bowtie_{[N']} \bm{g}''^{(i-N')}$.
\end{proof}

\begin{example}\label{Ex:7}
Let $C$ be generated by stitching length-2 polar code $C'$ and length-3 shortened polar code $C''$ at the position $\Gamma' = \{1,3\}$. Here, 
$$
\bm{G}' = \begin{bmatrix}
1 & 0\\
1 & 1
\end{bmatrix};
\bm{G}'' = \begin{bmatrix}
1 & 0 & 0\\
1 & 1 & 0\\
1 & 0 & 1\\
\end{bmatrix}.
$$

Following Lemma \ref{lemma_r1}, 
$$
\bm{G} = \begin{bmatrix}
1 & 0  & 0  & 0  & 0 \\
1 & 1  & 0  & 0  & 0 \\
1 & 0  & 1  & 0  & 0 \\
1 & 0  & 1  & 1  & 0 \\
1 & 1  & 1  & 0  & 1
\end{bmatrix}.
$$
The factor graph representation is depicted in Fig. \ref{fig_RIGHT1}.

\begin{figure}[!t]
\centering
    \includegraphics[width=0.4\textwidth]{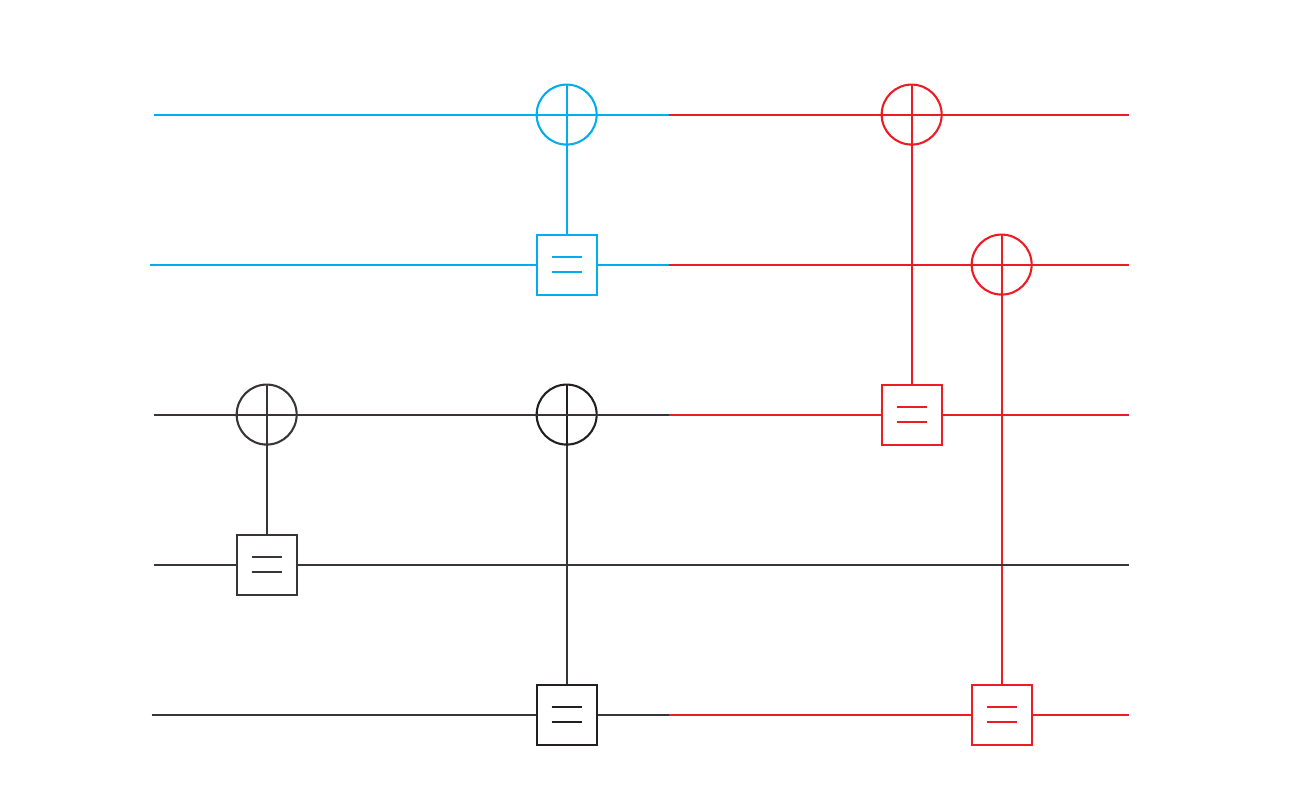}
    \caption{Right-stitched polar code $C$ stitched with $C'$ (blue) and  $C''$ (black) at the positions $\{1,3\}$.}
  \label{fig_RIGHT1}
\end{figure}
\end{example}

\begin{lemma}\label{lemma_r2}
Let $\bm{u}$ be the message vector of $C = h_{R}(C', C'', \Gamma'')$ with $N'> N''$. Denote $\bm{x}' = \bm{u}_1^{N'} \bm{G}'$ and $\bm{x}'' = \bm{u}_{N'+1}^{N} \bm{G}''$. If $N'> N''$, the codewords in $C$ are obtained as $ ((\bm{x}'_{\Gamma''} \oplus  \bm{x}'') \bowtie_{\Gamma''}  \bm{x}'_{[N']\backslash \Gamma''} ) \bowtie_{[N']}  \bm{x}''$.

The generator matrix of $h_{R}(C', C'',  \Gamma'')$ is
$$
\bm{G} = \begin{bmatrix}
\bm{G}' & \bm{0} \\
\bm{P}  & \bm{G}''
\end{bmatrix},
$$
where $\bm{P}_{[N''], \Gamma''} = \bm{G}'', \bm{P}_{[N''], [N'] \backslash \Gamma''} = \bm{0}$.
\end{lemma}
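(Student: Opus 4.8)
The plan is to mirror the proof of Lemma \ref{lemma_r1} almost verbatim, since the construction $h_R(C', C'', \Gamma'')$ differs only in the direction of the coupling pairs: instead of each pair $(i, \tau(\gamma'_i))$ adding a bit of $\bm{x}''$ into $\bm{x}'$, now each pair $(\gamma''_i, \tau(i))$ adds the $i$-th bit of $\bm{x}''$ into the $\gamma''_i$-th bit of $\bm{x}'$. First I would track the intermediate vector through the coupling sequence. After applying $C'$ and then $\tau(C'')$ to the message vector $\bm{u}$, the intermediate vector is $\bm{x}' \bowtie_{[N']} \bm{x}''$, exactly as in Lemma \ref{lemma_r1}. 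Then I would apply the remaining pairs $(\gamma''_1, \tau(1)), \dots, (\gamma''_{N''}, \tau(N''))$ in order: since the second coordinates $\tau(1), \dots, \tau(N'')$ are distinct and lie in $[N'+1, N]$ while the first coordinates lie in $[N']$, these operations do not interfere with one another, and their cumulative effect is to replace $\bm{x}'_{\gamma''_i}$ by $\bm{x}'_{\gamma''_i} \oplus \bm{x}''_i$ for each $i$, leaving the $C''$-block $\bm{x}''$ untouched. This yields the claimed codeword form $((\bm{x}'_{\Gamma''} \oplus \bm{x}'') \bowtie_{\Gamma''} \bm{x}'_{[N']\backslash\Gamma''}) \bowtie_{[N']} \bm{x}''$.

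Next I would compute the generator matrix by feeding in unit messages $\bm{u} = \bm{e}^{(i)}$. If $i \in [N']$, then $\bm{x}' = \bm{g}'^{(i)}$ and $\bm{x}'' = \bm{0}$, so the codeword is $\bm{g}'^{(i)} \bowtie_{[N']} \bm{0}$; this fills in the top block of $\bm{G}$ as $\bm{G}'$ in the first $N'$ columns and $\bm{0}$ elsewhere. If $i \in [N'+1, N]$, write $j = i - N' \in [N'']$; then $\bm{x}' = \bm{0}$, $\bm{x}'' = \bm{g}''^{(j)}$, and the codeword is $((\bm{0}_{\Gamma''} \oplus \bm{g}''^{(j)}) \bowtie_{\Gamma''} \bm{0}) \bowtie_{[N']} \bm{g}''^{(j)}$. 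The $C''$-part of this (the last $N''$ coordinates) is $\bm{g}''^{(j)}$, giving the $\bm{G}''$ block; the $C'$-part (the first $N'$ coordinates) is the vector supported on $\Gamma''$ whose $\Gamma''$-entries equal $\bm{g}''^{(j)}$, which is exactly the $j$-th row of the matrix $\bm{P}$ with $\bm{P}_{[N''], \Gamma''} = \bm{G}''$ and $\bm{P}_{[N''], [N']\backslash\Gamma''} = \bm{0}$. Assembling these rows produces the stated block form $\bm{G} = \begin{bmatrix} \bm{G}' & \bm{0} \\ \bm{P} & \bm{G}'' \end{bmatrix}$.

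I do not anticipate a genuine obstacle here; the only point requiring a little care is the bookkeeping that the coupling pairs $(\gamma''_i, \tau(i))$ truly act independently and in particular that applying them in the given order does not cause one pair's modification of $\bm{x}'$ to be re-used as an input to a later pair. This holds because each pair reads from a distinct second coordinate $\tau(i)$ whose value ($\bm{x}''_i$) is never altered — no operation in the sequence has $\tau(i)$ as a first coordinate — so the order is immaterial and the effect is the simultaneous XOR described above. The rest is the same unit-vector computation as in Lemma \ref{lemma_r1}, transposed to account for which block absorbs the correction term.
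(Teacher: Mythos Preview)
Your proposal is correct and follows essentially the same approach as the paper's proof: first observe that $\tau(C'') \circ C' \circ \bm{u} = \bm{x}' \bowtie_{[N']} \bm{x}''$, then apply the coupling pairs $(\gamma''_i, \tau(i))$ to obtain the codeword, and finally compute the generator matrix row by row via $\bm{u} = \bm{e}^{(i)}$. Your version is in fact more detailed than the paper's, which omits the bookkeeping about non-interference of the coupling pairs that you spell out explicitly.
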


\begin{proof}
Clearly, $\tau(C'') \circ C' \circ \bm{u} = \bm{x}' \bowtie_{[N']} \bm{x}''$. Next, $(\gamma''_i, \tau(i))$ adds the $i$-$th$ bit in $\bm{x}''$ to the $\gamma''_i$-$th$ bit in $\bm{x}'$.  The final codeword then follows from this transformation.

For the generator matrix, take $\bm{u} = \bm{e}^{(i)}$. If $i \in [N']$, then $\bm{x}' = \bm{e}^{(i)} \bm{G}' = \bm{g}'^{(i)}$ and $\bm{x}'' = \bm{0}$, so $\bm{g}^{(i)} = \bm{g}'^{(i)} \bowtie_{[N']} \bm{0}$.

If $i \in [N'+1, N]$, then $\bm{g}^{(i)} = (\bm{g}''^{(i-N')} \bowtie_{\Gamma''}  \bm{0} ) \bowtie_{[N']} \bm{g}''^{(i-N')}$. 
\end{proof}

\begin{example}
Let $C$ be generated by stitching length-3 punctured polar code $C'$ and length-2 polar code $C''$ at the position $\Gamma'' = \{1,2\}$. Here, 
$$
\bm{G}' = \begin{bmatrix}
1 & 0 & 0\\
1 & 1 & 0\\
1 & 1 & 1\\
\end{bmatrix};
\bm{G}'' = \begin{bmatrix}
1 & 0\\
1 & 1
\end{bmatrix}.
$$

Following Lemma \ref{lemma_r2}, 
$$
\bm{G} = \begin{bmatrix}
1 & 0  & 0  & 0  & 0 \\
1 & 1  & 0  & 0  & 0 \\
1 & 1  & 1  & 0  & 0 \\
1 & 0  & 0  & 1  & 0 \\
1 & 1  & 0  & 1  & 1
\end{bmatrix}.
$$
The factor graph representation is depicted in Fig. \ref{fig_RIGHT2}.

\begin{figure}[!t]
\centering
    \includegraphics[width=0.4\textwidth]{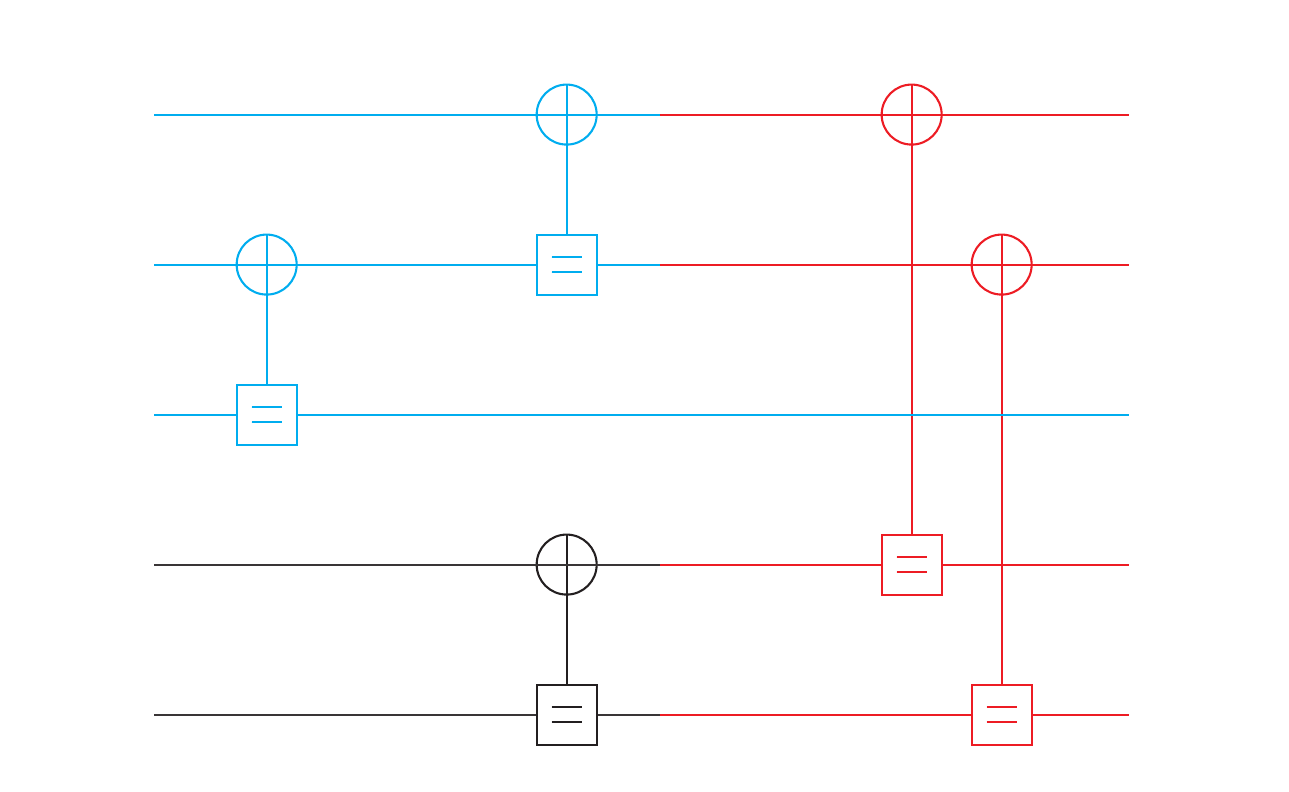}
    \caption{Right-stitched polar code $C$ stitched by $C'$ (blue) and $C''$ (black) at the positions $\{1,2\}$.}
  \label{fig_RIGHT2}
\end{figure}

\end{example}

\begin{remark}
When $N' = N''$, $C$ is the Plotkin sum \cite{Plotkin1960} of $C'$ and $C''$, that is,
$$
\bm{G} = \begin{bmatrix}
\bm{G}' & \bm{0} \\
\bm{G}''  & \bm{G}''
\end{bmatrix},
$$
\end{remark}

As shown, a length-$N$ regular polar code can be viewed as the stitching of two length-$N/2$ regular polar codes at positions $\{1,\dots,N/2\}$, either on the left or right side. Therefore, left- and right-stitched polar codes can be considered as a generalization of regular polar code.

\subsection{Encoding and Decoding Complexity for Left- and Right-Stitched Polar Codes}

We can utilize Algorithm \ref{alg:DIKDS} to encode and decode the left- and right-stitched polar codes. This involves combining the encoders and decoders for $C'$ and $C''$, along with an extra step at the stitched layer. Our analysis shows that the complexity of encoding and SC decoding for left- and right-stitched polar codes scales proportionally to the number of polarization steps.

\begin{proposition}
Let $s(N)$ denote the maximum number of $2\times 2$ transformations for a stitched polar code of length $N$. Then, $s(N)\leq  \frac{N}{2} \log N$. 
\end{proposition}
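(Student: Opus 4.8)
The plan is to prove the bound by strong induction on $N$, exploiting the recursive structure of left- and right-stitched polar codes (which, as noted in Section~\ref{sec:GPC} and afterwards, also subsumes regular polar codes: a length-$N$ regular polar code is the stitching of two length-$N/2$ regular polar codes, and $\tfrac N2\log N$ is precisely the transformation count of a regular length-$N$ polar code). Concretely, I would model $s$ through the recursion it satisfies and then verify that recursion never escapes the target bound.

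First I would isolate the recursion. A length-$1$ code contains no $2\times 2$ transformation, so $s(1)=0=\tfrac12\log 1$. For $N\ge 2$, every left- or right-stitched code of length $N$ is assembled from component codes $C'$, $C''$ of lengths $N'$, $N''$ with $N'+N''=N$, together with one \emph{stitching layer}. Inspecting the three constructions (left-stitched in Lemma~\ref{lemma_l1}; right-stitched with $N'\le N''$ in Lemma~\ref{lemma_r1}; right-stitched with $N'>N''$ in Lemma~\ref{lemma_r2}), the stitching layer consists in every case of exactly one $2\times 2$ transformation per channel of the \emph{shorter} component, i.e.\ $\min(N',N'')$ transformations. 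Hence
$$
s(N)\;\le\;\max_{\substack{N'+N''=N\\ N',N''\ge 1}}\Big(s(N')+s(N'')+\min(N',N'')\Big).
$$

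For the inductive step, fix a split $N=N'+N''$ and assume without loss of generality $N'\le N''$, so $\min(N',N'')=N'$ and $p:=N'/N\in(0,\tfrac12]$. Applying the induction hypothesis to $N'$ and $N''$ and using $N=N'+N''$, a one-line rearrangement yields the identity
$$
\frac N2\log N-\Big(\frac{N'}{2}\log N'+\frac{N''}{2}\log N''\Big)=\frac N2\,H(p),\qquad H(p):=-p\log p-(1-p)\log(1-p).
$$
Therefore $s(N')+s(N'')+N'\le\frac N2\log N$ follows as soon as $N'\le\frac N2 H(p)$, that is, as soon as the purely numerical inequality $H(p)\ge 2p$ holds for all $p\in(0,\tfrac12]$.

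Finally I would settle this inequality. The binary entropy $H$ is concave on $[0,1]$, hence $g(p):=H(p)-2p$ is concave; since $g(0)=0$ and $g(\tfrac12)=1-1=0$, concavity forces $g\ge 0$ on $[0,\tfrac12]$, which is exactly what is needed. This closes the induction and gives $s(N)\le\frac N2\log N$. The only genuine content of the argument is the elementary bound $H(p)\ge 2p$ on $[0,\tfrac12]$, which I expect to be the crux but which is dispatched immediately by concavity; the fact that equality holds at $p=\tfrac12$ also explains why the balanced split — and therefore the regular polar code — attains the bound with equality. The one place needing a little care is the bookkeeping of the stitching layer, namely verifying that it contributes precisely $\min(N',N'')$ transformations in all three cases, which is direct from the definitions of $h_L$ and $h_R$.
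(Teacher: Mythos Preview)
Your proof is correct and follows the same inductive strategy as the paper: both set up the recursion $s(N)\le s(N')+s(N'')+\min(N',N'')$ from the stitching-layer count and then check that the induction hypothesis closes. The paper simply asserts that the right-hand side is maximized at $N'=N/2$ and substitutes, whereas you justify that step explicitly via the identity $\tfrac{N}{2}\log N-\bigl(\tfrac{N'}{2}\log N'+\tfrac{N''}{2}\log N''\bigr)=\tfrac{N}{2}H(p)$ and the concavity bound $H(p)\ge 2p$ on $[0,\tfrac12]$; this is a nice, self-contained way to supply the missing verification and also transparently explains why the balanced (regular) split is the extremal case.
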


\begin{proof}
We prove this proposition by induction. Assume the code is constructed by combining two stitched polar code with length $N'$ and $N''$, regardless of whether the combination is performed  on the left or right. Then $s(N) \leq s(N')+s(N'')+\min(N',N'')$, where $s(N')$ and $s(N'')$ represent the maximum number of transformations in two component codes, respectively, and $\min(N', N'')$ accounts for the coupling between the two codes.

By induction hypothesis, we have
$$
s(N) \leq  \frac{N'}{2} \log N'+ \frac{N-N'}{2} \log (N-N')+\min(N',N-N').
$$ 
The expression is maximized when $N' = N/2$. Substituting $N' = N/2$ gives 
$$
s(N)\leq 2\cdot \frac{N}{4} \log \frac{N}{2}+\frac{N}{2} = \frac{N}{2}\log N.
$$
\end{proof}

Therefore, the complexity of encoding and SC decoding for stitched polar codes is $O(N \log N)$, which is the same as that of regular polar codes. 

\subsection{Weight Spectra of Left- and Right-Stitched Polar Codes}

In this subsection, we study the coset spectrum of left- and right-stitched polar codes. The coset spectrum $\bm{D} = (D_i)_{i=1}^N$ of $\bm{G}$ is defined as $D_i = d(\bm{g}^{(i)}, \text{span}\{\bm{g}^{(i+1)},\dots,\bm{g}^{(N)}\})$. The coset spectra of $C'$ and $C''$ are denoted as $\bm{D}'$ and $\bm{D}''$, respectively. The coset spectra of stitched polar codes are generally better than those of regular polar codes for designed code lengths and rates. This reflects that stitched polar codes possess a superior minimum distance and improved SC decoding performance \cite{Polyanskaya2020}.

\begin{example}
Consider the stitched polar code from Example \ref{Ex:7}, which has a coset spectrum $(1,2,1,3,3)$. For dimension $K=2$, the minimum distance is 3. This value is superior to the minimum distance of 2 found in the BRS polar code (with a spectrum of $(1,2,2,2,4)$) and the QUP polar code (with a spectrum of $(1,1,2,2,5)$).
\end{example}

Lemma \ref{lemma_l2} provides the coset spectrum of left-stitched polar codes. We again denote $\Gamma_1=\{\gamma_i + i - 1 \mid i\in [N']\}$. 

\begin{lemma}\label{lemma_l2}
For left-stitched polar code $C = h_{L}(C', C'', \Gamma)$, $D_{\gamma_t+ t-1} = \min(D'_t, D''_{\gamma_t})$, $D_{\gamma_t+ t} = D'_t + D''_{\gamma_t}$, and for other $i$, $D_i = D''_j$ with $j = i - |\{t\in \Gamma_1\mid t <i\}|$.
\end{lemma}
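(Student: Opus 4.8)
The plan is to read off the coset spectrum directly from the row structure of $\bm{G}$ established in Lemma \ref{lemma_l1}, together with the distance decomposition $d(\bm{u}, \mathcal{V}\bowtie_A \mathcal{W}) = d(\bm{u}_A, \mathcal{V}) + d(\bm{u}_{[N]\backslash A}, \mathcal{W})$ recorded in the Preliminaries. Recall from Lemma \ref{lemma_l1} that, writing $\Gamma_1 = \{\pi'(i) = \gamma_i + i - 1 \mid i \in [N']\}$ and $\Gamma_2 = \{\gamma_i + i \mid i \in [N']\}$, the rows of $\bm{G}$ are: $\bm{g}^{(\gamma_t + t - 1)} = \bm{g}'^{(t)} \bowtie_{\Gamma_1} \bm{0}$; $\bm{g}^{(\gamma_t + t)} = \bm{g}'^{(t)} \bowtie_{\Gamma_1} \bm{g}''^{(\gamma_t)}$; and for $i \notin \Gamma_1 \cup \Gamma_2$, $\bm{g}^{(i)} = \bm{0} \bowtie_{\Gamma_1} \bm{g}''^{(j)}$ with $j = i - |\{k \in \Gamma_2 \mid k < i\}|$. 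The key observation is that because $\pi'(i) < \pi'(i+1)$ and $\pi''$ is order-preserving, the relative order of rows in each of the two blocks is preserved: as $i$ ranges over $\{t : \gamma_t + t - 1 > i_0\}$ the corresponding $t$-indices are exactly $\{t : t > t_0\}$ for the appropriate $t_0$, and similarly on the $C''$ side. So I would set up a careful but elementary bookkeeping lemma: for any index $i$, the span of $\{\bm{g}^{(i+1)}, \dots, \bm{g}^{(N)}\}$ splits, under the $\bowtie_{\Gamma_1}$ decomposition, into a $C'$-part spanned by a tail $\{\bm{g}'^{(t')}, \dots\}$ and a $C''$-part spanned by a tail $\{\bm{g}''^{(\gamma')}, \dots\}$.

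Then I would treat the three cases in turn. For $i = \gamma_t + t$: the rows strictly after it are $\bm{g}'^{(t')} \bowtie \bm{0}$ for $t' > t$, $\bm{g}'^{(t')} \bowtie \bm{g}''^{(\gamma_{t'})}$ for $t' > t$, and the pure-$C''$ rows $\bm{0} \bowtie \bm{g}''^{(j)}$ with $j$ running over positions $> \gamma_t$ not in $\Gamma$ — together these span $(\text{span}\{\bm{g}'^{(t+1)},\dots\}) \bowtie_{\Gamma_1} (\text{span}\{\bm{g}''^{(\gamma_t+1)},\dots,\bm{g}''^{(N'')}\})$ — here one must check that adding $\bm{g}'^{(t')}\bowtie\bm{g}''^{(\gamma_{t'})}$ to the pure rows recovers all of $\bm{g}''$ from index $\gamma_t+1$ onward, which follows since $\{\gamma_{t'}: t'>t\}$ together with $[N'']\setminus(\Gamma\cup[\gamma_t])$ covers $[\gamma_t+1,N'']$. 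Since $\bm{g}^{(\gamma_t+t)} = \bm{g}'^{(t)}\bowtie_{\Gamma_1}\bm{g}''^{(\gamma_t)}$, the distance decomposition gives $D_{\gamma_t+t} = D'_t + D''_{\gamma_t}$. For $i = \gamma_t + t - 1$: the rows after it additionally include $\bm{g}^{(\gamma_t+t)} = \bm{g}'^{(t)}\bowtie\bm{g}''^{(\gamma_t)}$, so the $C'$-tail now starts at index $t$ (not $t+1$) while the $C''$-tail still starts at $\gamma_t+1$; since the row in question is $\bm{g}'^{(t)}\bowtie_{\Gamma_1}\bm{0}$, its distance to this span is $\min$ over choices of whether to cancel the $C'$-component using $\bm{g}'^{(t)}$: keeping it costs $d(\bm{g}'^{(t)},\text{span}\{\bm{g}'^{(t+1)},\dots\}) = D'_t$ on the $C'$ side and $0$ on the $C''$ side; cancelling it with $\bm{g}'^{(t)}\bowtie\bm{g}''^{(\gamma_t)}$ forces $\bm{g}''^{(\gamma_t)}$ into the $C''$-component, costing $d(\bm{g}''^{(\gamma_t)},\text{span}\{\bm{g}''^{(\gamma_t+1)},\dots\}) = D''_{\gamma_t}$. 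Hence $D_{\gamma_t+t-1} = \min(D'_t, D''_{\gamma_t})$. For $i \notin \Gamma_1\cup\Gamma_2$: the span of later rows is all of $\text{span}\{\bm{g}'^{(1)},\dots,\bm{g}'^{(N')}\} \bowtie_{\Gamma_1} \text{span}\{\bm{g}''^{(j+1)},\dots,\bm{g}''^{(N'')}\}$ (every $C'$ row survives since each index $\gamma_t+t-1$ or $\gamma_t+t$ exceeds $i$ iff... — actually one must verify the $C'$ rows with small $t$ are also still present; they are, because the pure-$C''$ row at position $i$ lies strictly between $\Gamma_1\cup\Gamma_2$ blocks and every $\gamma_t+t-1, \gamma_t+t$ is either $<i$ giving an already-passed row or... no: for $t$ with $\gamma_t+t-1<i$ those rows are gone). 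This case needs the most care; I would argue that the $C'$-component of the row $\bm{g}^{(i)} = \bm{0}\bowtie_{\Gamma_1}\bm{g}''^{(j)}$ is zero, so only the $C''$-tail matters, and since the pure-$C''$ rows after position $i$ together with whatever $C'$-coupled rows remain still generate exactly $\text{span}\{\bm{g}''^{(j+1)},\dots,\bm{g}''^{(N'')}\}$ as the $C''$-component, we get $D_i = d(\bm{g}''^{(j)}, \text{span}\{\bm{g}''^{(j+1)},\dots\}) = D''_j$.

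The main obstacle, and the step I would spend the most care on, is the spanning bookkeeping in the third case (and the analogous sub-claim in the first case): showing that when we collect the rows $\bm{g}^{(k)}$ for $k > i$ and project onto the two $\bowtie_{\Gamma_1}$-components, the resulting spans are precisely the "expected" tails $\text{span}\{\bm{g}'^{(t'')},\dots\}$ and $\text{span}\{\bm{g}''^{(\gamma''),\dots\}}$, with the correct starting indices. The subtlety is that the coupled rows $\bm{g}'^{(t)}\bowtie\bm{g}''^{(\gamma_t)}$ contribute to both components simultaneously, so one cannot treat the components independently when forming the span — one must instead argue that by taking suitable linear combinations (adding a pure-$C''$ row or a $\bm{0}\bowtie\bm{g}''$ row to cancel) the span genuinely factors as a product of the two tail-spans. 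I would isolate this as a preliminary claim: if $A\subseteq[N]$ and $\mathcal{S}$ is a set of vectors that includes $\{\bm{0}\bowtie_A \bm{w} : \bm{w}\in\mathcal{W}_0\}$ for some spanning set $\mathcal{W}_0$ of a subspace $\mathcal{W}$, and every vector of $\mathcal{S}$ has $A$-component in some subspace $\mathcal{V}$ and $[N]\setminus A$-component in $\mathcal{W}$, and the $A$-components span $\mathcal{V}$, then $\text{span}(\mathcal{S}) = \mathcal{V}\bowtie_A\mathcal{W}$. Armed with that claim the three cases become routine applications of the distance decomposition formula, and the index arithmetic ($j = i - |\{t\in\Gamma_1 \mid t<i\}|$, matching $\pi''$) is bookkeeping that I would state but not belabor.
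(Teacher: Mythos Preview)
Your approach is essentially the paper's: establish the factorization $\mathrm{span}\{\bm{g}^{(i+1)},\dots,\bm{g}^{(N)}\} = \mathcal{W}'^{(i-j)}\bowtie_{\Gamma_1}\mathcal{W}''^{(j)}$ (the paper simply asserts this in one line), then read off the three cases via the distance decomposition formula, handling $i=\gamma_t+t-1$ by splitting on whether the extra generator $\bm{g}^{(\gamma_t+t)}$ is used. One caution: your description of case~1 is slightly muddled --- you write that ``the $C'$-tail now starts at index $t$ while the $C''$-tail still starts at $\gamma_t+1$,'' but the span $\mathcal{W}^{(\gamma_t+t-1)}$ does \emph{not} factor as $\mathcal{W}'^{(t-1)}\bowtie_{\Gamma_1}\mathcal{W}''^{(\gamma_t)}$ (if it did, $\bm{g}'^{(t)}\bowtie\bm{0}$ would lie in it and $D_{\gamma_t+t-1}$ would be zero); rather it is $\mathcal{W}'^{(t)}\bowtie_{\Gamma_1}\mathcal{W}''^{(\gamma_t)}$ together with the single extra vector $\bm{g}'^{(t)}\bowtie\bm{g}''^{(\gamma_t)}$, which is exactly why the $\min$-over-two-choices computation you then carry out (and which the paper also does) is the right move.
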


\begin{proof}
Denote $\mathcal{W}^{(i)}$ to be the linear space span$\{\bm{g}^{(i+1)},\dots,\bm{g}^{(N)}\}$. $\mathcal{W}'^{(i)}$ and $\mathcal{W}''^{(i)}$ are defined similarly. Note that $\mathcal{W}^{(i)} = \mathcal{W}'^{(i-j)} \bowtie_{\Gamma_1} \mathcal{W}''^{(j)}$ for all $1\leq i\leq N$ and $j = i - |\{t\in \Gamma_1\mid t <i\}|$.

1) $i = \gamma_t + t - 1$ for some $1\leq t\leq N'$. From Lemma \ref{lemma_l1}, $\bm{g}^{(\gamma_t+ t-1)} = \bm{g}'^{(t)} \bowtie_{\Gamma_1} \bm{0}$, $\bm{g}^{(\gamma_t + t - 1)}\oplus \bm{g}^{(\gamma_t + t)} = \bm{0}  \bowtie_{\Gamma_1} \bm{g}''^{(\gamma_t)}$ and $\mathcal{W}^{(\gamma_t+ t)} = \mathcal{W}'^{(t)} \bowtie_{\Gamma_1} \mathcal{W}''^{(\gamma_t)}$. Thus, $d(\bm{g}^{(i)}, \mathcal{W}^{(i)}) = \min \{d(\bm{g}^{(\gamma_t + t - 1)}, \mathcal{W}^{(\gamma_t + t)}),  d(\bm{g}^{(\gamma_t + t - 1)}\oplus \bm{g}^{(\gamma_t + t)}, \mathcal{W}^{(\gamma_t + t)})\} = \min \{d(\bm{g}'^{(t)} \bowtie_{\Gamma_1} \bm{0}, \mathcal{W}'^{(t)} \bowtie_{\Gamma_1} \mathcal{W}''^{(\gamma_t)} ),  d(\bm{0}  \bowtie_{\Gamma_1} \bm{g}''^{(\gamma_t)}, \mathcal{W}'^{(t)} \bowtie_{\Gamma_1} \mathcal{W}''^{(\gamma_t)} )\} = \min(D'_t, D''_{\gamma_t})$.

2) $i = \gamma_t + t$ for some $1\leq t\leq N'$.  From Lemma \ref{lemma_l1}, $d(\bm{g}^{(i)}, \mathcal{W}^{(i)}) =  d(\bm{g}'^{(t)},  \mathcal{W}'^{(t)}) + d(\bm{g}''^{(\gamma_t)}, \mathcal{W}''^{(\gamma_t)}) = D'_t + D''_{\gamma_t}$.

3) Otherwise, $\bm{g}^{(i)} = \bm{0} \bowtie_{\Gamma_1} \bm{g}''^{(j)} $, then $d(\bm{g}^{(i)}, \mathcal{W}^{(i)}) =   d(\bm{0} \bowtie_{\Gamma_1} \bm{g}''^{(j)}, \mathcal{W}'^{(i-j)} \bowtie_{\Gamma_1} \mathcal{W}''^{(j)}) = D''_{j}$.
\end{proof}

Lemma \ref{lemma_r3} provides a bound for the coset spectrum of right-stitched polar codes when $N'\leq N''$.

\begin{lemma}\label{lemma_r3}
Denote $C = h_{R}(C', C'', \Gamma')$ and $N'\leq N''$. Then $D_i = D'_i$ for $i \leq N'$ and  $D_i \leq \min(D''_{i-N'}+ N', 2D''_{i-N'})$ for $i > N'$.
\end{lemma}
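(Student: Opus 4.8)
The plan is to mirror the structure of the proof of Lemma~\ref{lemma_l2}, working from the generator-matrix decomposition established in Lemma~\ref{lemma_r1}. Recall that for $C = h_R(C',C'',\Gamma')$ with $N'\le N''$ we have $\bm{G} = \begin{bmatrix}\bm{G}' & \bm{0}\\ \bm{P} & \bm{G}''\end{bmatrix}$ with $\bm{P}=\bm{G}''_{[N''],\Gamma'}$, so the first $N'$ rows are $\bm{g}^{(i)} = \bm{g}'^{(i)}\bowtie_{[N']}\bm{0}$ and the last $N''$ rows are $\bm{g}^{(N'+j)} = \bm{g}''^{(j)}_{\Gamma'}\bowtie_{[N']}\bm{g}''^{(j)}$.

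First I would handle $i\le N'$. Here $\mathcal{W}^{(i)} = \mathrm{span}\{\bm{g}^{(i+1)},\dots,\bm{g}^{(N)}\}$ splits as $\mathcal{W}'^{(i)}\bowtie_{[N']}\mathcal{V}$ where $\mathcal{V}$ is some subspace of $\FF_2^{N''}$ coming from the stitched rows (projected onto $[N]\setminus[N']$), but crucially the last $N''$ coordinates of $\bm{g}^{(i)}$ are all zero, so by the identity $d(\bm{u},\mathcal{V}\bowtie_A\mathcal{W}) = d(\bm{u}_A,\mathcal{V})+d(\bm{u}_{[N]\setminus A},\mathcal{W})$ stated in the preliminaries, $d(\bm{g}^{(i)},\mathcal{W}^{(i)}) = d(\bm{g}'^{(i)},\mathcal{W}'^{(i)}) + d(\bm{0},\mathcal{V}) = D'_i$. (The second term vanishes since $\bm{0}$ lies in any linear space.) This gives $D_i = D'_i$ exactly.

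Next, for $i = N'+j$ with $1\le j\le N''$, I would bound $D_i$ from above by exhibiting two convenient vectors in the coset $\bm{g}^{(i)} + \mathcal{W}^{(i)}$ and taking the smaller weight. Since $\mathcal{W}^{(i)} \supseteq \mathrm{span}\{\bm{g}^{(N'+j+1)},\dots,\bm{g}^{(N)}\}$, which equals (in the last $N''$ coordinates, with the first $N'$ coordinates being the $\Gamma'$-restriction) exactly a copy of $\mathcal{W}''^{(j)}$ living on both blocks in a correlated way, the natural candidates are: (a) $\bm{g}^{(i)}$ itself, whose weight is $wt(\bm{g}''^{(j)}_{\Gamma'}) + wt(\bm{g}''^{(j)})$; minimizing over the coset on the $C''$ side gives weight at most $N' + D''_{j}$ (bounding the $\Gamma'$-part trivially by $|\Gamma'|=N'$ and the full part by $D''_j$ after subtracting an appropriate combination of later rows — here one must check the later-row combination acts consistently on both blocks, which it does because $\bm{P}$ is literally the $\Gamma'$-restriction of $\bm{G}''$); and (b) when $2D''_{j}\le N'+D''_j$, one instead notes that adding a suitable codeword makes the first block cancel against itself, yielding weight $\le 2D''_j$. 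Taking the minimum of these two yields $D_i \le \min(D''_{i-N'}+N', 2D''_{i-N'})$.

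The main obstacle I anticipate is the careful bookkeeping in step (a)/(b): one must verify that a linear combination $\sum_{k>j}c_k\bm{g}''^{(k)}$ realizing $d(\bm{g}''^{(j)},\mathcal{W}''^{(j)})=D''_j$ on the $C''$ block, when lifted to $\bm{G}$ via the rows $\bm{g}^{(N'+k)}$, acts on the first $N'$ coordinates precisely as the $\Gamma'$-restriction $\sum_{k>j}c_k\bm{g}''^{(k)}_{\Gamma'}$, so that the first block becomes $(\bm{g}''^{(j)} + \sum c_k \bm{g}''^{(k)})_{\Gamma'}$, whose weight is at most $\min(N', D''_j)$ — giving the $D''_j + N'$ bound — while for the $2D''_j$ bound one uses that if two length-$N''$ vectors agree then their $\Gamma'$-restrictions agree, so choosing the combination so the full-block part has weight $D''_j$ forces the $\Gamma'$-part to have weight $\le D''_j$ as well. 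Unlike Lemma~\ref{lemma_l2}, only an inequality (not equality) is claimed, which is what makes the trivial bound $|\Gamma'|=N'$ on the cross-block term acceptable; I would remark that equality can fail precisely because the $\Gamma'$-restriction and the identity map on $[N']$ are tied together and cannot be optimized independently.
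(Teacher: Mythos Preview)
Your treatment of the case $i>N'$ is correct and essentially identical to the paper's: lift a combination realizing $D''_{i-N'}$ on the $C''$ block, then bound the first block either trivially by $|\Gamma'|=N'$ or by the subvector inequality $wt(\bm{v}_{\Gamma'})\le wt(\bm{v})$ to get $\le D''_{i-N'}$, yielding both terms of the $\min$.

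The case $i\le N'$, however, has a genuine gap. The space $\mathcal{W}^{(i)}$ does \emph{not} split as a product $\mathcal{W}'^{(i)}\bowtie_{[N']}\mathcal{V}$. Indeed, the rows $\bm{g}^{(N'+j)}=\bm{g}''^{(j)}_{\Gamma'}\bowtie_{[N']}\bm{g}''^{(j)}$ have their two blocks tied together; since $\bm{G}''$ is invertible, the projection of $\mathcal{W}^{(i)}$ onto the first $N'$ coordinates is already all of $\FF_2^{N'}$, so if your product claim held, $\mathcal{W}^{(i)}$ would have dimension $N$, which is false for $i\ge 1$. Consequently the identity $d(\bm{u},\mathcal{V}\bowtie_A\mathcal{W})=d(\bm{u}_A,\mathcal{V})+d(\bm{u}_{[N]\setminus A},\mathcal{W})$ is inapplicable, and your derivation of $D_i=D'_i$ collapses. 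Only the easy direction $D_i\le D'_i$ survives (take the contribution from rows $>N'$ to be zero).

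For the missing inequality $D_i\ge D'_i$ the paper argues as follows: write any $\bm{w}\in\mathcal{W}^{(i)}$ as $\bm{w}'+\bm{w}''$ with $\bm{w}'$ in the span of rows $i+1,\dots,N'$ and $\bm{w}''$ in the span of rows $N'+1,\dots,N$. Then
\[
d(\bm{g}^{(i)},\bm{w})
= d\bigl(\bm{g}'^{(i)}\oplus\bm{w}'_{[N']},\,\bm{w}''_{[N']}\bigr)+wt\bigl(\bm{w}''_{[N'+1,N]}\bigr)
\ge wt\bigl(\bm{g}'^{(i)}\oplus\bm{w}'_{[N']}\bigr)-wt\bigl(\bm{w}''_{[N']}\bigr)+wt\bigl(\bm{w}''_{[N'+1,N]}\bigr)
\ge D'_i,
\]
where the last step uses the same subvector observation you invoke in part~2 (namely $\bm{w}''_{[N']}=\bm{P}$-combination is the $\Gamma'$-restriction of $\bm{w}''_{[N'+1,N]}$, so $wt(\bm{w}''_{[N']})\le wt(\bm{w}''_{[N'+1,N]})$). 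In short, the ingredient you already identified for the second case is exactly what is needed to repair the first.
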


\begin{proof}
Denote $\bm{w}$ to be a vector in the span$\{\bm{g}^{(i+1)},\dots,\bm{g}^{(N)}\}$. 

1) When $i\leq N'$, denote $\bm{w} = \bm{w}' + \bm{w}''$ where $\bm{w}' \in$ span$\{\bm{g}^{(i+1)},\dots,\bm{g}^{(N')}\}$ and $\bm{w}'' \in$ span$\{\bm{g}^{(N'+1)},\dots,\bm{g}^{(N)}\}$.

Now, $d(\bm{g}^{(i)}, \bm{w}) =  d(\bm{g}^{(i)}\oplus \bm{w}', \bm{w}'') = d(\bm{g}^{(i)}\oplus\bm{w}', \bm{w}''_{[N']} \bowtie_{[N']}  \bm{0}) + d(\bm{0}, \bm{0}  \bowtie_{[N']}  \bm{w}''_{[N'+1,N]}) \geq wt(\bm{g}^{(i)}\oplus\bm{w}') \geq D'_i$. The second-to-last inequality holds since $\bm{w}''_{[N']}$ is a sub-vector of $\bm{w}''_{[N'+1,N]}$. The equalities are achieved for some $\bm{w}'$ such that $d(\bm{g}^{(i)}, \bm{w}') = D'_i$ and $\bm{w}'' = \bm{0}$. Thus, we have $D_i = D'_i$.

2) When $i > N'$, then $d(\bm{g}^{(i)}, \bm{w}) =  d(\bm{g}^{(i)}_{[N']}, \bm{w}_{[N']}) + d(\bm{g}^{(i)}_{[N'+1,N]}, \bm{w}_{[N'+1,N]})$. Since the rows of $\bm{P}$ in Lemma \ref{lemma_r1} are part of the rows of $\bm{G}''$, we have $d(\bm{g}^{(i)}_{[N']}, \bm{w}_{[N']})  \leq d(\bm{g}^{(i)}_{[N'+1,N]}, \bm{w}_{[N'+1,N]})$. Therefore, $d(\bm{g}^{(i)}, \bm{w}) \leq 2(\bm{g}^{(i)}_{[N'+1,N]}, \bm{w}_{[N'+1,N]})$. Since $\bm{g}^{(i)}_{[N'+1,N]} = \bm{g}''^{(i-N)}$ and $\bm{w}_{[N'+1,N]}\in$ span$\{\bm{g}''^{(N-i+1)},\dots,\bm{g}''^{(N'')}\}$, we have $D_i \leq 2D''_{i-N'}$. 

On the other hand, suppose there exists $\bm{w}$ such that $D''_{i-N'} = d(\bm{g}^{(i)}_{[N'+1,N]}, \bm{w}_{[N'+1,N]})$. Then $d(\bm{g}^{(i)}, \bm{w}) =  d(\bm{g}^{(i)}_{[N']}, \bm{w}_{[N']}) + D''_{i-N'}\leq N' + D''_{i-N'}$. 
\end{proof}

Lemma \ref{lemma_r4} provides the coset spectrum of right-stitched polar codes when $N'\geq N''$.

\begin{lemma}\label{lemma_r4}
Denote $C = h_{R}(C', C'', \Gamma'')$  and $N'\geq N''$. $D_i = D'_i$ for $i \leq N'$ and $D_i = 2D''_{i-N'}$ for $i > N'$.
\end{lemma}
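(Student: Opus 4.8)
The plan is to mirror the proof of Lemma \ref{lemma_r3}, but exploit the simpler block structure that arises when $N' \geq N''$. By Lemma \ref{lemma_r2}, the generator matrix has the form $\bm{G} = \begin{bmatrix}\bm{G}' & \bm{0} \\ \bm{P} & \bm{G}''\end{bmatrix}$ where the rows of $\bm{P}$ place a copy of each row of $\bm{G}''$ into the coordinates indexed by $\Gamma'' \subseteq [N']$ and zero elsewhere. Concretely, $\bm{g}^{(i)} = \bm{g}'^{(i)} \bowtie_{[N']} \bm{0}$ for $i \leq N'$, and $\bm{g}^{(i)} = (\bm{g}''^{(i-N')} \bowtie_{\Gamma''} \bm{0}) \bowtie_{[N']} \bm{g}''^{(i-N')}$ for $i > N'$.

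First I would handle the case $i \leq N'$. Any $\bm{w} \in \text{span}\{\bm{g}^{(i+1)},\dots,\bm{g}^{(N)}\}$ splits as $\bm{w}' + \bm{w}''$ with $\bm{w}'$ in the span of the later rows of the top block and $\bm{w}''$ in the span of the bottom-block rows. Writing $d(\bm{g}^{(i)}, \bm{w})$ coordinate-wise over $[N']$ and $[N'+1,N]$, the $[N'+1,N]$-part contributes $wt(\bm{w}''_{[N'+1,N]})$ (since $\bm{g}^{(i)}$ vanishes there), and the $[N']$-part is $d(\bm{g}'^{(i)} \oplus \bm{w}'_{[N']} \oplus (\text{projection of }\bm{w}''), \cdot)$; because the $\Gamma''$-restriction of $\bm{w}''_{[N'+1,N]}$ already appears inside $\bm{w}''$'s support on $[N']$, the total is lower-bounded by $wt(\bm{g}'^{(i)} \oplus \bm{w}') \geq D'_i$, with equality attained by taking $\bm{w}'' = \bm{0}$ and $\bm{w}'$ optimal. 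Hence $D_i = D'_i$.

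For $i > N'$, set $j = i - N'$. Every relevant $\bm{w}$ lies in $\text{span}\{\bm{g}^{(i+1)},\dots,\bm{g}^{(N)}\}$, which is a span of rows from the bottom block only (all top rows $\bm{g}^{(1)},\dots,\bm{g}^{(N')}$ have index $\leq N' < i$). So $\bm{w} = (\bm{v} \bowtie_{\Gamma''} \bm{0}) \bowtie_{[N']} \bm{v}$ for some $\bm{v} \in \text{span}\{\bm{g}''^{(j+1)},\dots,\bm{g}''^{(N'')}\}$, and $\bm{g}^{(i)} \oplus \bm{w}$ has the same shape with $\bm{v}$ replaced by $\bm{g}''^{(j)} \oplus \bm{v}$. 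Its weight is $wt(\bm{g}''^{(j)} \oplus \bm{v})$ on the $[N'+1,N]$ block plus $wt((\bm{g}''^{(j)} \oplus \bm{v})_{\Gamma''})$ on the $\Gamma''$ coordinates of the top block — i.e. exactly $wt(\bm{g}''^{(j)} \oplus \bm{v}) + wt((\bm{g}''^{(j)} \oplus \bm{v})|_{\Gamma''})$. Minimizing over $\bm{v}$, this does not decouple per-coordinate in the obvious way, so I would instead observe that the minimizing $\bm{v}$ for $d(\bm{g}''^{(j)},\cdot)$ on the bottom block also minimizes the $\Gamma''$-contribution: since both terms are weights of the \emph{same} vector $\bm{g}''^{(j)} \oplus \bm{v}$ and the $\Gamma''$-part is a sub-weight of the full weight, choosing $\bm{v}$ with $wt(\bm{g}''^{(j)} \oplus \bm{v}) = D''_j$ makes the bottom term $D''_j$; one then needs that the $\Gamma''$-sub-weight of this particular optimal difference equals $D''_j$ too, which forces $\text{supp}(\bm{g}''^{(j)} \oplus \bm{v}) \subseteq \Gamma''$.

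The main obstacle is precisely this last point: establishing that for the coset-minimizing vector the full support lies in $\Gamma''$, so that doubling is exact rather than merely an upper bound. I expect this to follow because $\bm{g}''^{(j)} \oplus \bm{v}$ ranges over the coset $\bm{g}''^{(j)} + \text{span}\{\bm{g}''^{(j+1)},\dots,\bm{g}''^{(N'')}\}$, and the quantity being minimized is $wt(\cdot) + wt(\cdot|_{\Gamma''}) \leq 2\,wt(\cdot)$ with equality iff the support is inside $\Gamma''$; I would argue that the structure of $\bm{P}$ in Lemma \ref{lemma_r2} — where $\bm{P}_{[N''],\Gamma''} = \bm{G}''$ is a full copy — guarantees, together with how $\Gamma''$ was chosen in the right-stitched construction (the positions $\gamma''_1<\dots<\gamma''_{N''}$ picked so that the minimum-weight representatives of $C''$'s cosets are supported on the stitched coordinates), that the minimizing coset representative indeed sits in $\Gamma''$. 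If that structural fact needs its own short argument, I would isolate it as a preliminary claim before invoking Lemma \ref{lemma_r2}, and otherwise the computation of $D_i = 2D''_{i-N'}$ follows directly by the Hamming-distance-additivity identity $d(\bm{u}, \mathcal{V}\bowtie_A \mathcal{W}) = d(\bm{u}_A,\mathcal{V}) + d(\bm{u}_{[N]\backslash A},\mathcal{W})$ stated in the preliminaries.
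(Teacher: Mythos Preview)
Your treatment of the case $i\leq N'$ is fine and matches the paper (which simply refers back to case 1 of Lemma~\ref{lemma_r3}).

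For $i>N'$, however, you have introduced a phantom obstacle by confusing the embedding in Lemma~\ref{lemma_r2} with the restriction in Lemma~\ref{lemma_r1}. The vector $\bm{g}''^{(j)}\oplus\bm{v}$ has length $N''$, while $\Gamma''\subseteq[N']$ is a set of size exactly $N''$; the expression $(\bm{g}''^{(j)}\oplus\bm{v})\bowtie_{\Gamma''}\bm{0}$ places the \emph{entire} length-$N''$ vector into the $\Gamma''$-positions of a length-$N'$ vector and pads with zeros. Its Hamming weight is therefore exactly $wt(\bm{g}''^{(j)}\oplus\bm{v})$, not a proper sub-weight. Your notation $(\bm{g}''^{(j)}\oplus\bm{v})_{\Gamma''}$ is ill-typed here --- $\Gamma''$ indexes into $[N']$, not into $[N'']$ --- and the ensuing worry about whether ``the support lies in $\Gamma''$'' is empty: there is no restriction happening, only a weight-preserving embedding.

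Once this is recognized, the proof is immediate. As the paper puts it, because the rows of $\bm{P}$ in Lemma~\ref{lemma_r2} replicate (rather than sub-select) those of $\bm{G}''$, one has $d(\bm{g}^{(i)}_{[N']},\bm{w}_{[N']})=d(\bm{g}^{(i)}_{[N'+1,N]},\bm{w}_{[N'+1,N]})$ for every $\bm{w}$ in the relevant span, so $d(\bm{g}^{(i)},\bm{w})=2\,d(\bm{g}''^{(j)},\bm{v})$ identically, and minimizing over $\bm{v}$ gives $D_i=2D''_{i-N'}$ with no further argument needed. The contrast with Lemma~\ref{lemma_r3} is precisely that there $\bm{P}=\bm{G}''_{[N''],\Gamma'}$ is a column selection, which only yields an inequality; here $\bm{P}$ carries a full copy of $\bm{G}''$, which yields equality.
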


\begin{proof}
Let $\bm{w}$ be some vector in the span$\{\bm{g}^{(i+1)},\dots,\bm{g}^{(N)}\}$.

1) When $i\leq N'$, the proof is similar with the case 1) in Lemma \ref{lemma_r3}.

2) When $i > N'$, since the rows of $\bm{P}$ in Lemma \ref{lemma_r2} replicate those of $\bm{G}''$, we have $d(\bm{g}^{(i)}_{[N']}, \bm{w}_{[N']})  = 2d(\bm{g}^{(i)}_{[N'+1,N]}, \bm{w}_{[N'+1,N]})$. Therefore,  $D_i = 2D''_{i-N'}$. 

\end{proof}

\section{Construction of Stitched Polar Codes} \label{sec:GD}

In the previous section, we introduced the general definition of stitched polar codes. In this section, we present construction methods for stitched polar codes that exhibit superior performance. 

\subsection{Recursive Construction Algorithms of Right-Stitched Polar Codes}

To construct a right-stitched polar code $C = h_R(C', C'', \Gamma)$, we must select the upper branch $C'$, the lower branch $C''$, and the stitching positions $\Gamma$. The construction is recursive, leveraging previously built shorter codes for $C'$ and $C''$. In practice, we have observed that the specific choice of stitching positions has a negligible impact on overall performance. Therefore, to simplify the construction process and reduce complexity, we always assume $\Gamma$ consists of the first $\min\{N',N''\}$ bits.

Specifically, when constructing a right-stitched polar code $C_{N, K}$ of length $N$ and dimension $K$, we iterate over the code with all the length $N'$ and dimension $K'$ for the upper branch, while the corresponding parameters for the lower branch are $N - N'$ and $K-K'$. Then we retrieve pre-constructed codes $C_{N', K'}$ and $C_{N - N', K - K'}$ and stitch these two codes to form a new code. This process is repeated across multiple $N,K$. We apply DE/GA to compute channel reliability and error probabilities for all generated codes, ultimately selecting the optimal code $C_{N, K}$. Algorithm \ref{alg:CON1} formalizes this procedure.

\begin{figure}[!t]
\begin{algorithm}[H]
\caption{stitched\_polar\_code\_recursive\_partition($M$)}
\begin{algorithmic}[1]\label{alg:CON1}

\renewcommand{\algorithmicrequire}{\textbf{Input:}}
\renewcommand{\algorithmicensure}{\textbf{Output:}}
\REQUIRE Maximum code length $M$.
\ENSURE Code family $\mathcal{C}=\{C_{N,K}\}_{1\leq N\leq M, 0\leq K\leq N}$.
\STATE Initialize $C_{1,0}\gets [1]$ and $C_{1,1}\gets [1]$;
\FOR{$N=2$ to $M$}
\FOR{$K=0$ to $N$}
\STATE $e_{\min} \gets 1$;
\FOR{$N'=1$ to $N-1$}
\FOR{$K'=0$ to $N'$}
\STATE $C\gets h_R(C_{N',K'}, C_{N-N',K-K'}, \Gamma)$ with $\Gamma = [\min\{N', N-N'\}]$ ;
\STATE Calculate error probability $e$ of the $C$ through density evolution;
\IF{$e < e_{\min}$}
\STATE $C_{N,K}\gets C$;
\STATE $e_{\min}\gets e$;
\ENDIF
\ENDFOR
\ENDFOR
\ENDFOR
\ENDFOR
\end{algorithmic}
\end{algorithm}
\end{figure}

The recursive construction process generates all codes with lengths $N$ up to the maximum length $M$ and dimensions $0\leq K\leq N$. For each length-$N$ code being generated, there are $N^2$ construction attempts considered, with each attempt requiring DE with complexity $O(N\log N)$. This results in  $O(N^3\log N)$ computational complexity per code. Consequently, the overall complexity of the algorithm is $O(M^5\log M)$. The construction is an offline process, and the resulting codes can be saved for later use.

\begin{remark}
Note that left-stitched polar codes can also be built recursively. However, a key challenge is that well-performing short code constructions from one recursive iteration may not remain optimal for subsequent iterations. In order to obtain superior long codes, a large number of code configurations must be saved during the offline code construction iterations. Although the encoding and decoding complexity remain low at $O(N \log N)$, the construction complexity is considerably higher than the proposed right-stitched polar codes.
\end{remark}

\subsection{Partially Stitched Polar Codes}

Stitched polar codes can also be combined with regular polar codes. In this subsection, we propose an integration of stitched polar codes with regular construction. Initially, the family of stitched polar codes $\mathcal{C}$ with lengths up to $M=2^s$ is generated using Algorithm \ref{alg:CON1}. For code length $N>M$, we employ BRS regular polar codes with mother code length $N_0 = 2^{\lceil \log N \rceil}$. After $\log N_0 - s$ stages of polarization, the BRS polar code is partitioned into $\frac{N_0}{2^s}$ outer code sub-blocks, each with length $\frac{M}{2}\leq N_i\leq M$ and number of information bits $K_i$. Performance improvements are achieved by substituting all outer codes with $C_{N_i,K_i}\in \mathcal{C}$. The selection of BRS is motivated by the balanced lengths and polarization stages in each sub-block. While our initial design is based on BRS, other rate-matching techniques may also yield good performance. This hybrid structure is termed the \emph{$M$-partially stitched polar code}.

\subsubsection{Encoding and Decoding}

The encoding and SC decoding for $M$-partially stitched polar codes can be directly adopted from Section \ref{sec:GPC}, combined with regular encoding and SC decoding algorithm. The encoding process is  illustrated in Fig. \ref{fig_encode2}. First, codewords for all stitched sub-blocks are generated. Following this, the encoding results are zero-padded according to the rate-matching pattern.  The subsequent encoding procedure proceeds identically to the regular case.

\begin{figure*}[!t]
\centering
\includegraphics[width=0.95\textwidth, trim = 0 80 0 150, clip]{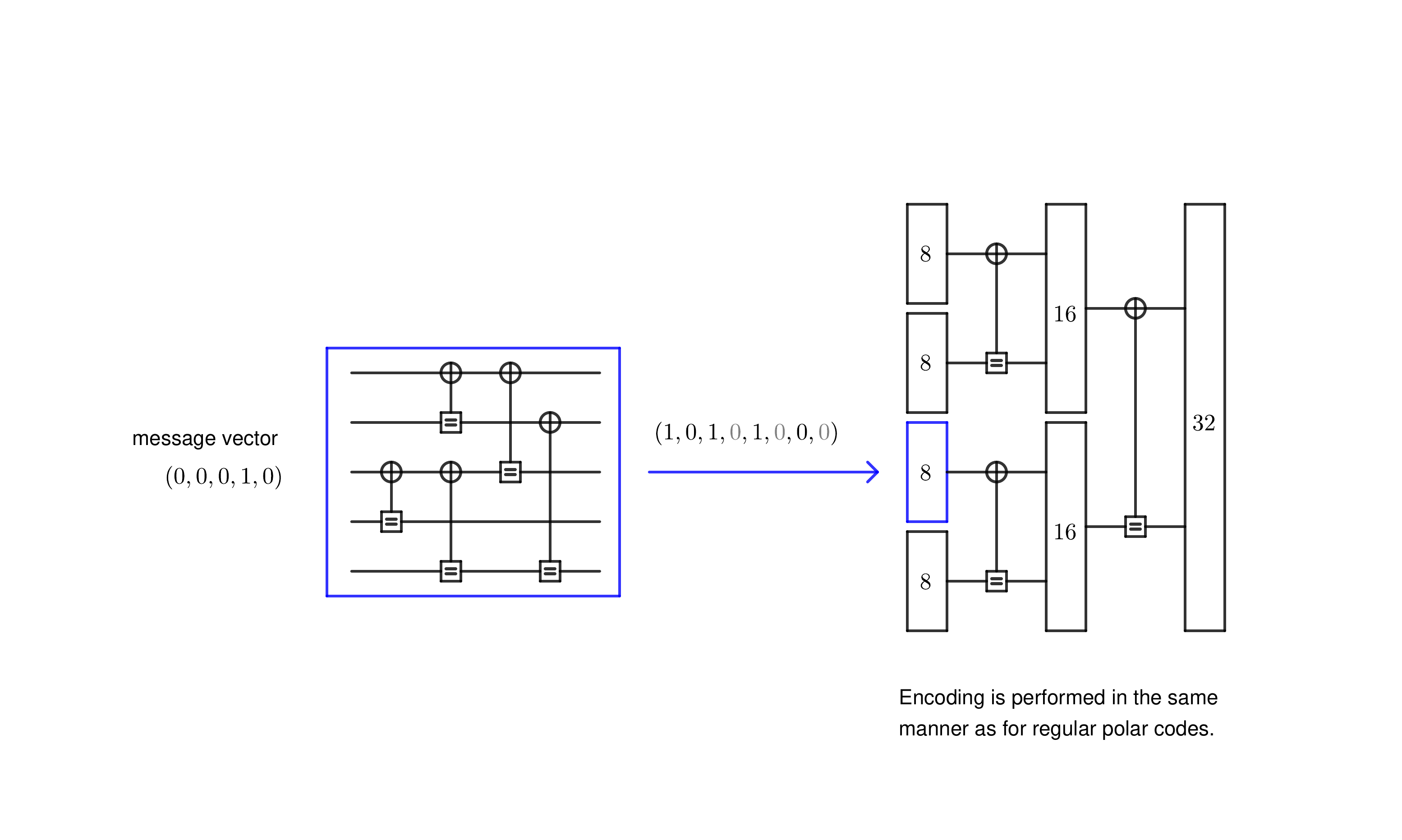}
\caption{The encoding process of 8-partially stitched polar code.}
\label{fig_encode2}
\end{figure*}

The SC decoding process is illustrated in Fig. \ref{fig_decode2}. After the sequential f/g-operations in regular polar codes, the decoder produces input LLRs for the stitched sub-blocks. The specific SC decoder according to the stitched sub-blocks is then applied. Finally, hard decisions are generated in the same manner as in the regular scenario.

\begin{figure*}[!t]
\centering
\includegraphics[width=0.95\textwidth, trim = 0 40 0 40, clip]{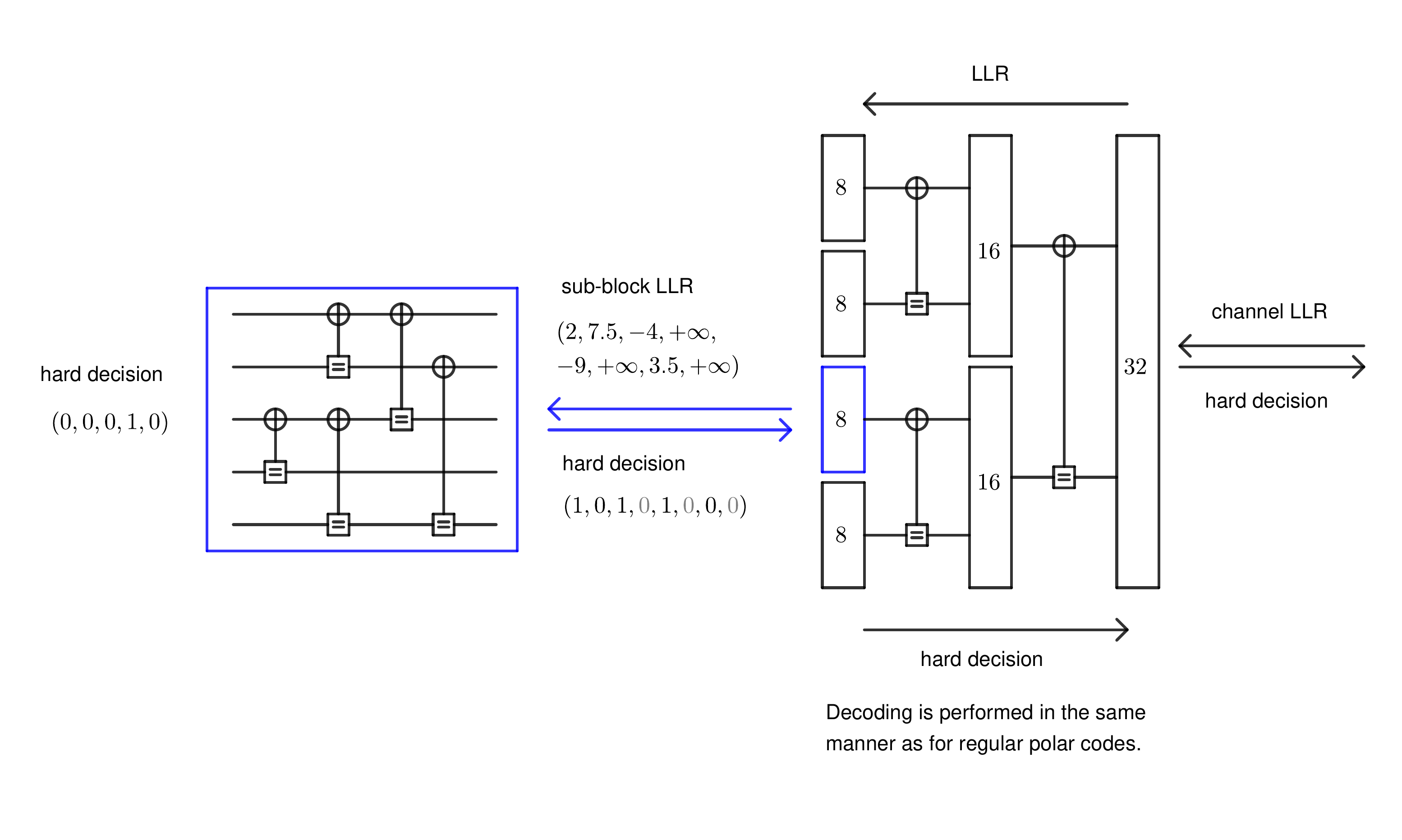}
\caption{The decoding process of 8-partially stitched polar code.}
\label{fig_decode2}
\end{figure*}

\subsubsection{Information Bits Allocation}

In partially stitched polar codes, we stitch multiple stitched polar sub-blocks to create a longer code. The code lengths and stitched locations of sub-blocks are determined by the original BRS polar code. However, it is vital to allocate information bits to the stitched sub-blocks. This task is non-trivial due to the many possible configurations of outer codes. A greedy information bit allocation between sub-blocks based on performance evaluation of these outer codes is outlined in Algorithm 4. Subsequently, the generator matrix and information set are derived from $\mathcal{C}$ based on the allocated information bits per sub-block.

In Algorithm \ref{alg:CON2}, we sequentially allocate information bits, one at a time, to the $\frac{N_0}{2^s}$ sub-blocks. In each step, an information bit is assigned to one of the sub-blocks to maximize the overall successful decoding probability. To this end, the sub-blocks' successful decoding probabilities for all $(N_i,K_i)$ are pre-computed as follows. We first compute the reliability via DE for $\log N_0-s$ stages, stopping before the stitched sub-blocks (line 3), and then calculate the successful decoding probability $p(N_i,K_i)$ fro all $K_i$ from 1 to $N_i$ (lines 6-8). In particular, we allocate an additional information bit to the sub-block that maximizes $p(N_i,K_i+1)/p(N_i,K_i)$, namely the relative successful decoding probability (lines 12-13).

\begin{figure}[!t]
\begin{algorithm}[H]
\caption{stitched\_polar\_code\_rate\_allocation($N, K, \mathcal{C}$)}
\begin{algorithmic}[1]\label{alg:CON2}

\renewcommand{\algorithmicrequire}{\textbf{Input:}}
\renewcommand{\algorithmicensure}{\textbf{Output:}}
\REQUIRE code length $N$ and dimension $K$, a code family $\mathcal{C}=\{C_{N',K'}\}_{1\leq N'\leq 2^s, 0\leq K'\leq N'}$, physical channel capacity $C_{cap}$. 
\ENSURE The number of information bits $K_i$ for each sub-block.
\STATE $N_0 \gets 2^{\lceil \log N \rceil}$;
\STATE $N_i$ is the length of the $i$-$th$ sub-block;
\STATE Compute the reliability via DE for $\log N_0 - s$ stages;
\FOR{$i=1$ to $\frac{N_0}{2^s}$}
\STATE $p(N_i,0) \gets 1$;
\FOR{$K_i=1$ to $N_i$}
\STATE Calculate the probability of successful SC decoding $p(N_i,K_i)$ for code $C_{N_i,K_i}$ via DE based on the input reliability of the $i$-$th$ sub-block;
\ENDFOR
\ENDFOR
\STATE $K_i\gets 0$ for all $1\leq i\leq \frac{N_0}{2^s}$;
\FOR{$i=1$ to $K$}
 \STATE $j \gets \arg\max_{1\leq i\leq  \frac{N_0}{2^s}} p(N_i,K_i+1)/p(N_i,K_i)$;
 \STATE $K_j \gets K_j + 1$;
\ENDFOR

\end{algorithmic}
\end{algorithm}
\end{figure}

The time complexity analysis of Algorithm \ref{alg:CON2} proceeds as follows:

1. Line 3: $O(\frac{N_0}{M} \log \frac{N_0}{M})$. 

2. Lines 6-8:  For the $i$-$th$ sub-block, the complexity of DE is $O(N_i\log N_i)$ with $N_i$ times per sub-block. Given there are $\frac{N_0}{M}$ sub-blocks, the total computation time is $O(N_0 M \log M)$. 

3.  Lines 11-13:  Sorting requires $O(\frac{N_0}{M}\log \frac{N_0}{M})$ operations when $i=1$. For $i=2$ to $K$, only one number changes per iteration, necessitating a position search with complexity $O( \log\frac{N_0}{M})$ per update. Thus, the complexity becomes $O((K+\frac{N_0}{M}) \log\frac{N_0}{M})$.

In conclusion, the overall complexity of Algorithm \ref{alg:CON2} is $O(N_0 M \log M + N_0 \log\frac{N_0}{M})$.

\section{Scaling Laws for Stitched Polar Codes} 

In this section, we dive into the theoretical analysis of the performance improvements of stitched polar codes. As previously demonstrated, rate-matched regular polar codes experience performance degradation when the code length exceeds powers of 2. Stitched polar codes, however, show consistent performance across all lengths. We will explore this notable behaviour by examining the number of un-polarized bit-channels. 

Due to the inherent diversity in the construction of stitched polar codes, we restrict our discussion to $2^s$-partially stitched polar codes, where $s$ is a positive integer. In these configurations, the stitched sub-blocks are precisely engineered to minimize the number of un-polarized bit-channels.

For a BEC with capacity $I$, denote $\alpha(N,I,[a, b]) = \frac{1}{N}|\{1\leq i\leq N\mid I_N^{(i)} \in [a, b])\}|$. Here, $I_N^{(i)}$ represents the capacity of the $i$-$th$ bit-channel in either a regular polar code or a stitched polar code with length $N$ over a BEC with capacity $I$. We assume that the scaling law (Assumption \ref{assum_scaling}) holds for all these codes, that is, $\alpha(N,I,[a, b]) = (c(I,a,b)+o(1))N^{-\frac{1}{\mu}}$, where $c(I,[a, b])$ and $\mu$ may vary depending on the specific code. For convenience, we will omit the $o(1)$ term in the subsequent discussion.

\begin{example}
Fig. \ref{fig_4lines} illustrates the polarization speed $\alpha(N,0.5,[0.01,0.99])$ for four different types of codes. The horizontal axis is the logarithm of the code length $m$, while the vertical axis represents the logarithm of $\alpha$. According to the scaling law, $\log r$ can be approximated as $\log c - \frac{1}{\mu} m$, where $c$ and $\mu$ are independent of code length. Therefore, the intercept in the figure corresponds to the constant $c$, and the slope is associated with the scaling exponent $\mu$.

In Fig. \ref{fig_4lines}, all four lines have the same slope, which indicates that their scaling exponents are equivalent. Notably, the intercept for the QUP regular polar code is larger than that for the original code with power-of-two length. This larger intercept signifies a larger constant term $c$, implying a greater number of un-polarized bit-channels and, consequently, a slower polarization speed. This phenomenon explains the performance loss when the code lengths switch at powers of two. On the other hand, the stitched polar codes-whether at lengths of $2^m$ or $\frac{33}{32}\cdot 2^m$-demonstrate nearly identical intercepts. Both of these intercepts are also lower than those for the regular polar codes. This finding strongly suggests superior polarization behavior in stitched codes.

\begin{figure}[!t]
\centering
\includegraphics[width=0.5\textwidth]{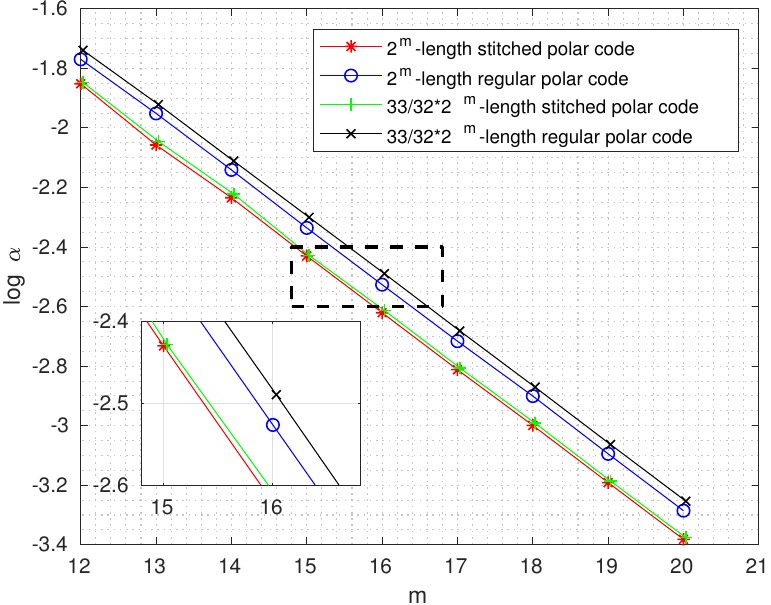}
\caption{The logarithm of the ratio of bit-channels with capacity in [0.01, 0.99] for regular versus stitched polar codes at different lengths over BEC(0.5).}

\label{fig_4lines}
\end{figure}

\end{example}

The proposed theoretical analysis will validate these simulation observations over BEC:

(1) Regular polar codes exhibit a larger constant term $c$ than stitched polar codes; 

(2) At switching points of mother code lengths, QUP and BRS polar codes show increased constant terms – with the QUP proof resting on an assumption; 

(3) Partially stitched codes maintain a stable constant term across different code lengths.

Let $N = 2^m$. For the length-$(1+2^{-t})N$ QUP and BRS polar codes, the associated  constants $c$ are denoted as $c_{QUP}^{(t)}(I, [a, b])$ and $c_{BRS}^{(t)}(I, [a, b])$, repetitively. For $2^s$-partially stitched polar codes, the constants are denoted as $c_{\text{STC},s}(I, [a, b])$ for length-$N$ and $c_{\text{STC},s}^{(t)}(I, [a, b])$ length-$(1+2^{-t})N$. If $a, b$ remain fixed throughout the proof, we abbreviate $c(I, [a, b])$ as $c(I)$. For $S = [a_1, b_1]\cup \dots\cup [a_t, b_t]$, we abbreviate $\sum_{i=1}^t c(I, [a_i,b_i])$ as $c(I,S)$.

\begin{remark}
In this section, the results are established for the BEC with capacity $I$. These results hold for all $0<a<a_0$ and $b_0<b<1$, where $a_0$ and $b_0$ are constants independent with $N$. That is, the results hold for $a$ sufficiently close to 0 and $b$ sufficiently close to 1. This captures the polarization behavior as the code length approaches infinity.
\end{remark}

Firstly, we prove that  regular polar codes and stitched polar codes share the exact same scaling exponent. Their performance gap lies in the constant term.

\begin{theorem}\label{thm_se}
The scaling exponents for original, QUP, BRS and stitched polar codes are the same. 
\end{theorem}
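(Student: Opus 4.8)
The plan is to avoid identifying the scaling exponent explicitly. Writing $A_N := |\{1\le i\le N : Z_N^{(i)}\in[a,b]\}|$ for the number of un-polarized bit-channels and $\lambda := 1-1/\mu_{\mathrm{reg}}$ for the exponent that Assumption~\ref{assum_scaling} attaches to ordinary length-$2^m$ polar codes over the BEC, it suffices to show that each of the remaining three families (QUP, BRS, and the $2^s$-partially stitched codes considered in this section) also has $A_N = \Theta(N^{\lambda})$. Indeed, combined with the scaling law assumed for that family, $A_N = (c+o(1))N^{1-1/\mu}$ with $c\in(0,\infty)$, this forces $1-1/\mu = \lambda$, i.e.\ $\mu = \mu_{\mathrm{reg}}$. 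So the whole argument reduces to matching two-sided polynomial bounds on $A_N$, with ``original'' polar codes serving as the baseline against which the other three are compared.

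I would handle the $2^s$-partially stitched codes first, as that case is the cleanest and drives the rest. For a power-of-two length $N=2^m$, the first $\log N - s$ polarization stages produce $N/2^s$ distinct intermediate channels, each used $2^s$ times and each feeding one stitched outer block of length $2^s$; over a BEC these $N/2^s$ intermediate channels are exactly the bit-channels of an \emph{ordinary polar code of length $N/2^s$}. If $\epsilon_1,\dots,\epsilon_{N/2^s}$ are their erasure probabilities, then the bit-channels of the whole code split over the outer blocks, so $A_N = \sum_{\tau} g_{C_\tau}(\epsilon_\tau)$, where $g_C(\epsilon)\in\{0,1,\dots,2^s\}$ counts the bit-channels of the fixed code $C$ over $\mathrm{BEC}(\epsilon)$ with Bhattacharyya parameter in $[a,b]$. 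Only finitely many outer codes $C$ occur, each of length $\le 2^s$, and for each of them every bit-channel Bhattacharyya parameter is a polynomial in $\epsilon$ equal to $0$ at $\epsilon=0$ and to $1$ at $\epsilon=1$; hence $g_C$ vanishes near $\epsilon=0$ and near $\epsilon=1$, and --- taking $a$ close enough to $0$ and $b$ close enough to $1$, which is exactly the regime of the statement --- there is a common interval $J\ni 1/2$ on which every such $g_C$ is $\ge 1$, together with a fixed $\delta'>0$ with $g_C\equiv 0$ outside $[\delta',1-\delta']$ for all such $C$. Summing gives $|\{\tau:\epsilon_\tau\in J\}| \le A_N \le 2^s\,|\{\tau:\epsilon_\tau\in[\delta',1-\delta']\}|$, and by Assumption~\ref{assum_scaling} applied to the length-$(N/2^s)$ ordinary polar code both outer counts equal $\Theta((N/2^s)^{\lambda}) = \Theta(N^{\lambda})$. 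The non-power-of-two lengths are the same, except that the skeleton is the length-$2^{\lceil\log N\rceil}$ BRS polar code, a handful of outer blocks are shortened to lengths $<2^s$ (still a finite family), and the intermediate channels are now those of a length-$(2^{\lceil\log N\rceil}/2^s)$ BRS polar code, which has exponent $\lambda$ by the BRS part of the theorem.

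For the QUP and BRS codes themselves I would compare with ordinary polar codes at the two adjacent power-of-two lengths $2^m<N'\le 2^{m+1}$. Over a BEC, a length-$N'$ rate-matched code is a length-$N_0=2^{m+1}$ polar transform in which $N_0-N'$ of the input channels are frozen to $\mathrm{BEC}(1)$ (puncturing) or $\mathrm{BEC}(0)$ (shortening); since $\mathrm{BEC}(0)\preceq\mathrm{BEC}(1-I)\preceq\mathrm{BEC}(1)$ and the polar transform is monotone under channel degradation, each bit-channel is coordinatewise comparable to the corresponding ordinary one. Feeding this into the recursive bit-reversal structure of the QUP and BRS patterns --- after one polarization stage such a code splits into two rate-matched codes of the same kind and shorter length on the $-$ and $+$ branches --- reduces $A_{N'}$ to the length-$2^m$ and length-$2^{m+1}$ ordinary counts up to a constant factor, so $A_{N'}=\Theta((N')^{\lambda})$.

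The hard part is the QUP lower bound. Puncturing makes the raw input \emph{strictly} worse than $\mathrm{BEC}(1-I)$, yet no $\mathrm{BEC}(\epsilon')$ with $\epsilon'<1$ is degraded with respect to the punctured input, so a QUP code cannot be sandwiched between two ordinary polar codes of the \emph{same} length; moreover the quasi-uniform pattern spreads the erasures so that no subtree of the transform carries only copies of $\mathrm{BEC}(1-I)$, which is the very property that made the sub-block decomposition work above. This is where the additional assumption invoked later in this section must enter --- consistent with the paper's own remark that the QUP constant-term result rests on an assumption --- whereas the BRS case, the partially stitched cases, and all the upper bounds need only the finite-length / bounded-outer-code estimate together with the scaling law for ordinary polar codes.
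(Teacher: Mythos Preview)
Your overall strategy --- assume the scaling law for each family with its own exponent and then pin that exponent to $\lambda$ by sandwiching the un-polarized count $A_N$ between two $\Theta(N^{\lambda})$ quantities built from ordinary polar codes --- is exactly the paper's. Your handling of the $2^s$-partially stitched codes (finite family of outer blocks, reduce to the $N/2^s$ intermediate channels of an ordinary polar code, then invoke Assumption~\ref{assum_scaling} on that shorter code) matches the paper's argument essentially verbatim, and your BRS upper bound via the length-$2^t$ sub-blocks is the same idea the paper uses.

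The substantive error is in your QUP discussion. You assert that the QUP \emph{lower} bound is the hard direction and that it must rest on the later Conjecture. Both claims are wrong. The paper handles QUP by splitting the length-$2N$ mother code at the first input-side stage into an upper (minus) branch and a lower (plus) branch. The upper branch is a length-$N$ polar transform whose first $(1-2^{-t})N$ effective channels are dead (the punctured position combined with a live one via $\boxminus$ stays dead) and whose last $2^{-t}N$ channels are copies of $W^{-}$, i.e.\ $\mathrm{BEC}$ at capacity $I^2$. Because $2^{-t}N$ is itself a power of two, the live portion of this branch is \emph{exactly} an ordinary length-$2^{-t}N$ polar code at capacity $I^2$, so $M_{(1+2^{-t})N,\mathrm{QUP}}\ge c(I^2)(2^{-t}N)^{\lambda}$ directly --- no conjecture needed. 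Your claim that ``no subtree of the transform carries only copies of $\mathrm{BEC}(1-I)$'' overlooks this: the relevant subtree carries only copies of $\mathrm{BEC}$ at capacity $I^2$, not $I$, but that is all the lower bound requires. Conjecture~\ref{CON_c} enters only in Corollary~\ref{thm_qup}, where the goal is to compare the \emph{constants} $c_{QUP}^{(t)}$ and $c$; it plays no role in Theorem~\ref{thm_se}. If anything is informal in the paper's QUP argument it is the \emph{upper} bound, where the lower-branch contribution is simply written as $c'N^{\lambda}$; your monotonicity-plus-recursion sketch is at the same level of detail there, so the two treatments are comparable on that side.
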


\begin{proof}
Denote $M_N$ to be the number of un-polarized bit-channels in regular polar codes with code length $N=2^m$ under BEC with capacity $I$, i,e., $M_N =  |\{1\leq i\leq N\mid I(W_m^{(i)}) \in [a, b])\}| = c(I) N^{\lambda}$. Let $M_{(1+2^{-t})N, \text{QUP}}$ be the number of un-polarized bit-channels in length-$(1+2^{-t})N$ QUP polar codes. Then $M_{(1+2^{-t})N, \text{QUP}} =  c(I^2) (2^{-t}N)^{\lambda} + c' N^{\lambda} = (c(I^2) 2^{-t\lambda} + c') N^{\lambda}$, where $c' N^{\lambda}$ is the number of un-polarized bit-channels in the lower branch of QUP polar codes. Thus, the scaling exponent for QUP polar codes remains unchanged. 

Let $M_{(1+2^{-t})N, \text{BRS}}$ be the number of un-polarized bit-channels in length-$(1+2^{-t})N$ BRS polar codes. Note that after $m-t$ polarization stages in a regular polar code, there are $2^{m-t}$ sub-blocks, each consisting of $2^t$ intermediate synthetic channels with identical capacity. Consider a length-$2^t$ sub-block in length-$N$ regular polar codes with input capacity $I_0$. When the code length is expanded to $(1+2^{-t})N$ as BRS pattern, a newly added channel is introduced in this sub-block. All $t+1$ intermediate channels (including the newly added one) then maintain intermediate capacities $I_0$, as illustrated in Fig. \ref{fig_BIV_PROOF}.

Now, if $I_0 < a_0 = \frac{a}{2^t+1}$ or $I_0 > b_0 = \frac{2^t+b}{2^t+1}$, all $2^t+1$ bit-channels must be polarized. The number of sub-blocks with input capacity in $[a_0,b_0]$ is equivalent to the number of bit-channels with capacity in $[a_0,b_0]$ in a length-$2^{m-t}$ regular polar codes, which is $c(I,[a_0,b_0])(2^{-t}N)^{\lambda}$. Then  $M_{N,\text{BRS}}\leq 2^t c(I,[a_0,b_0])(2^{-t}N)^{\lambda} = c(I,[a_0,b_0]) 2^{t(1-\lambda)} N^{\lambda}$.  Thus, the scaling exponent for BRS polar codes remains unchanged. The same proofs apply to partially stitched polar codes since the polarization of regular part is the same as that of BRS polar code.
 
Similarly, denote $M_{N,\text{STC}}$ to be the number of un-polarized bit-channels in length-$N$ $2^s$-partially stitched polar codes. As regular polar codes are special cases of stitched polar codes, we have $M_{N,\text{STC}} \leq M_N$. We replace outer codes with stitched polar codes for every length-$2^s$ sub-block. Suppose $a,b$ satisfy $a< \frac{1}{2^s}$ and $b>2^s a$ and $S = [2^s a, b]$. Then for $I\in S$, there is at least one un-polarized bit-channels in the sub-block. The number of such sub-blocks is $c(I,S)(2^{-s}N)^{\lambda}$, then we have $M_{N,\text{STC}}\geq c(I,S)(2^{-s}N)^{\lambda} \geq \frac{c(I,S)}{c(I)}2^{-s\lambda} M_N$. 
\end{proof}

Theorem \ref{thm_STC_p} shows that the constant term of stitched polar codes is smaller than that of regular polar codes.

\begin{theorem}\label{thm_STC_p}
$c_{\text{STC},s}(I) < c(I)$ for any $I\in (0,1)$.
\end{theorem}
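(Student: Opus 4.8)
The plan is to compare, stage by stage, the polarization of a length-$N$ $2^s$-partially stitched polar code against that of a regular polar code of the same length, and to exploit the fact that in the partially stitched code the final $s$ stages are replaced by an optimized stitched sub-block. Concretely, run $m-s$ stages of the (common) regular polarization. This yields $2^{m-s}$ sub-blocks, the $j$-th having some intermediate capacity $I_j$; the empirical distribution of the $I_j$'s over $[a',b']$ is governed by the scaling law with constant $c(I,[a',b'])$ and exponent $\lambda$ applied to a length-$2^{m-s}$ code. For the regular code, each such sub-block is itself a length-$2^s$ regular polar code, contributing $g_{\mathrm{reg}}(I_j):=|\{1\le i\le 2^s : I_{2^s}^{(i)}\in[a,b]\}|$ un-polarized bit-channels; for the stitched code it contributes $g_{\mathrm{STC}}(I_j)\le g_{\mathrm{reg}}(I_j)$, since regular sub-blocks are admissible in the search of Algorithm~\ref{alg:CON1} and the optimal one is chosen. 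Summing, $M_{N,\mathrm{STC}}=\sum_j g_{\mathrm{STC}}(I_j)$ and $M_N=\sum_j g_{\mathrm{reg}}(I_j)$.

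The next step is to pass to the limit. Because $I^{(i)}_{2^s}$ is a fixed continuous-in-$I_0$ function of the sub-block input capacity $I_0$ (for the BEC, a polynomial in $I_0$), both $g_{\mathrm{reg}}$ and $g_{\mathrm{STC}}$ are bounded functions on $[0,1]$ supported away from $\{0,1\}$: there are constants $0<a_0<b_0<1$ (depending only on $a,b,s$) such that $g_{\mathrm{reg}}(I_0)=g_{\mathrm{STC}}(I_0)=0$ whenever $I_0\notin[a_0,b_0]$ — if the sub-block input is extreme enough, every one of its bit-channels is already polarized. Hence $M_{N,\mathrm{STC}}=\sum_{j:\,I_j\in[a_0,b_0]}g_{\mathrm{STC}}(I_j)$, and by the scaling-law bookkeeping already used in the proof of Theorem~\ref{thm_se} each level set $\{I_0 : I^{(i)}_{2^s}(I_0)\in[a,b]\}$ contributes $c(I,\cdot)(2^{-s}N)^{\lambda}$ sub-blocks. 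Writing everything in terms of these level sets gives $c_{\mathrm{STC},s}(I)=2^{-s\lambda}\sum_{\text{level sets }S} g_{\mathrm{STC}}\text{-weight}(S)\,c(I,S)$ and the analogous identity for $c(I)$ with $g_{\mathrm{reg}}$. Since $g_{\mathrm{STC}}\le g_{\mathrm{reg}}$ pointwise, we get $c_{\mathrm{STC},s}(I)\le c(I)$ immediately.

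To upgrade "$\le$" to the strict "$<$" claimed in the theorem, I would exhibit a single intermediate capacity value $I_0\in(0,1)$ at which the optimized stitched sub-block strictly polarizes more bit-channels than the regular length-$2^s$ sub-block, i.e. $g_{\mathrm{STC}}(I_0)<g_{\mathrm{reg}}(I_0)$; this is exactly the content of the small-length examples in Section~II (e.g. the $N=5$ improvement from $0.66$ to $0.78$, or more pertinently the explicit optimal constructions for lengths $\le 8$ in Appendix~B, which at length $2^s$ beat the regular code). Because $I^{(i)}_{2^s}(I_0)$ moves continuously with $I_0$, the strict inequality $g_{\mathrm{STC}}<g_{\mathrm{reg}}$ persists on an open interval of $I_0$-values, which in turn has positive "level-set weight" under $c(I,\cdot)$ for every $I\in(0,1)$ (the scaling-law constant is strictly positive on every nondegenerate interval, by Assumption~\ref{assum_scaling}). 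Therefore the strict pointwise gap on an interval forces $c_{\mathrm{STC},s}(I)<c(I)$ for all $I\in(0,1)$.

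The main obstacle is the final strictness argument, and in particular making precise that the scaling-law constant $c(I,[a,b])$ is strictly positive for every nondegenerate $[a,b]\subseteq(0,1)$ and every $I\in(0,1)$, so that a strict improvement on an open set of intermediate capacities really does move the aggregate constant. Assumption~\ref{assum_scaling} asserts the limit lies in $(0,\infty)$, which gives exactly this positivity, but one must be careful that the relevant level sets $\{I_0: I^{(i)}_{2^s}(I_0)\in[a,b]\}$ are genuine nondegenerate intervals (or finite unions thereof) — this is where the explicit BEC recursion $z\mapsto(2z-z^2,\,z^2)$ and its strict monotonicity are used — and that the optimal stitched construction chosen by Algorithm~\ref{alg:CON1} at length $2^s$ is, for the relevant input capacities, strictly better rather than merely tied with the regular code; the Section~II / Appendix~B examples supply the needed witness.
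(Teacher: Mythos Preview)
Your proposal is correct and follows essentially the same approach as the paper: decompose into $2^{m-s}$ sub-blocks after the common $m-s$ regular stages, compare the un-polarized count per sub-block for the stitched versus regular outer code, and obtain strictness from the existence of a nonempty set $S$ of intermediate capacities on which the stitched sub-block polarizes at least one additional bit-channel, so that $c_{\text{STC},s}(I)\le c(I)-c(I,S)\,2^{-s\lambda}$. Your write-up is more careful than the paper's (you explicitly argue continuity of $I_0\mapsto I^{(i)}_{2^s}(I_0)$ to get an interval and invoke the positivity in Assumption~\ref{assum_scaling} to ensure $c(I,S)>0$), but the underlying idea and structure are the same.
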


\begin{proof}
In each length-$2^s$ sub-block of regular polar codes, we replace the outer codes with stitched polar codes. Then there exists a non-empty set $S\in [0,1]$, such that for $I'\in S$, there is at least one additional polarized bit-channels. The increase of polarized bit-channels is then at least $c(S,I)(2^{-t}N)^{\lambda}$. Therefore,
$$
c_{\text{STC},s}(I)N^{\lambda} \leq  c(I)N^{\lambda} - c(S,I)(2^{-s}N)^{\lambda}. 
$$
This implies $c_{\text{STC},s}(I) \leq  c(I) - c(S,I)2^{-s\lambda}$. 
\end{proof}

Theorem \ref{thm_STC} demonstrates that the constant term for stitched polar codes is stable when code length is slightly larger than power-of-two.

\begin{theorem}\label{thm_STC}
$c_{\text{STC},s'}^{(t)}(I,[a,b]) \leq c_{\text{STC},s}(I,[a,b])$ for BEC with capacity $I\in (0,1)$,$s > -\log_2(\frac{\log_2 b}{\log_2 a - \log_2 b})$ and sufficiently large $t$ together with $s'\geq t$.
\end{theorem}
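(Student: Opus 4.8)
\emph{Proof strategy.} The plan is to express both constants through the sub-block decomposition of partially stitched codes, reduce the inequality to a per-sub-block estimate, and close using the structural picture of BRS extension from the proof of Theorem~\ref{thm_se} together with a self-consistency identity for the scaling limit. For a uniform $\mathrm{BEC}$ input of capacity $J$, let $h_{\ell}(J)$ be the least number of bit-channels with capacity in $[a,b]$ achievable by an admissible stitched outer code of length $\ell$; in particular $h_{\ell}(J)=0$ for $J\notin[a,b]$ (leave the channels uncoupled). Peeling the outer $m-s$ regular polarization stages off the length-$N=2^{m}$ $2^{s}$-partially stitched code realizes it as $2^{m-s}$ stitched sub-blocks of length $2^{s}$, whose input capacities are precisely the bit-channel capacities of a length-$2^{m-s}$ regular polar code over $\mathrm{BEC}(I)$. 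Applying the scaling law to those channels, and writing $dc(I,\cdot)$ for the length-independent measure attached to $c(I,\cdot)$,
\[
c_{\mathrm{STC},s}(I)=2^{-s\lambda}\!\int h_{2^{s}}(J)\,dc(I,J),\qquad c_{\mathrm{STC},s}(I)=\int c_{\mathrm{STC},s}(J)\,dc(I,J),
\]
the second identity coming from the same peeling carried out at another depth. Since a length-$2^{k}$ optimal stitched code may itself be built $2^{s}$-partially stitched, the limit $c^{*}(J):=\lim_{k\to\infty}2^{-k\lambda}h_{2^{k}}(J)$ satisfies $c^{*}(J)\le c_{\mathrm{STC},s}(J)$ for every $J$.

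Now take the length-$(1+2^{-t})N$ $2^{s'}$-partially stitched code. By the structural fact from the proof of Theorem~\ref{thm_se} (here $s'\ge t$ is used, so that the relevant partition is no finer than its stitched sub-blocks), after $m-t$ polarization stages the code breaks into $2^{m-t}$ groups, each of length $2^{t}+1$, each carrying a single uniform intermediate capacity, with the multiset of these capacities equal to the bit-channel capacities of a length-$2^{m-t}$ regular polar code over $\mathrm{BEC}(I)$. A group of capacity $J$ contributes a count which is at most what a $2^{s}$-partially stitched code of length $2^{t}+1$ on input $J$ realizes: this is $0$ for $J\notin[a,b]$, and for large $t$ and $J$ in the compact set $[a,b]$ it is $\bigl(c^{*}(J)+o(1)\bigr)(2^{t}+1)^{\lambda}\le\bigl(c_{\mathrm{STC},s}(J)+o(1)\bigr)(2^{t}+1)^{\lambda}$; as $[a,b]\subset(0,1)$, the number of groups with capacity in $[a,b]$ is $c(I,[a,b])(2^{m-t})^{\lambda}$. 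Summing,
\[
c_{\mathrm{STC},s'}^{(t)}(I)\bigl((1+2^{-t})N\bigr)^{\lambda}\le(2^{t}+1)^{\lambda}(2^{m-t})^{\lambda}\!\int_{a}^{b}\!\bigl(c_{\mathrm{STC},s}(J)+o(1)\bigr)\,dc(I,J)=\bigl((1+2^{-t})N\bigr)^{\lambda}\Bigl(\!\int_{a}^{b}\!c_{\mathrm{STC},s}(J)\,dc(I,J)+o(1)\Bigr).
\]
By the self-consistency identity, $\int_{a}^{b}c_{\mathrm{STC},s}(J)\,dc(I,J)$ is strictly below $c_{\mathrm{STC},s}(I)=\int_{(0,1)}c_{\mathrm{STC},s}(J)\,dc(I,J)$, since $c_{\mathrm{STC},s}>0$ on $(0,1)$ and $dc(I,\cdot)$ puts positive mass on $(0,a)$ and on $(b,1)$; taking $t$ large enough to absorb the $o(1)$ into that gap gives $c_{\mathrm{STC},s'}^{(t)}(I)\le c_{\mathrm{STC},s}(I)$.

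The hypothesis $s>-\log_2\!\bigl(\tfrac{\log_2 b}{\log_2 a-\log_2 b}\bigr)$, equivalently $b^{\,2^{s}+1}<a$, is what must replace the idealized per-group bound above by one valid for the actual recursive construction, whose outer sub-blocks are the codes $C_{N_i,K_i}\in\mathcal C$ and hence always contain couplings: for a group whose input capacity lies just outside $[a,b]$ one cannot then invoke $h_{2^{t}+1}(J)=0$, and such groups number $\Theta(2^{m-t})$ rather than $\Theta\bigl((2^{m-t})^{\lambda}\bigr)$. The inequality $b^{\,2^{s}+1}<a$ guarantees that for $t\ge s$ the worst bit-channel of a length-$(2^{t}+1)$ group with input capacity $J\le b$ has capacity at most $J^{2^{t}+1}\le b^{\,2^{t}+1}\le b^{\,2^{s}+1}<a$, and symmetrically on the high-capacity side, so that—combined with $c(J,[a,b])\to 0$ as $J$ approaches $0$ or $1$—the total leakage from these near-extremal groups is controlled and folds back into the previous display. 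I expect the main obstacle to be exactly this quantitative leakage bound: showing uniformly, for all sufficiently large $t$, that the combined contribution of the near-endpoint groups does not push the right-hand side past $c_{\mathrm{STC},s}(I)$. This is where the numerical value in the bound on $s$ is consumed (and why a narrower window $[a,b]$, as in the remark, forces a correspondingly larger $s$); a secondary technical point is verifying that the BRS shortening pattern distributes the deleted bits across the $2^{s'}$-level sub-blocks so that the stage-$(m-t)$ groups genuinely inherit the input-capacity statistics of a length-$2^{m-t}$ regular polar code.
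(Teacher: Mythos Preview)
Your strategy is fundamentally different from the paper's and, as written, has a genuine gap in how the hypothesis on $s$ is consumed. The paper does \emph{not} argue via a scaling-limit measure $dc(I,\cdot)$ or a self-consistency identity at all; it gives a one-line constructive absorption. Working inside a single length-$(2^t+1)$ group with uniform input capacity $I_0$: if $I_0\notin[a,b]$ the extra channel is already polarized; if $I_0\in[a,b]$, then because $t$ is large there is (by polarization of the length-$2^t$ regular part) a length-$2^s$ sub-block inside the group whose input capacity $I_1$ lies in $(b,(a/b)^{2^{-s}})$. One discards whatever stitched code sat on that sub-block and instead couples the extra channel successively as the ``$-$'' branch against each of the $2^s$ channels there. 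The extra channel then has capacity $I_0 I_1^{2^s}<b\cdot (a/b)=a$, and each of the $2^s$ channels has been ``$+$''-polarized and so stays above $b$. Thus the extra channel is absorbed with \emph{no} increase in the un-polarized count, group by group. The hypothesis $s>-\log_2\bigl(\tfrac{\log_2 b}{\log_2 a-\log_2 b}\bigr)$ is exactly the condition $(a/b)^{2^{-s}}>b$, i.e.\ that the target interval $(b,(a/b)^{2^{-s}})$ is nonempty; and $s'\ge t$ is what lets this rewiring live inside a single stitched sub-block.

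Your proposal misidentifies the role of that hypothesis. You try to use $b^{2^s+1}<a$ to control ``leakage'' from groups whose input capacity is just outside $[a,b]$, but under the paper's convention (stitched sub-blocks chosen to minimize the un-polarized count) such groups contribute zero via the trivial ``leave uncoupled'' option, so there is no leakage to control. More seriously, your main chain of inequalities, once the near-endpoint contribution is set to zero, does not invoke the hypothesis on $s$ at any point; if it were valid it would prove the theorem for every $s\ge 1$, which should make you suspicious. The place the argument actually bends is the per-group bound: you need $h_{2^t+1}(J)\le (c_{\mathrm{STC},s}(J)+o(1))(2^t+1)^\lambda$ uniformly for $J\in[a,b]$, but $c^*$ and $c_{\mathrm{STC},s}$ are defined through power-of-two lengths, and passing from $2^t$ to $2^t+1$ with only the crude ``$+1$'' bound leaves an additive $1$ per group. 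Summed over the $\Theta(2^{(m-t)\lambda})$ un-polarized groups this is order $N^\lambda 2^{-t\lambda}$, whereas the slack $((1+2^{-t})^\lambda-1)c_{\mathrm{STC},s}(I)N^\lambda$ is only order $N^\lambda 2^{-t}$; since $\lambda<1$ the error term dominates for large $t$ and the inequality does not close without further input. Closing it requires precisely the absorption step above, which is the entire content of the paper's proof. Your integral self-consistency identity $c_{\mathrm{STC},s}(I)=\int_{(0,1)}c_{\mathrm{STC},s}(J)\,dc(I,J)$ is a separate technical liability (the measure $dc(I,\cdot)$ has infinite mass near $0$ and $1$, so convergence needs a decay rate for $c_{\mathrm{STC},s}(J)$ that you have not established), but even granting it, the argument does not go through without the constructive step.
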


\begin{proof}
Similar to the proof of Theorem \ref{thm_se}, consider a length-$(2^{t}+1)$ sub-block in length-$(1+2^{-t})N$ stitched polar codes with input capacity $I_0$. If $I_0<a$ or $I_0>b$, the newly added channel is already polarized and requires no further polarization. Thus, the number of un-polarized bit-channels does not increase in this sub-block. 

In cases where $I_0\notin [a, b]$, since $t$ is sufficiently large, there is a length-$2^s$ sub-block with input capacity $I_1$ satisfying $(a/b)^{2^{-s}} > I_1 > b$. We destruct the original polarization in this length-$2^s$ sub-block and polarize all the bit-channels in the sub-block with the new bit-channel. 

Following $2^s$ polarization operations, the capacity of the new bit-channel becomes $I_0I_1^{2^s} < a$. Since $I_1 > b$, the capacities of the other bit-channels remain  larger than $b$ after polarization. Consequently, by increasing the stitched sub-block size to at least $t$, the number of un-polarized bit-channels does not increase.
\end{proof}

\begin{figure*}[!t]
\centering
    \includegraphics[width=0.95\textwidth, trim = 0 150 0 150, clip]{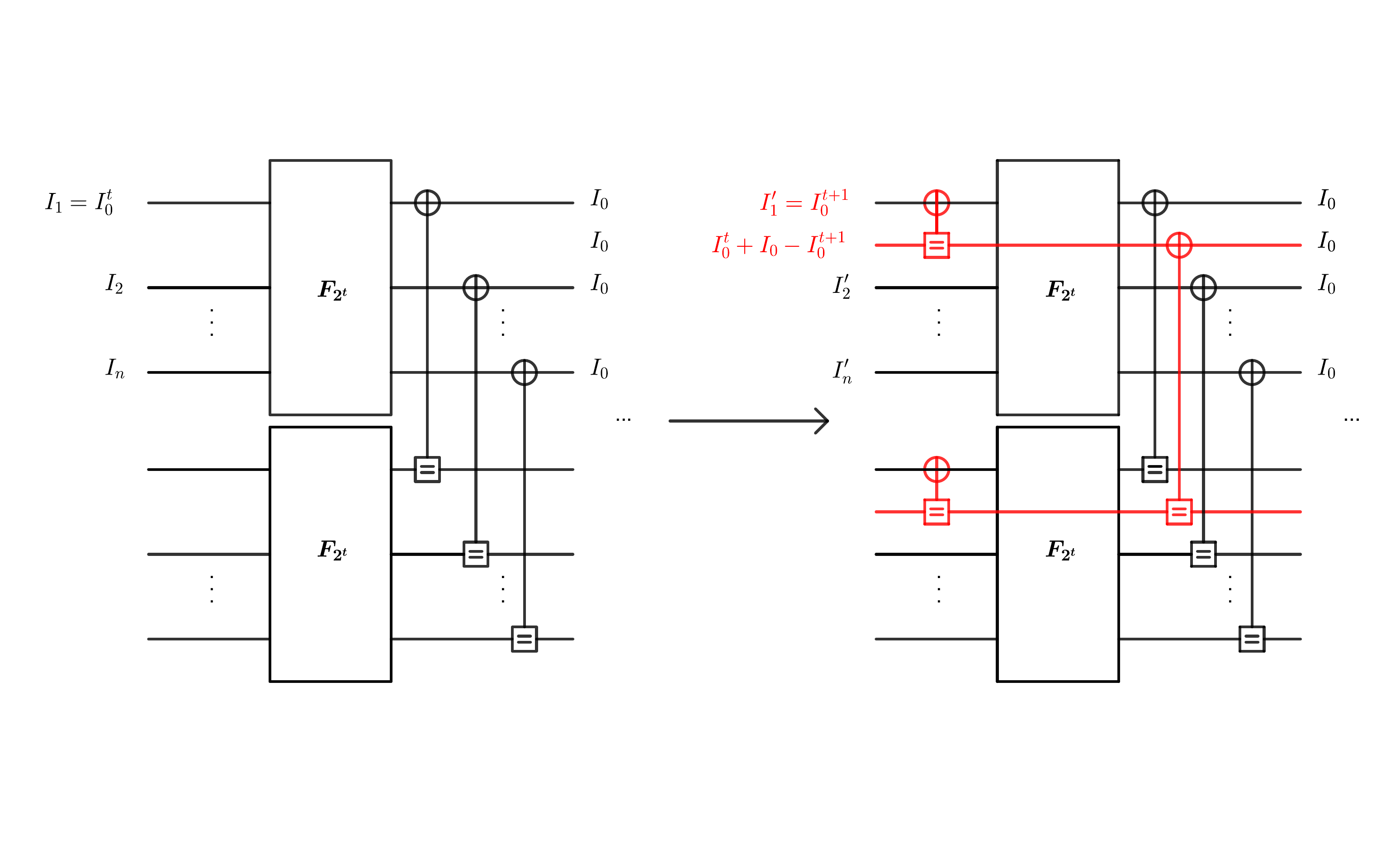}
    \caption{length-$2^t$ sub-block with capacity $I_0$ in length-$N$ regular polar codes and the same sub-block in length-$(1+2^{-t})N$ BRS polar codes.}
  \label{fig_BIV_PROOF}
\end{figure*}

\begin{remark}
Theorems \ref{thm_STC_p} and \ref{thm_STC} are readily generalizable to stitched polar codes of arbitrary lengths. Specifically, the constant term $c$ for partially stitched polar codes will be smaller than that of BRS polar codes of the same length. Moreover, partially stitched codes achieve a stable constant term when there is a slight increase in code lengths by strategically increasing the size of their stitched blocks.
\end{remark}

Theorem \ref{thm_BRS} demonstrate that the constant term for BRS polar codes increases at the mother code length switching point.

\begin{theorem}\label{thm_BRS}
$c_{BRS}^{(t)}(I,[a,1-a]) \geq c(I,[a,1-a])$ for BEC with capacity $I\in (0,1)$, $a<0.1$ and $t > max\{\log_a 0.9, \frac{1}{1-\lambda} \log_2 \left(\frac{\lambda c(I,[a,1-a])}{c(I, [0.1,0.9])} + \lambda\right)\}$. Here, $h(z) = z^{t}+z-z^{t+1}$.
\end{theorem}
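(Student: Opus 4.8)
The plan is to reuse the sub-block decomposition from the proof of Theorem~\ref{thm_se} and reduce the statement to a channel-by-channel comparison inside one sub-block. After $m-t$ polarization stages, the length-$(1+2^{-t})N$ BRS polar code decomposes into $2^{m-t}$ sub-blocks: the $i$-th one is a length-$(2^t+1)$ BRS structure fed by $2^t+1$ independent copies of a BEC with capacity $I_0^{(i)}$ (Bhattacharyya $z_0^{(i)}=1-I_0^{(i)}$), where $I_0^{(1)},\dots,I_0^{(2^{m-t})}$ are the bit-channel capacities of a length-$2^{m-t}$ regular polar code over BEC($I$); in the length-$N$ regular code the corresponding sub-block is the plain length-$2^t$ polar transform on the same input. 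Writing $n_{\mathrm{BRS}}(z_0)$, $n_{\mathrm{reg}}(z_0)$ for the number of bit-channels with capacity in $[a,1-a]$ in the BRS and in the regular sub-block respectively, we have $M_{\mathrm{BRS}}=\sum_i n_{\mathrm{BRS}}(z_0^{(i)})$ and $M_N=\sum_i n_{\mathrm{reg}}(z_0^{(i)})$, and (normalizing each count by the actual code length) the claim $c_{\mathrm{BRS}}^{(t)}(I,[a,1-a])\ge c(I,[a,1-a])$ is equivalent to $M_{\mathrm{BRS}}\ge(1+2^{-t})^{\lambda}M_N$.

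The heart of the argument is the per-sub-block comparison. One first shows that the bit-reversal shortening pattern, restricted to a length-$2^{t+1}$ sub-block, deletes exactly the positions whose synthetic channels collapse to noiseless channels through the recursion; consequently $2^t$ of the $2^t+1$ surviving bit-channels are no better (in Bhattacharyya parameter) than the bit-channels of the regular length-$2^t$ sub-block, while the remaining, ``extra'' bit-channel is the output of one further polarization acting on a bottleneck channel together with a fresh BEC($z_0$). Tracking the collapse of the inserted noiseless channels over the $t+1$ stages shows this bottleneck channel has Bhattacharyya parameter $h(z_0)=z_0^{t}+z_0-z_0^{t+1}$, i.e. capacity $1-h(z_0)=(1-z_0)(1-z_0^{t})$ -- the probability that one fixed observation and at least one of $t$ further observations are unerased. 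From this structure one reads off that $n_{\mathrm{BRS}}(z_0)\ge n_{\mathrm{reg}}(z_0)+1$ whenever the extra channel is unpolarized, and that otherwise $n_{\mathrm{BRS}}(z_0)\ge n_{\mathrm{reg}}(z_0)$ except on a short interval $J$ near $z_0=0$ where the count may drop by one (because the length-$2^t$ worst channel is replaced by two channels that are both slightly outside $[a,1-a]$). Since $h$ is continuous and strictly increasing on $(0,1)$, and since the hypothesis $t>\log_a 0.9$ forces $h(0.9)\le 1-a$ while $h(0.1)\ge 0.1>a$ is automatic, one gets $h([0.1,0.9])\subseteq[a,1-a]$ and hence $n_{\mathrm{BRS}}(z_0)\ge n_{\mathrm{reg}}(z_0)+1$ for every $z_0\in[0.1,0.9]$.

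It remains to sum. Applying the scaling law to the length-$2^{m-t}$ outer polar code over BEC($I$), the number of indices $i$ with $I_0^{(i)}$ in a capacity window $[c,d]$ equals $(c(I,[c,d])+o(1))(2^{m-t})^{\lambda}$; in particular, using that a BEC has $z_0\in[0.1,0.9]\iff I_0\in[0.1,0.9]$, the number of sub-blocks with $z_0^{(i)}\in[0.1,0.9]$ is $(c(I,[0.1,0.9])+o(1))(2^{m-t})^{\lambda}$, while the number with $z_0^{(i)}\in J$ is $o(1)\cdot(2^{m-t})^{\lambda}$ and in particular at most $\lambda 2^{-t}c(I,[0.1,0.9])(2^{m-t})^{\lambda}$ once $t$ is large (it is carried by capacities near $1$, a vanishing fraction of the scaling mass). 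Combining,
\[
M_{\mathrm{BRS}}\;\ge\;M_N+\bigl(c(I,[0.1,0.9])-\lambda 2^{-t}c(I,[0.1,0.9])+o(1)\bigr)(2^{m-t})^{\lambda}.
\]
Since $M_N=(c(I,[a,1-a])+o(1))N^{\lambda}=(c(I,[a,1-a])+o(1))\,2^{t\lambda}(2^{m-t})^{\lambda}$ and $(1+2^{-t})^{\lambda}-1\le\lambda 2^{-t}$, the desired $M_{\mathrm{BRS}}\ge(1+2^{-t})^{\lambda}M_N$ follows as soon as $2^{t(1-\lambda)}c(I,[0.1,0.9])\ge\lambda\bigl(c(I,[a,1-a])+c(I,[0.1,0.9])\bigr)$, which is precisely the second hypothesis $t>\frac{1}{1-\lambda}\log_2\bigl(\frac{\lambda c(I,[a,1-a])}{c(I,[0.1,0.9])}+\lambda\bigr)$. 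This gives $c_{\mathrm{BRS}}^{(t)}(I,[a,1-a])\ge c(I,[a,1-a])$.

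I expect the main obstacle to be the per-sub-block structural analysis: making rigorous how bit-reversal shortening reshapes the $2^t$-channel spectrum of the regular sub-block and, in particular, identifying $h(z_0)$ as the Bhattacharyya parameter of the bottleneck channel, which requires carefully following the chain of noiseless channels created by the shortening (this is what produces the $(1-z_0)(1-z_0^{t})$ form), together with controlling the loss interval $J$ so that its scaling mass is genuinely dominated by the gain on $[0.1,0.9]$. Once the per-sub-block inequality with explicit $z_0$-ranges is in hand, the summation via the scaling law and the comparison against the $(1+2^{-t})^{\lambda}$ normalization are routine.
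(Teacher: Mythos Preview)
Your overall outline matches the paper's proof: decompose into $2^{m-t}$ sub-blocks indexed by the intermediate capacities $I_0^{(i)}$, compare the un-polarized counts per sub-block, sum via the scaling law, and close with the second hypothesis on $t$. The final inequality manipulation is also essentially the paper's.

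The gap is in the per-sub-block step. You allow a loss interval $J$ on which $n_{\mathrm{BRS}}<n_{\mathrm{reg}}$ and then assert its contribution is $o(1)\cdot(2^{m-t})^{\lambda}$; but if $J$ is a fixed interval, the very scaling law you invoke gives $(c(I,J)+o(1))(2^{m-t})^{\lambda}$ sub-blocks landing in $J$, so the $o(1)$ is unjustified, and the subsequent ad~hoc bound $\lambda 2^{-t}c(I,[0.1,0.9])(2^{m-t})^{\lambda}$ is simply asserted, not derived. The paper sidesteps this entirely with a sharper structural observation: in the BRS sub-block, all old bit-channels keep their capacities \emph{except} the ``first'' one, whose capacity drops from $I_0^{t}$ to $I_0^{t+1}$, while the new bit-channel has capacity $h(I_0)=I_0^{t}+I_0-I_0^{t+1}$ (one extra $\pm$ stage between the capacity-$I_0^{t}$ channel and a fresh capacity-$I_0$ channel). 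The only way to lose an un-polarized count is $I_0^{t}\in[a,1-a]$ while $I_0^{t+1}<a$, forcing $I_0\in[a^{1/t},a^{1/(t+1)}]$; one then checks that on this interval $h(I_0)\in[a,1-a]$, so the new channel exactly compensates the loss. Hence $n_{\mathrm{BRS}}(I_0)\ge n_{\mathrm{reg}}(I_0)$ for \emph{every} $I_0$, with strict gain $+1$ on $[h^{-1}(a),a^{1/t}]\supseteq[0.1,0.9]$. This ``exact compensation'' is the key idea your plan is missing; once you have it, the interval $J$ is empty and no separate bound is needed.

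A smaller point: you read $h(z_0)$ as the Bhattacharyya parameter of the extra channel (with $z_0$ the input Bhattacharyya), giving capacity $(1-z_0)(1-z_0^{t})$. The paper applies $h$ to the input \emph{capacity} $I_0$ and reads $h(I_0)$ as a capacity; these two formulas are not equivalent, and the paper's is the one that falls out of the BRS sub-block structure described above.
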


\begin{proof}
Analogous to the preceding proof, consider a length-$(2^t+1)$ sub-block with capacity $I_0$ in Fig. \ref{fig_BIV_PROOF}. Note that $h(I_0)= I_{0}^{t}+I_{0}-I_{0}^{t+1}$ is increasing for $I_0\in [0,1]$. If $h(I_0) \in[a, 1-a]$, the new bit-channel is un-polarized. If $I_{0}^{t}\notin [a, 1-a]$ while $I_{0}^{t+1}\in [a, 1-a]$, the first bit-channel transitions from un-polarized to polarized, resulting in one additional un-polarized bit-channel. We are going to prove that for $I_0\in [h^{-1}(a),a^{1/t}]$, there is one more un-polarized bit-channel in the sub-block, otherwise, no additional polarized bit-channels arise in the sub-block.

The conditions $I_{0}^{t}\notin [a, 1-a]$ and $I_{0}^{t+1}\in [a, 1-a]$ imply $I_{0} \in\left[a^{1/t}, a^{1/(t+1)}\right]$ and $h(I_0) \in\left[h(a^{1/t}), h(a^{1/(t+1)})\right] \subset[a, 1-a]$. Therefore, the first bit-channel transitions from un-polarized to polarized only when $I_{0} \in \left[a^{1/t}, a^{1/(t+1)}\right]$, but in this case, the new bit-channel is un-polarized. In other words, there is no $I_0$ such that the number of un-polarized bit-channels increases.

When $I_{0} \in [0.1, 0.9]\subseteq [h^{-1}(a),a^{1/t}]$, $h(I_0)\in[a, h(a^{1/t})]\subseteq [a, 1-a] \backslash [h(a^{1/t}), h(a^{1/(t+1)})]$, so there will be a new un-polarized channel.

Denote $c_{BRS}^{(t)}(I,[a,1-a]) = c_1$,  $c(I,[a,1-a]) = c_2$ and $c(I, [h^{-1}(a),a^{1/t}]) = c_3$. We have
$$
c_1(2^m+2^{m-t})^{\lambda} \geq c_2(2^m)^{\lambda} + c_3(2^{m-t})^{\lambda}.
$$
Then
$$
c_1 \geq c_2(1 + 2^{-t})^{-\lambda} + c_3 2^{-t\lambda}(1 + 2^{-t})^{-\lambda}. 
$$
By Taylor expansion, $c_2(1 + 2^{-t})^{\lambda} > c_2 + c_2\lambda 2^{-t}$, so 
$$
c_1 - c_2 > c_2\lambda 2^{-t} + c_3 2^{-t\lambda}(1 + 2^{-t})^{\lambda}. 
$$

Since $2^{t(1-\lambda)} (1 + 2^{-t})^{-\lambda} > 2^{t(1-\lambda)-\lambda} > \frac{\lambda c_2}{c_3}$,
$$
c_2\lambda  2^{-t} + c_3 2^{-t\lambda}(1 + 2^{-t})^{\lambda} > 0
$$
Therefore, $c_{BRS}^{(t)}(I,[a,b]) >c(I,[a,b])$.

\end{proof}

The results for QUP polar codes relies on the following assumption:

\begin{conjecture}\label{CON_c}
For a non-negative function $h:[0,1]\to \RR$, if 
$$
2h(z) = q( h(z^2) + h(2z-z^2) )
$$
for all $z\in [0, 1]$ and some constant $q>0$, then
$$
h(z_1) + h(z_2) \leq q (h(z_1z_2) + h(z_1+z_2-z_1z_2)),
$$
for all $z_1,z_2\in [0, 1]$.
\end{conjecture}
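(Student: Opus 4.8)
The plan is to iterate the functional equation and telescope toward the fully polarized channels. Write $P_+(z)=z^2$, $P_-(z)=2z-z^2=1-(1-z)^2$ for the two single-channel polarization maps and $Q_+(z_1,z_2)=z_1z_2$, $Q_-(z_1,z_2)=z_1+z_2-z_1z_2=1-(1-z_1)(1-z_2)$ for the two-channel combination, so the hypothesis reads $2h(z)=q\bigl(h(P_+z)+h(P_-z)\bigr)$ and the claim reads $h(z_1)+h(z_2)\le q\bigl(h(Q_+(z_1,z_2))+h(Q_-(z_1,z_2))\bigr)$. Evaluating the hypothesis at $z=0$ and $z=1$ gives $h(0)=h(1)=0$ when $q\ne1$ (the borderline case $q=1$ is treated separately, where $h$ is affine along the polarization orbit and the claim holds with equality). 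Set $\psi(z_1,z_2)=q\bigl(h(Q_+(z_1,z_2))+h(Q_-(z_1,z_2))\bigr)-h(z_1)-h(z_2)$. The hypothesis gives $\psi(z,z)=0$, so the goal is to show $\psi\ge 0$ on $[0,1]^2$, with equality along the diagonal and at the four corners.

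The key step is a one-level recursion for $\psi$. Applying the functional equation to each of the four terms of $\psi$ and using the algebraic identity $P_-\!\bigl(Q_-(z_1,z_2)\bigr)=Q_-\!\bigl(P_-z_1,P_-z_2\bigr)=1-(1-z_1)^2(1-z_2)^2$ (which cancels one pair of terms) yields
\[
\psi(z_1,z_2)=\tfrac{q}{2}\Bigl[\psi(z_1^2,z_2^2)+\psi(P_-z_1,P_-z_2)\Bigr]+\tfrac{q^2}{2}\Bigl[h(A)+h(B)-h(C)-h(D)\Bigr],
\]
where $A=P_-(z_1z_2)$, $B=Q_-(z_1,z_2)^2$, $C=Q_-(z_1^2,z_2^2)$, and $D=P_-(z_1)P_-(z_2)$. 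A short computation gives $A+B=C+D$, $C-A=B-D=(z_1-z_2)^2$ and $D-A=2z_1z_2(1-z_1)(1-z_2)\ge0$, so $A\le\min(C,D)\le\max(C,D)\le B$ and $\{C,D\}$ has the same midpoint as $\{A,B\}$; consequently the residual $h(A)+h(B)-h(C)-h(D)$ is nonnegative exactly when $h$ is convex on $[A,B]$, which happens for $(z_1,z_2)$ near a corner of $[0,1]^2$ where a bump-shaped $h$ is convex. Iterating the recursion replaces $(z_1,z_2)$ by the pair obtained from applying the same word $w\in\{+,-\}^m$ to both coordinates; since the Ar{\i}kan process drives both coordinates to $\{0,1\}$ and $\psi$ vanishes at the corners, the conjecture would follow if one could show (i) the accumulated residuals are nonnegative and (ii) the remainder $(q/2)^m\sum_{w\in\{+,-\}^m}\psi(z_1^{\,w},z_2^{\,w})$ tends to $0$.

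The main obstacle is that neither (i) nor (ii) is automatic. In the application $q=2^{1-\lambda}\in(1,2)$, and the un-polarized fraction of the depth-$m$ process decays like $2^{-(1-\lambda)m}=q^{-m}$, so the remainder in (ii) converges to a nonzero constant rather than to $0$ and the telescoping cannot be closed blindly; moreover the scaling constant $c(\cdot,[a,b])$ vanishes at both endpoints, hence is not globally convex, so the residual genuinely changes sign (it is nonpositive when $[A,B]$ lies near the top of the bump). I expect the resolution to need two ingredients: first, enough regularity of $h$ — continuity on $[0,1]$ and a one-sided second-difference bound on neighbourhoods of $0$ and $1$ — which should be bootstrapped from the functional equation itself or imported from the scaling-law analysis; and second, a maximum-principle argument in which one assumes an interior minimizer of the (continuous) $\psi$, uses the recursion above to pass to one of its two children with a controlled correction, and iterates until the minimizer is forced into a corner neighbourhood where the residual has the favourable sign, contradicting $\psi(z,z)=0$ and $\psi(\text{corner})=0$. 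A parallel route worth pursuing is to reinterpret both sides as limiting synthetic-channel counts: the depth-$m$ synthetic Bhattacharyya parameters of the length-$2\cdot2^m$ polar code built on the interleaved arrangement $z_1,z_2,z_1,z_2,\dots$ are exactly $\{Q_+(z_{2^m}^{(j)}(z_1),z_{2^m}^{(j)}(z_2)),\,Q_-(z_{2^m}^{(j)}(z_1),z_{2^m}^{(j)}(z_2))\}_{j}$, and the bounds $Q_+(a,b)\le\min(a,b)$, $Q_-(a,b)\ge\max(a,b)$ strongly suggest that interleaving never produces more un-polarized bit-channels than the two homogeneous length-$2^m$ codes do — which is exactly the content of the conjecture; turning this count comparison into a proof is the crux.
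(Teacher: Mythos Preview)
The statement you are trying to prove is labelled \emph{Conjecture} in the paper, and the paper does not prove it. The authors offer only a heuristic justification (the remark that one polarization stage on equal-capacity BECs reduces the un-polarized count by the fixed factor $2^{\lambda-1}$, and the intuition that polarizing already-unequal channels should be no more effective) together with the sentence ``We have verified this through simulations, which show it holds nearly true.'' The downstream result (the QUP corollary) is stated explicitly conditional on the conjecture. So there is no paper proof to compare against; your proposal is an attempt to settle something the authors leave open.

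On the substance of your attempt: your recursion for $\psi$ is correct, the algebraic identities $A+B=C+D$, $C-A=B-D=(z_1-z_2)^2$, $D-A=2z_1z_2(1-z_1)(1-z_2)$ check out, and your diagnosis of the obstacles is accurate rather than pessimistic. In the intended application $q=2^{1-\lambda}\in(1,2)$ and the depth-$m$ remainder really does stabilise at a nonzero constant, so the naive telescoping cannot close; and the scaling function $c(\cdot,[a,b])$ is bump-shaped, so the residual $h(A)+h(B)-h(C)-h(D)$ genuinely changes sign. Your two proposed escape routes (a maximum-principle argument on $\psi$, or a direct combinatorial comparison of un-polarized counts for the interleaved versus homogeneous constructions) are both plausible programmes, but as written neither is a proof: the maximum-principle sketch needs a regularity input on $h$ that you have not derived from the functional equation, and the channel-count route needs a monotonicity statement that is essentially a restatement of the conjecture. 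In short, your proposal is a sound reconnaissance of the problem and goes well beyond what the paper does, but it remains --- as you yourself flag --- an outline with the crux unresolved.
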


\begin{remark}\label{Re:scaling}
For a length-$N$ regular polar codes over a BEC with capacity $I$, the number of un-polarized bit-channels is given by $c(I) N^{\lambda}$. This quantity can also be understood as the sum of un-polarized bit-channels in two length-$N/2$ polar codes, over BECs with capacities $I^2$ and $2I-I^2$ respectively, after one polarization stage, expressed as $(c(I^2) + c(2I-I^2)) (N/2)^{\lambda}$. This relationship implies that $c(I)$ satisfies the conditions of Conjecture \ref{CON_c} with $q=2^{1-\lambda}$. In other words, the total number of un-polarized bit-channels across two length-$N/2$ polar codes, each over BEC with capacity $I$, is $2c(I) (N/2)^{\lambda}$. It is equivalent to $2^{1-\lambda}$ times the number of un-polarized bit-channels in a length-$N$ polar code. In simpler terms, one polarization stage effectively reduces the number of un-polarized bit-channels by a factor of $2^{\lambda-1}$.

Now, consider a length-$N$ regular polar code over BECs where half have capacities $I_1$ and the other half have capacity $I_2$. When $I_1\neq I_2$, we hypothesize that the polarization effect will be weaker. This is because polarization inherently amplifies differences between channels. Specifically, we believe that  $2^{1-\lambda}$ times the number of un-polarized bit-channels in this unequal capacity scenario should be greater than that in two length-$N/2$ polar codes over BECs with capacities $I_1$ and $I_2$. This conviction forms the basis of our proposed conjecture. We have verified this through simulations, which show it holds nearly true.
\end{remark}

\begin{corollary}\label{thm_qup}
Assuming the Conjecture \ref{CON_c} holds, then $c_{QUP}^{(t)}(I) \geq c(I)$ for BEC with $I\in (0,1)$ and $t\geq \max\{\log_2\lambda+1, \frac{1}{\lambda}\log_2(1+2\lambda)\}$.
\end{corollary}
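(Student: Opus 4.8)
The plan is to run the same decomposition that drives the proof of Theorem~\ref{thm_se} and then to invoke Conjecture~\ref{CON_c} to control the one term in that decomposition that is not a pure single‑channel polar code.

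\textbf{Step 1 (reduce to the lower branch).} Following the proof of Theorem~\ref{thm_se}, the length-$(1+2^{-t})N$ QUP code over a BEC of capacity $I$ decomposes into an \emph{upper branch} which, once the quasi-uniform punctures are pushed through the transform, collapses to a regular polar code of length $2^{-t}N$ over a BEC of capacity $I^{2}$ and therefore carries $c(I^{2})(2^{-t}N)^{\lambda}$ un-polarized bit-channels, and a \emph{lower branch} of length $N$ carrying $c'N^{\lambda}$ of them. Tracing the punctures once more shows the lower branch is a length-$N$ polar code over a quasi-uniform mixture of $(1-2^{-t})N$ copies of $\mathrm{BEC}(I)$ and $2^{-t}N$ copies of $\mathrm{BEC}(2I-I^{2})$. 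Hence $c_{QUP}^{(t)}(I)\,((1+2^{-t})N)^{\lambda}=c(I^{2})(2^{-t}N)^{\lambda}+c'N^{\lambda}$, and using the identity $c(I^{2})+c(2I-I^{2})=2^{\lambda}c(I)$ from Remark~\ref{Re:scaling}, the target $c_{QUP}^{(t)}(I)\geq c(I)$ becomes
\[
c(I^{2})\,2^{-t\lambda}+c' \;\geq\; 2^{-\lambda}(1+2^{-t})^{\lambda}\bigl(c(I^{2})+c(2I-I^{2})\bigr).
\]

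\textbf{Step 2 (lower bound $c'$ via the conjecture).} Apply Conjecture~\ref{CON_c} with $h=c$ and $q=2^{1-\lambda}$, which is admissible because, as in Remark~\ref{Re:scaling}, $c$ satisfies $2c(z)=2^{1-\lambda}\bigl(c(z^{2})+c(2z-z^{2})\bigr)$. The conjecture then reads $c(z_{1}z_{2})+c(z_{1}+z_{2}-z_{1}z_{2})\geq 2^{\lambda-1}\bigl(c(z_{1})+c(z_{2})\bigr)$: one polarization layer applied to two \emph{unequal} channels destroys at most the factor $2^{1-\lambda}$ of the combined un-polarized count, with equality exactly for equal channels. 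Iterate this layer by layer through the lower branch: at each stage the two-species mixture splits under the basic $2\times2$ transform into two smaller two-species mixtures, the conjecture bounds the post-stage total below by $2^{\lambda-1}$ times the pre-stage total, and the corresponding all-$\mathrm{BEC}(I)$ recursion is an equality. Comparing the two recursions term by term produces a bound of the form $c'\geq c(2I-I^{2})+\delta$, where $\delta$ is an explicit positive combination of $c(\cdot)$-values accounting for the fact that the $(1-2^{-t})N$ majority channels of the lower branch are strictly worse than $\mathrm{BEC}(2I-I^{2})$; crucially $\delta$ does not vanish, which is what saves the bound when $I$ is close to $1$ (there the crude bound $c'\geq c(2I-I^{2})$ alone fails).

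\textbf{Step 3 (finish by Taylor expansion and the threshold on $t$).} Substitute the bound on $c'$ into Step~1 and expand $(1+2^{-t})^{\lambda}=1+\lambda 2^{-t}+O(2^{-2t})$, exactly as in the proof of Theorem~\ref{thm_BRS}. The dominant contribution on the left is of order $2^{-t\lambda}$ (from $c(I^{2})2^{-t\lambda}$ together with $\delta$), while the correction $c(I)\bigl((1+2^{-t})^{\lambda}-1\bigr)$ is only $O(2^{-t})$; since $\lambda<1$, the former overtakes the latter once $2^{t(1-\lambda)}$ is large enough, and the two stated thresholds $t\geq\log_{2}\lambda+1$ and $t\geq\frac1\lambda\log_{2}(1+2\lambda)$ are precisely the conditions that make the residual elementary inequality in $c(I^{2}),c(2I-I^{2}),\lambda,2^{-t}$ hold for every $I\in(0,1)$.

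\textbf{Main obstacle.} Step~2 carries the real difficulty, and it is the reason Conjecture~\ref{CON_c} is needed rather than just the one-variable recursion of Remark~\ref{Re:scaling}. In the BRS situation of Theorem~\ref{thm_BRS} the extra channel attaches to each length-$2^{t}$ outer block without changing its input capacity, so the recursion stays one-dimensional; quasi-uniform puncturing instead leaves the lower branch with two genuinely distinct channel species at every scale, so the bookkeeping does not collapse. Turning the per-layer inequality into a clean closed-form lower bound on $c'$ — in particular ruling out that the thin layer of upgraded $\mathrm{BEC}(2I-I^{2})$ channels over-accelerates polarization — is exactly the ``mixing never helps polarization'' content of the conjecture, and keeping the mixture exact through all $\log N$ stages is the delicate part of the proof.
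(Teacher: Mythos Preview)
Your overall plan matches the paper's: decompose into the upper branch of length $2^{-t}N$ over $\mathrm{BEC}(I^{2})$ and the lower branch of length $N$ over the mixture, invoke Conjecture~\ref{CON_c} iteratively on the lower branch, and finish with an expansion in $2^{-t}$. The gap is entirely in Step~2, and it is a real one.

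The paper does not track a two-species mixture through the stages. Instead it observes that the lower branch, viewed from the physical-channel side, consists of $2^{t}$ contiguous sub-blocks of length $2^{m-t}$ with input capacities $I,\dots,I,2I-I^{2}$ (this is what Fig.~\ref{fig_QUP_PROOF} is depicting). Iterating the conjecture $t$ times over these sub-blocks gives directly
\[
c'\,N^{\lambda}\;\ge\;2^{t(\lambda-1)}\bigl[(2^{t}-1)c(I)+c(2I-I^{2})\bigr](2^{m-t})^{\lambda},
\qquad\text{i.e.}\qquad
c'\;\ge\;(1-2^{-t})\,c(I)+2^{-t}\,c(2I-I^{2}).
\]
This is \emph{not} of the form $c(2I-I^{2})+\delta$ with $\delta>0$. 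Indeed, when $I<1/2$ one typically has $c(2I-I^{2})>c(I)$ (the function $c(\cdot)$ is unimodal, peaking near $1/2$, not monotone), and then the displayed bound is strictly \emph{below} $c(2I-I^{2})$. Your heuristic ``majority channels are worse than $\mathrm{BEC}(2I-I^{2})$, hence more un-polarized'' is therefore wrong in general; ``worse'' does not translate into a larger scaling constant. What actually matters is that the bound above is within $O(2^{-t})$ of $c(I)$, not of $c(2I-I^{2})$.

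With this correct bound, Step~3 becomes pure algebra: one needs
\[
2^{t(1-\lambda)}c(I^{2})+c(2I-I^{2})\;\ge\;\bigl(2^{t}(1+2^{-t})^{\lambda}-(2^{t}-1)\bigr)c(I),
\]
and the paper handles the right-hand side with $(1+2^{-t})^{\lambda}<(1-\lambda 2^{-t})^{-1}$ rather than a Taylor expansion; the two stated thresholds on $t$ come from making this last elementary inequality hold uniformly in $I$. Your Step~3 cannot be completed as written because it hinges on a $\delta$ of order $2^{-t\lambda}$ that the conjecture does not supply.
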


\begin{proof}
This observation in Remark \ref{Re:scaling} can be generalized to length-$N$ regular polar codes partitioned into $2^t$ sub-blocks over BECs with capacities $I_i$ for $1\leq i\leq 2^t$. The number of un-polarized bit-channels multiplying $2^{(1-\lambda) t}$ is larger than the average of the number of $2^t$ length-$2^{m-t}$ polar codes over BECs with capacities $I_i$. 

Consider the $2^t+1$ sub-blocks in $(1+2^{-t})N$ QUP polar codes. Except for the first sub-block, the capacities are $I$ for the middle $2^t-1$ sub-blocks and $2I-I^2$ for the last, as shown in Fig. \ref{fig_QUP_PROOF}. Then 
\begin{align*}
& c_{QUP}^{(t)}(I)(2^m+2^{m-t})^{\lambda} \\
& \geq \left[2^{t(\lambda-1)} \left((2^{t}-1) c(I)+c\left(2I-I^2\right)\right)+c\left(I^2\right)\right]\left(2^{m-t}\right)^{\lambda}.
\end{align*}
To prove $c_{QUP}^{(t)}(I)\geq c(I)$, we need to show
\begin{align*}
& \left[2^{t(\lambda-1)} \left((2^{t}-1) c(I)+c\left(2I-I^2\right)\right)+c\left(I^2\right)\right]\left(2^{m-t}\right)^{\lambda} \\
& \geq c(I)(2^m+2^{m-t})^{\lambda}.
\end{align*}
Multiplied both side by $2^{t-\lambda m}$ and combine like terms, we have
$$
2^{t(1-\lambda)} c(I^2) +  c(2I-I^2) \geq (2^t(1+2^{-t})^{\lambda} - (2^t-1))c(I).
$$
Since 
$$
(1+2^{-t})^{\lambda} < \frac{1}{1-\lambda 2^{-t}}
$$
we only need to prove
$$
2^{t\lambda} c(I^2) +  c(2I-I^2) \geq  (1+\frac{\lambda}{1-\lambda 2^{-t}})c(I),
$$
which is true when $t$ satisfies the condition.
\end{proof}

\begin{figure*}[!t]
\centering
    \includegraphics[width=0.95\textwidth,trim = 0 200 0 200, clip]{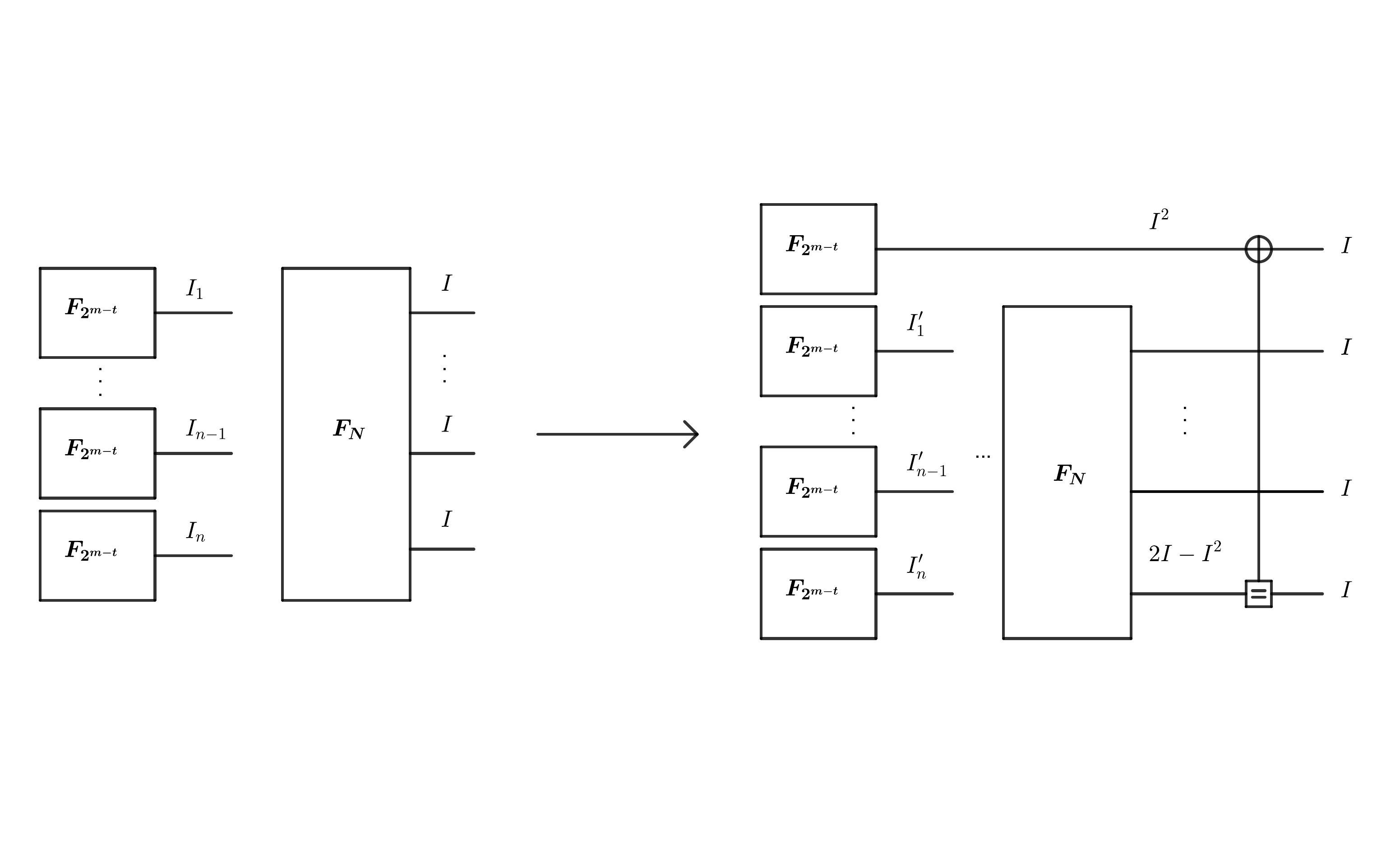}
    \caption{length-$N$ regular polar codes and length-$(1+2^{-t})N$ a QUP polar codes over BEC with capacity $I$.}
  \label{fig_QUP_PROOF}
\end{figure*}

\section{Discussion}

\subsection{Related Works}

\subsubsection{Large Kernels}

Large kernels are an effective approach for accelerating polarization. Initially proposed  in \cite{Korada2010}, polar codes with large kernels have seen significant advances in \cite{Presman2015, Lin2015, Bioglio2018, Trifonov2024}, achieving improved scaling and error exponents compared to regular polar codes \cite{Fazeli2021}. 

However, the decoding of large kernels is complicated, as existing algorithms in \cite{Trifonov2021Recursive, Trofimiuk2019, Cavatassi2019} exhibit exponential complexity growth with kernel size $l$. Also, the original simple f/g-operations in (\ref{eq:SCf}) and (\ref{eq:SCg}) cannot be reused in their implementations. Moreover, large-kernel polar codes natively support power-of-$l$ lengths only, which results in a performance drop under rate matching (similar to regular polar codes). Therefore, only expanding the length of the polarization kernel fails to solve the rate-matching challenge unless efficient decoding algorithms are developed. 

\subsubsection{Multi-Kernel Polar Codes}

The multi-kernel polar codes method, introduced in \cite{Bioglio2020}, utilizes a mixture of kernels with lengths 2, 3, and 5 to expand mother code lengths. For instance, a code length of 30 would employ one length-5 kernel, one length-3 kernel, and two length-2 kernels. However, this approach still necessitates rate-matching for code lengths not divisible by 2, 3, and 5. Furthermore, the inclusion of multiple kernel types introduces additional hardware complexity beyond the simple f/g-operations in (\ref{eq:SCf}) and (\ref{eq:SCg}).

\subsubsection{Concatenated Polar Codes}

The partially stitched polar codes are analogous to concatenated polar codes, where polar codes serve as inner codes and Reed-Solomon codes or BCH codes are used as outer codes \cite{Bakshi2010, Wang2014Concatenations}. However, unlike concatenated polar codes - which requires additional hardware modules to decode the outer codes - stitched codes can be decoded using standard f/g-operations only, without increasing the overall decoding complexity.

\subsection{Open Questions}

We identify several ongoing challenges in stitched polar codes. The primary challenge lies in the determination of optimal code structures. For small code lengths $N$, exhaustive search remains a feasible approach to identify optimal designs. For long codes, the search complexity increases quickly. 

Currently, we use a recursive approach with Algorithm \ref{alg:CON1} to build right-stitched polar codes. Additionally, Algorithm \ref{alg:CON2} employs the BRS pattern for length assignment and DE for rate assignment. Although BRS is a widely used general-purpose technique that often performs well, its optimality is not theoretically guaranteed. 

The challenge of optimal code length and rate assignment remains a significant area for exploration. This presents an open research opportunity: given specific total code length $N$ and number of information bits $K$, develop the methodology for allocating code length and information bits for stitched polar codes.

\section{Numerical Results}\label{Sec:Sim}

In this section, we present the BLER performance of the proposed stitched polar codes under the AWGN channel. These codes are constructed using Algorithm \ref{alg:CON1} and Algorithm \ref{alg:CON2}. 

As shown in Fig. \ref{fig_SC3},  \ref{fig_SC1} and \ref{fig_SC2}, we compare the fine-granularity performance of 64-partially stitched polar codes with QUP and BRS polar codes under SC decoding of different rates. The information sets are determined via GA. The results demonstrate that the stitched polar codes exhibit a much smoother curve and consistently outperform regular polar codes of all lengths, with gains up to 0.3 dB at no addtional complexity cost.

\begin{figure}[!t]
\centering
\includegraphics[width=0.5\textwidth]{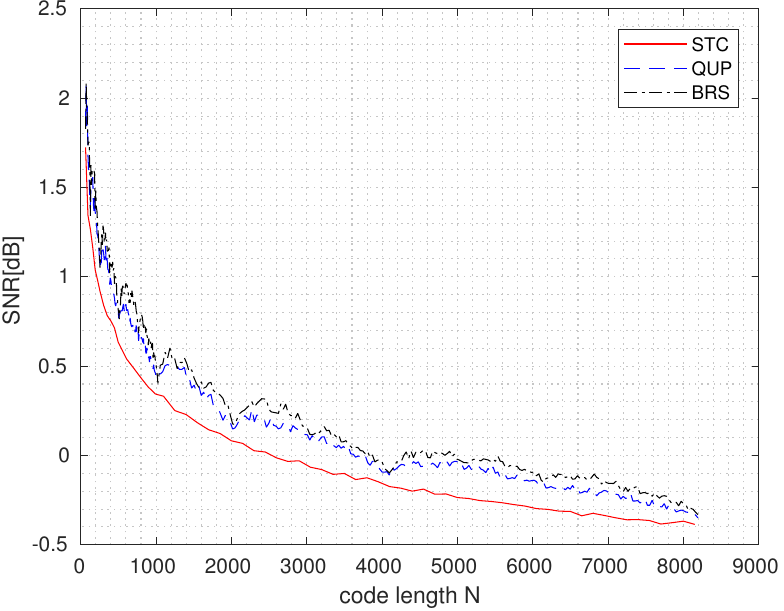}
\caption{Performance comparisons under SC decoding with $R=1/3$ and BLER $=0.01$.}
\label{fig_SC1}
\end{figure}

\begin{figure}[!t]
\centering
\includegraphics[width=0.5\textwidth]{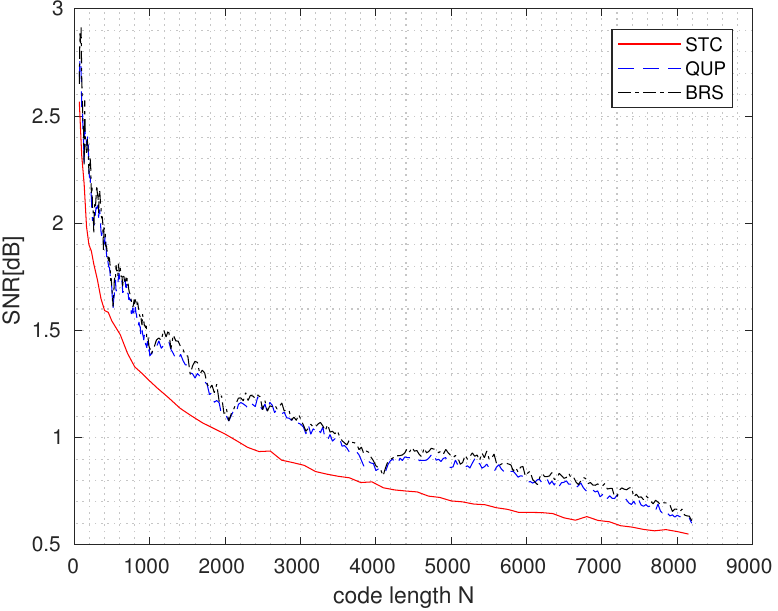}
\caption{Performance comparisons under SC decoding with $R=2/5$ and BLER $=0.01$.}
\label{fig_SC2}
\end{figure}

Fig. \ref{fig_SLC1}-\ref{fig_SCL3} illustrate the fine-granularity performance of QUP, BRS and 64-partially stitched polar codes under SCL decoding with list size 8 and CRC polynomial $D^{11} + D^{10} + D^9 + D^5 + 1$ of code length from 256 to 8192. The simulation results indicate that the stitched polar codes also exhibit universally robust and superior performance under SC-list decoding.

\begin{figure}[!t]
\centering
\includegraphics[width=0.5\textwidth]{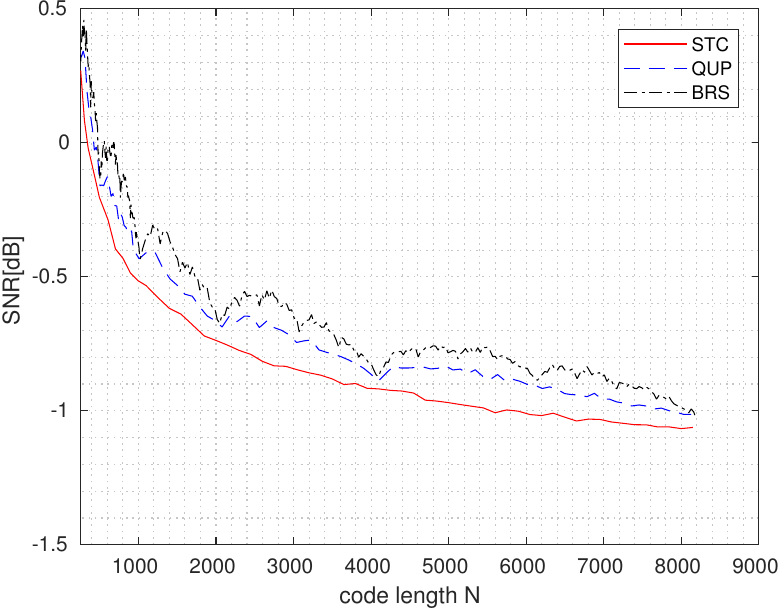}
\caption{Performance comparisons under SCL decoding with $R=1/3$ and BLER $=0.01$.}
\label{fig_SLC1}
\end{figure}

\begin{figure}[!t]
\centering
\includegraphics[width=0.5\textwidth]{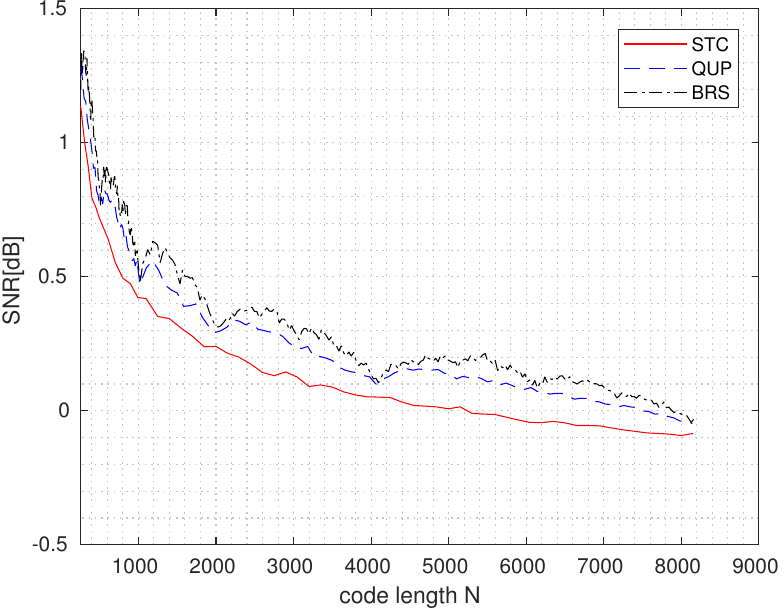}
\caption{Performance comparisons under SCL decoding with $R=2/5$ and BLER $=0.01$.}
\label{fig_SCL2}
\end{figure}

\begin{figure}[!t]
\centering
\includegraphics[width=0.5\textwidth]{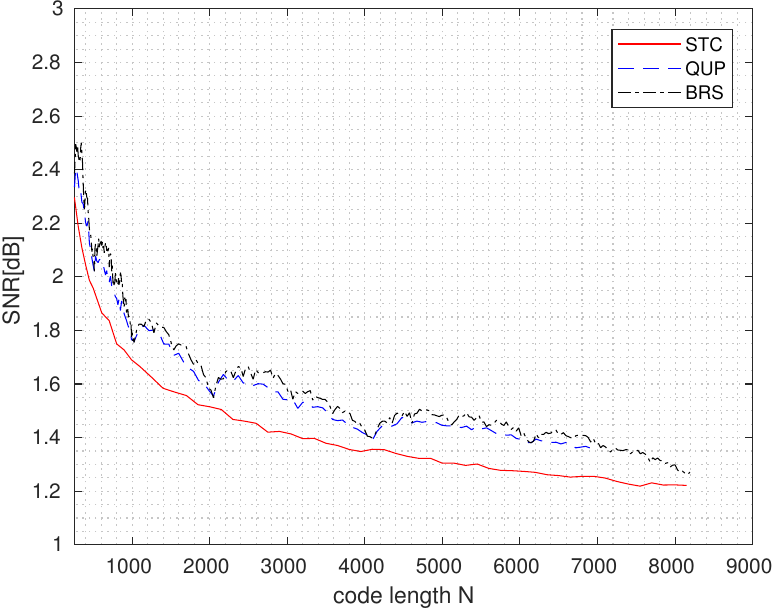}
\caption{Performance comparisons under SCL decoding with $R=1/2$ and BLER $=0.01$.}
\label{fig_SCL3}
\end{figure}

\section{Conclusion}

In conclusion, length-$N$ stitched polar codes accelerate polarization while preserving encoding and decoding complexity at $O(N\log N)$. This design addresses the performance loss due to incomplete polarization for non-power-of-two code lengths. The gain of stitched polar codes comes from rearranging the basic transformations to enhance the mutual information towards the information bits, in particular towards the least reliable one. The numerical results demonstrate substantial and consistent performance improvements across diverse scenarios.

\appendices
\section*{Appendix A}

The channel combining and splitting are not applicable for all coupling sequence. For instance, the code illustrated in Fig. \ref{fig_ce} cannot be decoded by f/g-operations. The reason is that after polarization $(1,3)$ and $(2,3)$, both the first and second channels carry the information from the third channel, so they are not independent. A strict inspection procedure, outlined in Algorithm \ref{alg:CS}, determines which coupling sequences are considered validated based on whether they pass this inspection. 

\begin{figure}[!t]
\centering
\includegraphics[width=0.4\textwidth]{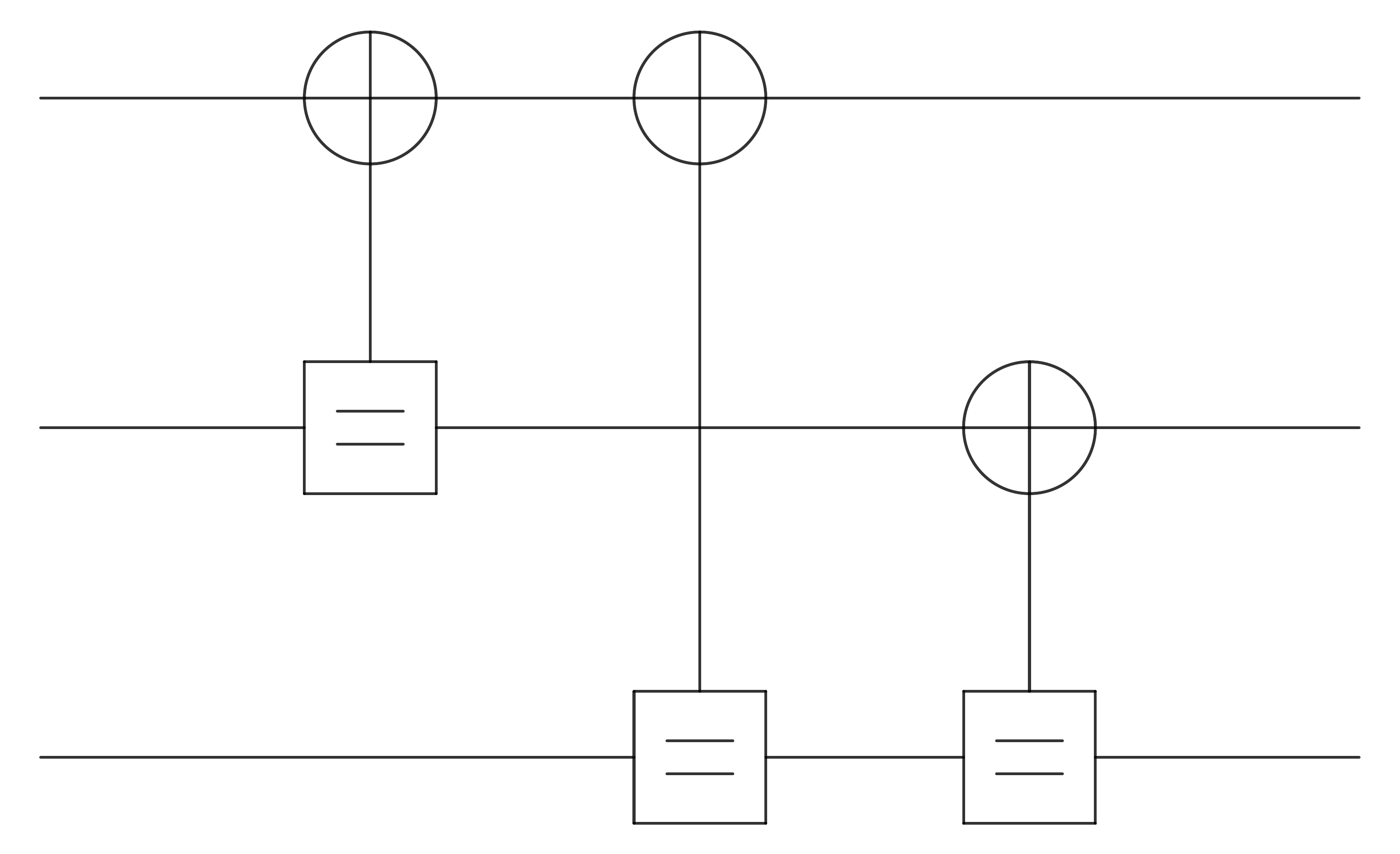}
\caption{A counter-example $(1,2), (1,3), (2,3)$ that cannot be decoded by f/g-operations.}
\label{fig_ce}
\end{figure}

\begin{figure}[!t]
\begin{algorithm}[H]
\caption{validate\_stitched\_polar\_codes($C$)}
\begin{algorithmic}[1]\label{alg:CS}

\renewcommand{\algorithmicrequire}{\textbf{Input:}}
\renewcommand{\algorithmicensure}{\textbf{Output:}}
\REQUIRE the length-$N$ coupling sequence $C = (a_1, b_1), \dots, (a_n, b_n)$.
\ENSURE \textbf{True} or \textbf{False}
\STATE Initialize $A_j^{(n)} \gets \{j\}$ and $\mathcal{V}_j^{(n)}\gets \varnothing$ for $1\leq j\leq N$.
\FOR{$i = n$ to $1$}
 \STATE $A^{(i-1)}_{a_i} \gets A^{(i)}_{a_j} \cup A^{(i)}_{b_i}$;
 \STATE $A^{(i-1)}_{b_i} \gets A^{(i-1)}_{a_i}$;
 \STATE $\mathcal{V}^{(i-1)}_{a_i} \gets \mathcal{V}^{(i)}_{a_i} \cup \mathcal{V}^{(i)}_{b_i}$; 
 \STATE $\mathcal{V}^{(i-1)}_{b_i} \gets \mathcal{V}^{(i)}_{a_i} \cup \mathcal{V}^{(i)}_{b_i} \cup \{u^{(i)}_{a_i}\}$;
 \IF{$\bm{x}_{A^{(i)}_{a_i}}$ and $\mathcal{V}_{A^{(i)}_{a_i}}$ are independent with $u^{(i)}_{b_i}$ conditioned on $u^{(i)}_{a_i}$, or $\bm{x}_{A^{(i)}_{b_i}}$ and $\mathcal{V}_{A^{(i)}_{b_i}}$ are independent with $u^{(i)}_{a_i}$, $\bm{x}_{A^{(i)}_{a_i}}$ and $\mathcal{V}_{A^{(i)}_{a_i}}$ conditioned on $u^{(i)}_{b_i}$}
   \STATE \textbf{Return False}
 \ENDIF
\ENDFOR
\STATE \textbf{Return True}
\end{algorithmic}
\end{algorithm}
\end{figure}

\begin{proof}[Proof of Proposition \ref{pro_CS}]
For simplicity, we denote all the channel transposition probabilities as $P$. The proposition is directly derived from  Eq. (\ref{eq5}) and (\ref{eq6}), predicated on the independence condition in Algorithm \ref{alg:CS}. 

\begin{figure*}
\begin{align}
\label{eq5}
P \left( \bm{y}_{A^{(i-1)}_{a_i}}, \mathcal{V}^{(i-1)}_{a_i} \mid u^{(i-1)}_{a_i} \right) &  = \sum_{u^{(i-1)}_{b_i}} P(u^{(i-1)}_{b_i} \mid u^{(i-1)}_{a_i})  P \left( \bm{y}_{A^{(i-1)}_{a_i}}, \mathcal{V}^{(i-1)}_{a_i} \mid u^{(i-1)}_{a_i}, u^{(i-1)}_{b_i} \right) \notag \\ 
&  = \sum_{u^{(i-1)}_{b_i}}  \frac{1}{2}  P \left( \bm{y}_{A^{(i)}_{a_i}\cup A^{(i)}_{b_i}}, \mathcal{V}^{(i)}_{a_i}\cup \mathcal{V}^{(i)}_{b_i} \mid u^{(i)}_{a_i}, u^{(i)}_{b_i} \right) \notag \\ 
&  = \sum_{u^{(i-1)}_{b_i}}  \frac{1}{2}   P \left( \bm{y}_{A^{(i)}_{a_i}}, \mathcal{V}^{(i)}_{a_i} \mid u^{(i)}_{a_i}, u^{(i)}_{b_i}\right) P \left( \bm{y}_{A^{(i)}_{b_i}}, \mathcal{V}^{(i)}_{b_i} \mid \bm{y}_{A^{(i)}_{a_i}}, \mathcal{V}^{(i)}_{a_i}, u^{(i)}_{a_i}, u^{(i)}_{b_i},  \right) \notag \\ 
&  = \sum_{u^{(i-1)}_{b_i}} \frac{1}{2}  P \left( \bm{y}_{A^{(i)}_{a_i}}, \mathcal{V}^{(i)}_{a_i} \mid u^{(i)}_{a_i}\right) P \left( \bm{y}_{A^{(i)}_{b_i}}, \mathcal{V}^{(i)}_{b_i} \mid u^{(i)}_{b_i} \right). 
\end{align}
\begin{align}
\label{eq6}
P \left( \bm{y}_{A^{(i-1)}_{b_i}}, \mathcal{V}^{(i-1)}_{b_i} \mid u^{(i-1)}_{b_i} \right) &  =  P(u^{(i-1)}_{a_i} \mid u^{(i-1)}_{b_i})  P \left( \bm{y}_{A^{(i-1)}_{b_i}}, \mathcal{V}^{(i)}_{a_i}\cup  \mathcal{V}^{(i)}_{b_i} \mid u^{(i-1)}_{a_i}, u^{(i-1)}_{b_i} \right) \notag \\  \
&  =   \frac{1}{2}  P \left( \bm{y}_{A^{(i)}_{a_i}\cup A^{(i)}_{b_i}}, \mathcal{V}^{(i)}_{a_i}\cup \mathcal{V}^{(i)}_{b_i} \mid u^{(i)}_{a_i}, u^{(i)}_{b_i} \right) \notag \\ 
&  = \frac{1}{2} P \left( \bm{y}_{A^{(i)}_{a_i}}, \mathcal{V}^{(i)}_{a_i} \mid u^{(i)}_{a_i} \right) P \left( \bm{y}_{A^{(i)}_{b_i}}, \mathcal{V}^{(i)}_{b_i} \mid u^{(i)}_{b_i} \right). 
\end{align}
\end{figure*}
\end{proof}

Consider the code in Example \ref{Ex:Pol_Gra}. Note that the channels are independent at every step of its polarization process. For instance, examine the process $(3,5)$. Here $y_1$ is determined by $x_1 = u_1\oplus u_2\oplus u_3\oplus u_4\oplus u_5$. This determination is independent of $u_5$ given $u_3\oplus u_4\oplus u_5$. Similarly, $y_3$ (which itself is determined by $u_3\oplus u_4\oplus u_5$) and $u_1\oplus u_2$ are independent of $u_5$ given $u_3\oplus u_4\oplus u_5$. Furthermore,  $y_2, y_5$ and $u_2$ are independent of $y_1, y_3, u_1\oplus u_2$ and $u_3\oplus u_4\oplus u_5$ given $u_5$.

\section*{Appendix B}

The table \ref{tab_N4-8} presents the generator matrices for optimal stitched polar codes for code length $4\leq N\leq 8$ and dimension $1\leq K\leq N-1$. In the table, the term `regular' denotes the optimal code is the mother-length regular polar code. Meanwhile, `shortening' and `puncturing' denote the optimal code is the quasi random shortened and punctured regular polar code, respectively. For other irregular cases, the corresponding codes, including their generator matrices $\bm{G}_{N,K}$ and information sets $\MI$, are listed below.

\begin{table}[htbp] 
\begin{center}
\setlength{\tabcolsep}{1.5mm}{
\begin{tabular}{c|c|c|c|c|c}
\hline
& $N=8$ & $N=7$	& $N=6$	& $N=5$ & $N=4$
 \\
\hline
$K=7$	& shortening &  & & & \\
\hline
$K=6$	& $C_{8,6}$ & shortening & & & \\
\hline
$K=5$	& $C_{8,5}$ & $C_{7,5}$ & shortening  & & \\
\hline
$K=4$	& regular &  puncturing & $C_{6,2}$  & shortening & \\
\hline
$K=3$	& $C_{8,3}$ & shortening & $C_{6,3}$  & $C_{5,3}$ & regular\\
\hline
$K=2$	& $C_{8,2}$ & $C_{7,2}$ & $C_{6,4}$  & $C_{5,2}$ & $C_{4,2}$\\
\hline
$K=1$	& regular & puncturing & puncturing  & puncturing & regular\\
\hline
\end{tabular}}

\caption{The table of optimal stitched polar codes.}
\label{tab_N4-8}
\end{center}
\end{table}

$C_{4,2} = (2,3), (1,3), (1,4), \MI = \{3,4\}$
$$
\bm{G}_{4,2}=\begin{pmatrix}
1 & 0 & 0 & 0 \\
0 & 1 & 0 & 0 \\
1 & 1 & 1 & 0 \\
1 & 0 & 0 & 1
\end{pmatrix}
$$

$C_{5,2} = (3,4), (1,2), (3,5), (1,3), (2,5), \MI = \{4,5\}$
$$
\bm{G}_{5,2}=\begin{pmatrix}
1 & 0 & 0 & 0 & 0 \\
1 & 1 & 0 & 0 & 0 \\
1 & 0 & 1 & 0 & 0 \\
1 & 0 & 1 & 1 & 0 \\
1 & 1 & 1 & 0 & 1
\end{pmatrix}
$$

$C_{5,3} = (2,3), (1,2), (4,5), (1,4), (2,5), \MI = \{3,4,5\}$
$$
\bm{G}_{5,3}=\begin{pmatrix}
1 & 0 & 0 & 0 & 0 \\
1 & 1 & 0 & 0 & 0 \\
1 & 1 & 1 & 0 & 0 \\
1 & 0 & 0 & 1 & 0 \\
1 & 1 & 0 & 1 & 1
\end{pmatrix}
$$

$C_{6,2} = (2,3), (4,5), (1,2), (3,5), (4,6), (1,4), (2,6), \MI = \{5,6\}$
$$
\bm{G}_{6,2}=\begin{pmatrix}
1 & 0 & 0 & 0 & 0 & 0 \\
1 & 1 & 0 & 0 & 0 & 0 \\
1 & 1 & 1 & 0 & 0 & 0 \\
1 & 0 & 0 & 1 & 0 & 0 \\
1 & 0 & 1 & 1 & 1 & 0 \\
1 & 1 & 0 & 1 & 0 & 1
\end{pmatrix}
$$

$C_{6,3}$ is equivalent to $C_{6,2}$ with $\MI = \{3,5,6\}$

$C_{6,4}$ is equivalent to $C_{6,2}$ with $\MI = \{3,4,5,6\}$

$C_{7,2} = (1,2), (3,4), (5,6), (1,3), (4,6), (5,7), (1,5), (2,4),$  \\
$(3,7), \MI = \{6,7\}$
$$
\bm{G}_{7,2}=\begin{pmatrix}
1 & 0 & 0 & 0 & 0 & 0 & 0 \\
1 & 1 & 0 & 0 & 0 & 0 & 0 \\
1 & 0 & 1 & 0 & 0 & 0 & 0 \\
1 & 1 & 1 & 1 & 0 & 0 & 0 \\
1 & 0 & 0 & 0 & 1 & 0 & 0 \\
1 & 1 & 0 & 1 & 1 & 1 & 0 \\
1 & 0 & 1 & 0 & 1 & 0 & 1
\end{pmatrix}
$$

$C_{7,5} = (2,3), (4,5), (6,7), (1,2), (3,5), (4,6), (1,4), (2,6),$  \\
$(3,7), \MI = \{3,4,5,6,7\}$
$$
\bm{G}_{7,5}=\begin{pmatrix}
1 & 0 & 0 & 0 & 0 & 0 & 0 \\
1 & 1 & 0 & 0 & 0 & 0 & 0 \\
1 & 1 & 1 & 0 & 0 & 0 & 0 \\
1 & 0 & 0 & 1 & 0 & 0 & 0 \\
1 & 0 & 1 & 1 & 1 & 0 & 0 \\
1 & 1 & 0 & 1 & 0 & 1 & 0 \\
1 & 1 & 1 & 1 & 0 & 1 & 1
\end{pmatrix}
$$

$C_{8,2} = (2,3), (4,5), (6,7), (2,4), (3,5), (6,8), (1,2), (4,8),$ \\
$(1,6), (3,7), \MI = \{7,8\}$
$$
\bm{G}_{8,2}=\begin{pmatrix}
1 & 0 & 0 & 0 & 0 & 0 & 0 & 0 \\
0 & 1 & 0 & 0 & 0 & 0 & 0 & 0 \\
0 & 1 & 1 & 0 & 0 & 0 & 0 & 0 \\
0 & 1 & 0 & 1 & 0 & 0 & 0 & 0 \\
0 & 1 & 0 & 1 & 1 & 0 & 0 & 0 \\
1 & 1 & 0 & 0 & 0 & 1 & 0 & 0 \\
1 & 1 & 1 & 0 & 1 & 1 & 1 & 0 \\
1 & 1 & 0 & 1 & 0 & 1 & 0 & 1 \\
\end{pmatrix}
$$

$C_{8,3} = (4,5), (1,2), (3,4), (6,7), (1,3), (2,4), (6,8), (1,6),$ \\
$(2,7), (3,8), \MI = \{5,7,8\}$
$$
\bm{G}_{8,3}=\begin{pmatrix}
1 & 0 & 0 & 0 & 0 & 0 & 0 & 0 \\
1 & 1 & 0 & 0 & 0 & 0 & 0 & 0 \\
1 & 0 & 1 & 0 & 0 & 0 & 0 & 0 \\
1 & 1 & 1 & 1 & 0 & 0 & 0 & 0 \\
1 & 1 & 1 & 1 & 1 & 0 & 0 & 0 \\
1 & 0 & 0 & 0 & 0 & 1 & 0 & 0 \\
1 & 1 & 0 & 0 & 0 & 1 & 1 & 0 \\
1 & 0 & 1 & 0 & 0 & 1 & 0 & 1 \\
\end{pmatrix}
$$

$C_{8,5} = (4,5), (2,3), (4,6), (7,8), (1,2), (3,6), (4,7), (1,4),$ \\
$(2,7), (3,8), \MI = \{3,5,6,7,8\}$
$$
\bm{G}_{8,5}=\begin{pmatrix}
1 & 0 & 0 & 0 & 0 & 0 & 0 & 0 \\
1 & 1 & 0 & 0 & 0 & 0 & 0 & 0 \\
1 & 1 & 1 & 0 & 0 & 0 & 0 & 0 \\
1 & 0 & 0 & 1 & 0 & 0 & 0 & 0 \\
1 & 0 & 0 & 1 & 1 & 0 & 0 & 0 \\
1 & 1 & 1 & 1 & 0 & 1 & 0 & 0 \\
1 & 1 & 0 & 1 & 0 & 0 & 1 & 0 \\
1 & 1 & 1 & 1 & 0 & 0 & 1 & 1 \\
\end{pmatrix}
$$

$C_{8,5} = (2,3), (5,6), (2,4), (5,7), (1,2), (4,7), (5,8), (1,5),$ \\
$(2,8), (3,6), \MI = \{3,4,5,6,7,8\}$
$$
\bm{G}_{8,6}=\begin{pmatrix}
1 & 0 & 0 & 0 & 0 & 0 & 0 & 0 \\
1 & 1 & 0 & 0 & 0 & 0 & 0 & 0 \\
1 & 1 & 1 & 0 & 0 & 0 & 0 & 0 \\
1 & 1 & 0 & 1 & 0 & 0 & 0 & 0 \\
1 & 0 & 0 & 0 & 1 & 0 & 0 & 0 \\
1 & 0 & 1 & 0 & 1 & 1 & 0 & 0 \\
1 & 0 & 0 & 1 & 1 & 0 & 1 & 0 \\
1 & 1 & 0 & 0 & 1 & 0 & 0 & 1 \\
\end{pmatrix}
$$

\section*{Acknowledgement}

The authors would like to express their sincere gratitude to Prof. Erdal Ar{\i}kan for his valuable suggestions and constructive feedback during the preparation of this manuscript. His insights were instrumental in improving the quality and clarity of this work.

\end{document}